\numberwithin{equation}{section}
\newcommand{\bdm}{\begin{displaymath}}
\newcommand{\edm}{\end{displaymath}}
\newcommand{\bdn}{\begin{eqnarray}}
\newcommand{\edn}{\end{eqnarray}}
\newcommand{\bay}{\begin{array}{c}}
\newcommand{\eay}{\end{array}}
\newcommand{\ben}{\begin{enumerate}}
\newcommand{\een}{\end{enumerate}}
\newcommand{\beq}{\begin{equation}}
\newcommand{\eeq}{\end{equation}}
\newcommand{\beqn}{\begin{eqnarray}}
\newcommand{\eeqn}{\end{eqnarray}}
\newcommand{\bml}[1]{\begin{multline} #1 \end{multline}}
\newcommand{\bmln}[1]{\begin{multline*} #1 \end{multline*}}
\newcommand{\lf}{\left}
\newcommand{\ri}{\right}
\newcommand{\bra}[1]{\lf\langle #1\ri|}
\newcommand{\ket}[1]{\lf|#1 \ri\rangle}
\newcommand{\rv}{\mathbf{r}}
\newcommand{\aae}{a_{\eps}}
\newcommand{\ate}{\tilde{a}_{\eps}}
\newcommand{\beps}{b_{k}}
\newcommand{\deps}{\delta_{\eps}}
\newcommand{\de}{d_{\eps}}
\newcommand{\diff}{\mathrm{d}}
\newcommand{\eps}{\varepsilon}
\newcommand{\dist}{\mathrm{dist}}
\newcommand{\ie}{I_{\eps}}
\newcommand{\bti}{\bar{t}_{\eps}}
\newcommand{\aldown}{\underline{\alpha}}
\newcommand{\alup}{\overline{\alpha}}
\newcommand{\kt}{\tilde{K}}
\newcommand{\ieps}{I_{\eps}}
\newcommand{\game}{\gamma_{\eps}}
\newcommand{\sigme}{\sigma_{\eps}}
\newcommand{\varre}{\varrho_{\eps}}
\newcommand{\gav}{\bm{\gamma}}
\newcommand{\nuv}{\bm{\nu}}
\newcommand{\ba}{\mathcal{B}}
\newcommand{\teps}{t_{\eps}}
\newcommand{\gle}{E^{\mathrm{GL}}}
\newcommand{\glm}{\Psi^{\mathrm{GL}}}
\newcommand{\gldom}{\mathscr{D}^{\mathrm{GL}}}
\newcommand{\aav}{\mathbf{A}}
\newcommand{\aavm}{\mathbf{A}^{\mathrm{GL}}}
\newcommand{\hex}{b}
\newcommand{\theo}{\Theta_0}
\newcommand{\glfk}{\mathcal{G}_{\kappa,\sigma}^{\mathrm{GL}}}
\newcommand{\glfe}{\mathcal{G}_{\eps}^{\mathrm{GL}}}
\newcommand{\glee}{E_{\eps}^{\mathrm{GL}}}
\newcommand{\annf}{\mathcal{G}_{\ann}}
\newcommand{\annft}{\tilde{\mathcal{G}}_{\ann}}
\newcommand{\curv}{k(s)}
\newcommand{\fal}{\fkal}
\newcommand{\hkal}{H_{k,\al}}
\newcommand{\phikal}{\phi_{k,\alpha}}
\newcommand{\pot}{V_{\eps,\alpha}}
\newcommand{\mue}{\mu_{\eps}(k,\alpha)}
\newcommand{\muosc}{\mu^{\mathrm{osc}}(\alpha)}
\newcommand{\hosc}{H^{\mathrm{osc}}}
\newcommand{\curl}{\mbox{curl}}
\newcommand{\ann}{\mathcal{A}_{\eps}}
\newcommand{\annt}{\tilde{\mathcal{A}}_{\eps}}
\newcommand{\annd}{\mathcal{A}_{\rm bl}}
\newcommand{\anntd}{\tilde{\mathcal{A}}_{\rm bl}}
\newcommand{\trial}{\Psi_{\mathrm{trial}}}
\newcommand{\atrial}{\mathbf{A}_{\mathrm{trial}}}
\newcommand{\half}{\mbox{$\frac{1}{2}$}}
\newcommand{\disp}{\displaystyle}
\newcommand{\tx}{\textstyle}
\newcommand{\Z}{\mathbb{Z}}
\newcommand{\R}{\mathbb{R}}
\newcommand{\C}{\mathbb{C}}
\newcommand{\E}{\mathcal{E}}
\newcommand{\OO}{\mathcal{O}}
\newcommand{\al}{\alpha}
\newcommand{\ep}{\varepsilon}
\newcommand{\Om}{\Omega}
\newcommand{\dd}{\partial}
\newcommand{\supp}{\mathrm{supp}}
\newcommand{\Hc}{H_{\mathrm{c}1}}
\newcommand{\Hcc}{H_{\mathrm{c}2}}
\newcommand{\Hccc}{H_{\mathrm{c}3}}
\newtheorem{teo}{Theorem}[section]
\newtheorem{lem}{Lemma}[section]
\newtheorem{pro}{Proposition}[section]
\newtheorem{cor}{Corollary}[section]
\newcounter{remark}[section]
\newenvironment{rem}{\refstepcounter{remark} \vspace{0,1cm} \noindent \textit{Remark \thesection.\theremark}\,}{\hfill \qed \vspace{0,2cm}}
\newcommand{\annbb}{\bar{\mathcal{A}}_{k,\eps}}
\newcommand{\annbk}{\bar{I}_{k,\eps}}
\newcommand{\keps}{\beta_{\eps}}
\newcommand{\btik}{\bar{t}_{k,\eps}}
\newcommand{\annpre}{\hat{\mathcal{A}}_\eps}
\newcommand{\annfpre}{\hat{\mathcal{G}}_\eps}
\newcommand{\fkal}{f_{k,\alpha}}
\newcommand{\fk}{f_{k}}
\newcommand{\fOal}{f_{0,\alpha}}
\newcommand{\fO}{f_0}
\newcommand{\fone}{\E^{\mathrm{1D}}}
\newcommand{\fonekal}{\E ^{\rm 1D}_{k,\alpha}}
\newcommand{\foneOal}{\E ^{\rm 1D}_{0,\alpha}}
\newcommand{\fonek}{\E ^{\rm 1D}_{k}}
\newcommand{\eone}{E^{\mathrm{1D}}}
\newcommand{\eonekal}{E ^{\rm 1D}_{k,\alpha}}
\newcommand{\eoneoal}{E ^{\rm 1D}_{0,\alpha}}
\newcommand{\eonek}{E ^{\rm 1D}_{k}}
\newcommand{\eoneo}{E ^{\rm 1D}_{0}}
\newcommand{\alk}{\alpha_k}
\newcommand{\alkt}{\tilde{\alpha}_k}
\newcommand{\alO}{\alpha_0}
\newcommand{\Fk}{F_k}
\newcommand{\Kk}{K_k}
\newcommand{\FO}{F_0}
\newcommand{\KO}{K_0}
\newcommand{\potkal}{V_{k,\alpha}}
\newcommand{\potOal}{V_{0,\alpha}}
\newcommand{\potk}{V_{k}}
\newcommand{\potO}{V_{0}}
\newcommand{\annfO}{\mathcal{F}_{\ann}}
\newcommand{\aaO}{a_{0}}
\newcommand{\es}{\mathbf{e}_s}
\newcommand{\jv}{\textbf{j} (v)}
\newcommand{\muv}{\mu (v)}
\newcommand{\Ehp}{\mathcal{E}_{\rm hp}}
\newcommand{\Ehpe}{E_{\rm hp}}
\begin{document}

\markboth{\scriptsize{\textsc{Correggi, Rougerie} -- Surface Superconductivity}}{\scriptsize{\textsc{Correggi, Rougerie} -- Surface Superconductivity}}

\title{On the Ginzburg-Landau Functional in the Surface Superconductivity Regime}

\author{M. Correggi${}^{a}$, N. Rougerie${}^{b}$
	\\
	\normalsize\it ${}^{a}$ Dipartimento di Matematica, ``Sapienza'' Universit\`{a} di Roma,	\\
	\normalsize\it Piazzale Aldo Moro 5, 00185, Rome, Italy.	\\
	\normalsize\it ${}^{b}$ Universit\'e de Grenoble 1 \& CNRS, LPMMC \\ 
	\normalsize\it Maison des Magist\`{e}res CNRS, BP166, 38042 Grenoble Cedex, France.}
	
\date{November 19, 2014}

\maketitle

\begin{abstract} 
We present new estimates on the two-dimensional Ginzburg-Landau energy of a type-II superconductor in an applied magnetic field varying between the second and third critical fields. In this regime, superconductivity is restricted to a thin layer along the boundary of the sample. We provide new energy lower bounds, proving that the Ginzburg-Landau energy is determined to leading order by the minimization of a simplified 1D functional in the direction perpendicular to the boundary. Estimates relating the density of the Ginzburg-Landau order parameter to that of the 1D problem follow. In the particular case of a disc sample, a refinement of our method leads to a pointwise estimate on the Ginzburg-Landau order parameter, thereby proving a strong form of uniformity of the surface superconductivity layer, related to a conjecture by Xing-Bin Pan.    
\end{abstract}

\tableofcontents

\section{Introduction}

The Ginzburg-Landau (GL) theory of superconductivity was first introduced in the '50s \cite{GL} as a phenomenological, macroscopic model. It was later justified by Gorkov \cite{Gor} as emerging from the microscopic Bardeen-Cooper-Schrieffer (BCS) theory \cite{BCS} (a rigorous mathematical derivation has been achieved only very recently \cite{FHSS}). It has been widely used in the physics literature (see the monographs \cite{Le,Ti}) and  has proved very successful, e.g., in predicting the response of superconducting materials to an external magnetic field. We recall for instance the  prediction by Abrikosov\footnote{Noteworthily Ginzburg, Landau and Abrikosov were all awarded a Nobel prize in physics for their findings about superconductivity.} of vortex lattices \cite{Ab} and the first discussion by Saint-James and de Gennes \cite{sJdG} of the surface superconductivity phenomenon that shall concern us here.

The phenomenological quantities associated with the superconductor are an order parameter $ \Psi $, such that $ |\Psi|^2 $ measures the relative density of superconducting Cooper pairs, and an induced magnetic field $ h  $, which must be distinguished from the external or applied magnetic field. For a superconductor confined to an infinite cylinder of smooth cross section $\Om \subset \R ^2$, the GL free energy is given in appropriate units by the functional (here we follow the convention of \cite{FH}, other choices are possible, see, e.g., \cite{SS})
\beq\label{eq:gl func}
	\glfk[\Psi,\aav] = \int_{\Om} \diff \rv \: \lf\{ \lf| \lf( \nabla + i \kappa\sigma  \aav \ri) \Psi \ri|^2 - \kappa^2 |\Psi|^2 + \half \kappa^2 |\Psi|^4 + \lf(\kappa \sigma \ri)^2 \lf| \curl \aav - 1 \ri|^2 \ri\},
\eeq
with $\Psi:\R ^2 \to \C$ the order parameter and $ \kappa \sigma \aav:\R ^2 \to \R ^2 $ the induced magnetic vector potential. The induced magnetic field is then $ h = \kappa \sigma  \: \curl \aav $, with $ \kappa >0  $ a physical parameter (penetration depth) characteristic of the material, and $\kappa \sigma  $ measures the intensity of the external magnetic field, that we assume to be constant throughout the sample. We shall be concerned with type-II superconductors, characterized by $\kappa > 1/\sqrt{2}$, and more precisely with the limit $\kappa \to \infty$ (extreme type-II).  


The state of the superconductor is obtained by minimizing the GL free energy with respect to the pair $ (\Psi, \aav) $. A crucial property to be taken into account in the minimization is the \emph{gauge invariance} of \eqref{eq:gl func}, namely the fact that the energy does not change when the replacements 
\begin{equation}\label{eq:gauge inv}
\Psi \to \Psi e^{-i \kappa \sigma \varphi}, \qquad \aav \to \aav + \nabla \varphi 
\end{equation}
are simultaneously performed, for some (say smooth) function $\varphi:\R ^2 \to \R$. One important consequence is that the only physically meaningful quantities are the gauge invariant ones, such as the induced field $h= \kappa \sigma \: \curl \aav $ or the density of Cooper pairs $|\Psi| ^2$.


The modulus of the order parameter $ |\Psi| $ varies between $ 0 $ and $ 1 $: the vanishing of $ \Psi $ in a certain region or point implies a loss of the superconductivity there due to the absence of Cooper pairs, whereas if $ |\Psi| = 1$ somewhere all the electrons are arranged in Cooper pairs and thus superconducting. The cases $ |\Psi| \equiv 1 $ and $ |\Psi| \equiv 0 $ everywhere in $ \Omega $ correspond to the so-called {\it perfectly superconducting} and {\it normal} states, known to be preferred for small and large applied field respectively. When $|\Psi|$ is not identically $ 0 $ nor $ 1 $, for intermediate values of the applied field, one says that the system is in a {\it mixed} state.

Several remarkable phenomena occur in between the two extreme regimes where the superconducting or normal state dominate. We describe them in order of increased applied field, that is, in the units of \eqref{eq:gl func}, in order of increasing $\kappa \sigma$  (recall that this description is valid in the regime~$\kappa\gg 1$): 
\begin{itemize}
\item The sample stays in the superconducting state until the \textit{first critical field} 
\[
\Hc = C_{\Omega} \log \kappa 
\]
is reached, where $C_{\Omega}$ only depends on the domain $\Om$. Then vortices, i.e., isolated zeros of $\Psi$ with non-trivial winding number (phase circulation), start to appear in the GL minimizer and their number increases with the increase of the external field $\kappa \sigma $. They arrange themselves on a triangular lattice, the famous \textit{Abrikosov lattice}, that survives until a second critical value of the field is reached. 
\item When the \textit{second critical field}
\beq
	\label{eq:critical field 2}
	\Hcc = \kappa ^2,	
\eeq
is crossed, superconductivity is lost in the bulk of the sample and survives only in a thin layer at the boundary $ \partial \Omega $. More precisely the GL order parameter is exponentially decaying far from the boundary and well separated from $0 $ only up to distances of order $(\kappa \sigma) ^{-1/2}$ from $ \partial \Omega $. This is the {\it surface superconductivity} regime: see \cite{SPSKC} for an early observation of this phenomenon and \cite{NSG} for more recent experimental data.
\item Surface superconductivity survives until a \textit{third critical field} 
\beq
	\label{eq:critical field 3}
	\Hccc  = \theo^{-1} \kappa ^2 + \OO(1),
\eeq
is reached, where $ 1/2 < \theo < 1 $ (more precisely $ \theo^{-1} \simeq 1.6946 $) is a sample-independent number (see Section \ref{1d So: sec}). Above $ \Hccc $ there is a total loss of superconductivity everywhere and the normal state becomes the global minimizer of the GL energy. Sample-dependent corrections to the estimate \eqref{eq:critical field 3} are also known.
\end{itemize}
The above is of course a vague description and a large mathematical literature, including the present contribution, has been devoted to backing the above heuristics with rigorous results (see \cite{BBH2,FH,SS,Sig} for reviews and references). On the experimental side, we mention the direct imaging of the Abrikosov lattices (see, e.g., \cite{Hetal}) and more recently of the surface superconductivity state \cite{NSG}.

\medskip

In this paper we investigate the behavior of the superconducting sample between the second and third critical fields, which in our units translates into the assumption 
\beq
	\label{eq:external field}
	\sigma = b \kappa
\eeq
for some fixed parameter $b$ satisfying the conditions
\beq
	\label{eq:b condition}
	1 < b < \theo^{-1}.
\eeq
From now on we introduce more convenient units to deal with the surface superconductivity phenomenon: we define the small parameter 
\beq
	\label{eq:eps}
	\eps = \frac{1}{\sqrt{\sigma \kappa}} = \frac{1}{b^{1/2}\kappa} \ll 1 
\eeq
and study the asymptotics $ \eps \to 0 $ of the minimization of the functional \eqref{eq:gl func}, which in the new units reads
\beq\label{eq:GL func eps}
	\glfe[\Psi,\aav] = \int_{\Om} \diff \rv \: \bigg\{ \bigg| \bigg( \nabla + i \frac{\aav}{\eps^2} \bigg) \Psi \bigg|^2 - \frac{1}{2 \hex \eps^2} \lf( 2|\Psi|^2 - |\Psi|^4 \ri) + \frac{\hex}{\eps^4} \lf| \curl \aav - 1 \ri|^2 \bigg\}.
\eeq
We shall denote
\beq
	\glee : = \min_{(\Psi, \aav) \in \gldom} \glfe[\Psi,\aav],
\eeq
with\footnote{Sometimes the minimization domain is defined in a slightly different way by picking divergence-free magnetic fields $ \aav $, i.e., such that $ \nabla \cdot \aav = 0 $ in $ \Omega $ and $ \nuv \cdot \aav = 0 $ on $ \partial \Omega $. The gauge invariance of the functional immediately implies that the two minimization domains are in fact equivalent.}
\beq
	\gldom : = \lf\{ (\Psi,\aav) \in H^1(\Om;\C) \times H^1(\Om;\R^2) \ri\},
\eeq
and denote by $ (\glm,\aavm) $ a minimizing pair (known to exist by standard methods \cite{FH,SS}). Minimizers of the GL functional solve the GL variational equations\footnote{$\Im [ \: \cdot \:]$ stands for the imaginary part.}
\beq
	\label{eq:GL eq}
	\begin{cases}
		- \lf( \nabla + i \frac{\aavm}{\eps^2} \ri)^2 \glm = \frac{1}{\hex \eps^2} \lf( \lf| \glm \ri|^2 - 1 \ri) \glm,	\\
		\curl \lf( \curl \aavm \ri) = \eps^2 \Im \lf[ {\glm}^* \lf( \nabla + i \frac{\aavm}{\eps^2} \ri) \glm \ri],	
	\end{cases}
	\qquad \mbox{in } \Om,	
\eeq
with the boundary conditions
\beq
	\label{eq:GL bc}
	\begin{cases}
		\nuv \cdot \lf( \nabla + i \frac{\aavm}{\eps^2} \ri)  \glm = 0,	\\
		\curl \aavm = 1,	
	\end{cases}
	\qquad \mbox{on } \partial \Om,
\eeq
where $\nuv$ is the outward unit normal to the boundary.

Note that in our new choice of units we have replaced the two parameters $ \kappa, \sigma $ with the pair $ \eps, b $, where only the first one is infinitesimal throughout the region between $ \Hcc $ and $ \Hccc $. The strict inequalities we impose on the fixed parameter $b$ in \eqref{eq:b condition} are in fact crucial in our analysis: we shall not deal with the limiting cases where $b\to \theo^{-1}$ or $b\to 1$ as a function of $\eps$. The former corresponds to applied fields extremely close to $\Hccc$, a regime already thoroughly covered in the literature (see \cite[Chapters 13 and 14]{FH}  and references therein). The latter corresponds to the transition from boundary to bulk behavior where our methods do not apply (see \cite{FK,Kac} for recent results).

\medskip

Several important facts about type-II superconductors in the surface superconductivity regime $\eps \to 0$, $1<b<\theo ^{-1}$, have been proved rigorously in recent years. We start by mentioning the energy asymptotics of \cite{Pan}: 
\beq
	\label{eq:FH energy asympt}
	\glee = \frac{|\dd \Om| E_{\hex}}{\eps} + o(\eps^{-1}),
\eeq
where $ E_{\hex} < 0 $ is some constant independent of $ \eps $, $|\dd \Om | $ the length of the boundary of $\Om$ and $ b $ satisfies the conditions \eqref{eq:b condition}. The definition of $E_{\hex}$ given in \cite{Pan} is somewhat complicated (see Section \ref{sec:sketch}) and further research has been devoted to obtaining a simplified expression. First, \cite{FH1} considered the case where $b\to \theo ^{-1}$ as a function of $\eps$ at a suitable rate and identified the constant $E_{\hex}$ in this case (see also \cite{LP}). In \cite{AH}, on the other hand, it is proved that if $ b $ is sufficiently close to $ \theo^{-1} $ (independently of $\eps$), the following holds:
\beq
	\label{eq:FH energy asympt 2}
	\glee = \frac{|\dd \Om| \eoneo}{\eps} + \OO(1),
\eeq
where $\eoneo$ is obtained by minimizing the functional
\begin{equation}\label{eq:intro 1D func}
\fone_{0,\alpha}[f] : = \int_0^{+\infty} \diff t \lf\{ \lf| \partial_t f \ri|^2 + (t + \alpha )^2 f^2 - \frac{1}{2b} \lf(2 f^2 - f^4 \ri) \ri\} 
\end{equation}
both with respect to the function $f$ and the real number $\alpha$. All the previous results are reproduced in \cite[Theorem 14.1.1]{FH}. It was later proved in \cite{FHP} that \eqref{eq:FH energy asympt 2} holds at least for $1.25 \leq b \leq  \theo ^{-1} $, which is an improvement over \cite{AH}, but does not cover the full surface superconductivity regime~\eqref{eq:b condition}.

The idea behind \eqref{eq:FH energy asympt 2} is that, up to a suitable choice of gauge, any minimizing order parameter $\glm$ for \eqref{eq:gl func} has the structure 
\begin{equation}\label{eq:GLm structure formal}
\glm(\rv) \approx \fO \left(\tx \frac{\eta}{\eps} \right)  \exp \left( - i \alO \tx \frac{\xi}{\eps}\right) \exp \left(  i \phi_\eps \left(\tx\frac{\xi}{\eps},\tx\frac{\eta}{\eps}\right) \right)
\end{equation}
where $(\fO,\alO)$ is a minimizing pair for \eqref{eq:intro 1D func} and $(\xi,\eta)=$(tangent coordinate, normal coordinate) are boundary coordinates defined in a tubular neighborhood of $\dd \Om$ (see Section \ref{sec:restriction annulus}). The additional phase factor $\exp \left(  i \phi_\eps (\xi,\eta) \right)$ is less relevant in that a suitable change of gauge will make it disappear to obtain an effective problem in terms of the remaining part of the wave function. See Remark~2.\ref{rem:winding} below for its precise definition. Of course the only physically relevant quantities are gauge invariant and thus the actual result, proven in \cite{AH}, can be stated as   
\begin{equation}\label{eq:GLm structure}
\left\Vert |\glm| ^2 - \left| \fO \left( \tx\frac{\eta}{\eps} \right) \right| ^2 \right\Vert_{L ^2 (\Om)} \ll \left\Vert \fO ^2 \left( \tx\frac{\eta}{\eps} \right)  \right\Vert_{L ^2 (\Om)},
\end{equation}
which says that the density of superconducting electrons in the sample is essentially confined to the boundary on a length scale $  \eps = 1/\sqrt{\kappa \sigma}$. Moreover it can be approximated by the function $\left| \fO \left(\tx\frac{\eta}{\eps}\right) \right| ^2 $ which only depends on the normal coordinate. The result of \cite{AH} is not stated exactly as above, but following their methods and those of \cite{FHP}, it is possible to see that \eqref{eq:GLm structure} holds for $1.25 \leq b  < \theo^{-1} $.

\medskip

Two main questions are left open in the aforementioned contributions:
\begin{enumerate}
\item Does \eqref{eq:FH energy asympt 2} hold in the full surface superconductivity regime \eqref{eq:b condition}?
\item Can \eqref{eq:GLm structure} be strengthened to a better norm, e.g., is the GL matter density $|\glm| ^2$ close to the simplified 1D density $\left| \fO \left(\tx\frac{\eta}{\eps}\right) \right| ^2$ in $L ^{\infty}$ norm, at least close to the boundary of the sample?
\end{enumerate}
These are respectively advertised as Open Problems number 2 and 4 in the list of \cite[Page 267]{FH}. Both questions are important from a physical point of view. An affirmative answer to Question~1 would rigorously confirm that surface superconductivity is essentially a 1D phenomenon in the direction normal to the boundary. Question 2 is a strengthened version of a conjecture due to X.B. Pan \cite[Conjecture 1]{Pan}: it asks whether the surface superconducting layer is uniform in some sense, in particular by ruling out normal inclusions (vanishing of the order parameter) like isolated vortices close to the boundary of the sample. 

\medskip

The goal of the present paper is to provide a method for bounding below the GL energy that allows us to answer Question 1 in the affirmative. In the particular case where the domain $\Om$ is a disc, a refined version of the method and some specific technical efforts also give a positive answer to Question 2. 

Our new estimates follow from a quite different approach than that in \cite{AH,FHP}. They are inspired by our earlier works on the related Gross-Pitaevskii (GP) theory of trapped rotating superfluids \cite{CPRY2,CPRY3} (see \cite{CPRY4,CPRY5} for short presentations and \cite{R1} for an earlier approach). These were concerned with the occurrence or absence of vortices in the bulk of rotating Bose-Einstein condensates, in particular in the giant vortex regime where the wave-function of the condensate is concentrated in a thin annulus. 

The main physical insight behind our new results on the GL functional is roughly speaking that the only possible way for \eqref{eq:FH energy asympt 2} to fail would be that vortices are nucleated inside the surface superconductivity layer. Although this is a physically quite unreasonable possibility in view of the accumulated knowledge on type-II superconductors, it is not ruled out by any previous mathematical result. The method we adapt from our earlier works confirms that nucleating vortices is not favorable, which allows us to prove \eqref{eq:FH energy asympt 2} in the whole surface superconductivity regime. 

A sketch of the method will be given in Subsection \ref{sec:sketch}, after we state our main results rigorously: first, our affirmative answer to Question 1 for general domains in Subsection \ref{sec:main generic}, then our investigation of Question 2 for disc samples in Subsection \ref{sec:main disc}. The rest of the paper is devoted to the proofs of our main results.

\medskip

\noindent\textbf{Notation:} In the whole paper, $C$ will denote a generic positive constant whose value may change from line to line. The notation $\OO(\delta)$ and $o(\delta)$ will as usual denote quantities whose absolute value is respectively bounded by $C\delta$ or a function $f(\delta)\to 0$ in the relevant limit at hand (most often, $\ep \to 0$). We will write $\OO(\delta ^{\infty})$ for a quantity which is a $\OO(\delta ^k)$ for any $k\in \R^+$ when $\delta \to 0$, e.g., an exponentially small quantity.  We will use $a\sim b$, $a\ll b$ and $a\propto b$ when $a/b\to 1$, $a/b \to 0 $ and $a / b \to C \neq 0,1$ respectively. We sometimes use $f\approx g$ for two functions $f$ and $g$ in heuristic statements, when we do not wish to be precise about the norm in which $f$ and $g$ are close  nor about the errors involved.

\medskip

\noindent\textbf{Acknowledgments:} Part of this research has been carried out at the \textit{Erwin Schr\"odinger Institute} in Vienna and at the \textit{Institut Henri Poincar\'e} in Paris. We acknowledge stimulating discussions with S\o{}ren Fournais. M.C. was supported by the European Research Council under the European Community Seventh Framework Program (FP7/2007-2013 Grant Agreement CoMBos No. 239694). This research also received support from the ANR (Project Mathostaq ANR-13-JS01-0005-01).

\section{Main Results}\label{sec:main results}

\subsection{General domains: leading order of the energy}\label{sec:main generic}

Here we present the results we may prove for any smooth domain $\Om$. Minimizing the 1D functional \eqref{eq:intro 1D func} with respect to $f$ we obtain an energy $\eoneoal$ and a minimizer $\fOal$. Then, minimizing $\eoneoal$ with respect to $\alpha$ gives a minimal energy $\eoneo$  and a minimizer $\alO$. We denote $\fO : = f_{0,\alO}$ for short. The intuition behind \eqref{eq:FH energy asympt 2} is that $\glm$ behaves to leading order as \eqref{eq:GLm structure formal} in a suitable gauge. We use again the notation $(\xi,\eta)$ for the tangential and normal components of boundary coordinates  (see \cite[Appendix F]{FH}). Our main result for general domains is the following answer to Question 1 and extension of \eqref{eq:GLm structure}:

\begin{teo}[\textbf{Leading order of the energy and density for general domains}]\label{theo:generic}\mbox{}\\
Let $\Om\subset \R ^2$ be any smooth simply connected domain. For any fixed $1<b<\theo ^{-1}$, in the limit $ \eps \to 0$, it holds
\begin{equation}\label{eq:main energy generic}
\glee = \frac{|\dd \Om| \eoneo}{\eps} + \OO(1),
\end{equation} 
\begin{equation}\label{eq:main density generic}
\left\Vert |\glm| ^2 -  \fO ^2 \left(\tx\frac{\eta}{\eps} \right)  \right\Vert_{L ^2 (\Om)} \leq C \eps \ll \left\Vert \fO ^2 \left(\tx\frac{\eta}{\eps}\right)  \right\Vert_{L ^2 (\Om)}. 
\end{equation}
\end{teo}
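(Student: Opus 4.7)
The plan is to prove matching upper and lower bounds on $\glee$ and then extract the density estimate from the saturation of these bounds. For the upper bound, I would build a trial state inspired by \eqref{eq:GLm structure formal}: in boundary coordinates $(\xi,\eta)$, set
\[
\trial(\rv) \;=\; \chi(\eta)\, \fO\!\left(\tx\frac{\eta}{\eps}\right) \exp\!\left(-\,i\alO\tx\frac{\xi}{\eps}\right)\exp\!\bigl(i\phi_\eps(\xi,\eta)\bigr),
\]
where $\chi$ is a smooth cutoff supported in a tubular neighborhood of $\partial\Om$ of width $\eps|\log\eps|$, and $\phi_\eps$ is a phase chosen so that $\trial$ is single-valued (this forces $\alpha_\eps$ to be a perturbation of $\alO$ compatible with $|\partial\Om|\alO/\eps \in 2\pi\Z$, at a cost $\OO(1)$ in energy). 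Choosing $\atrial$ such that $\curl \atrial = 1$ and computing in the $(\xi,\eta)$ coordinates (whose Jacobian contributes a curvature correction of order $\eps$), the energy separates as an integral over the tangential variable, giving exactly $\frac{|\partial\Om|\eone_0}{\eps}$ plus curvature and cutoff errors of order $\OO(1)$.

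The lower bound is the core of the argument and where the method inspired by \cite{CPRY2,CPRY3} enters. First, Agmon-type estimates (standard in this regime) confine $|\glm|^2$ exponentially away from $\partial\Om$, so up to errors $\OO(\eps^\infty)$ one may restrict $\glfe[\glm,\aavm]$ to the annular region $\ann$ of width $\propto\eps|\log\eps|$ near $\partial\Om$. Second, in a suitable gauge one writes $\curl\aavm=1+\OO(\eps^2)$ on $\ann$, and introduces boundary coordinates. Third, the key algebraic step is the substitution $\glm(\rv) = \fO(\eta/\eps)\, \exp(-i\alO\xi/\eps)\, u(\rv)$ in the restricted functional. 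Using the Euler--Lagrange equation satisfied by the 1D minimizer $\fO$, an integration by parts produces the cancellation
\[
\int_{\ann}\!\Bigl\{ |(\nabla+i\aavm/\eps^2)\glm|^2 - \tx\frac{1}{2b\eps^2}(2|\glm|^2-|\glm|^4)\Bigr\}\,\diff\rv
\;=\; \frac{|\partial\Om|\eoneo}{\eps}\;+\;\int_{\ann}\fO^2(\eta/\eps)\,\LL_\eps[u]\,\diff\rv\,,
\]
up to $\OO(1)$ curvature errors, where $\LL_\eps[u]$ is a non-negative reduced density (essentially $|(\nabla+i\tilde\aav)u|^2$ plus a quartic term $\fO^2(1-|u|^2)^2/(2b\eps^2)$). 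Dropping the non-negative remainder yields the matching lower bound.

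The main obstacle is justifying that the reduced density $\LL_\eps[u]$ is indeed non-negative to the required precision despite the presence of the magnetic vector potential and the fact that $u$ need not be small --- in other words, ruling out that phase oscillations or vortices of $u$ inside the surface layer yield negative contributions large enough to beat the $\OO(1)$ accuracy. This is where the scheme of \cite{CPRY2,CPRY3} is crucial: one exploits a clever integration by parts (writing the magnetic gradient term in $\LL_\eps[u]$ as a perfect square plus a boundary/current term, and using the transverse decay of $\fO$) to absorb the oscillatory contributions into manifestly positive quantities, at the expense of errors controlled by the curvature of $\partial\Om$, the cutoff width, and the deviation $\curl\aavm-1$ (itself controllable via the second GL equation in \eqref{eq:GL eq}).

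Finally, the density estimate \eqref{eq:main density generic} is a by-product of the lower bound. The matching of upper and lower bounds to precision $\OO(1)$ forces the non-negative remainder $\int_{\ann}\fO^2(\eta/\eps)\,\LL_\eps[u]\,\diff\rv$ to be $\OO(1)$; the quartic piece $\frac{1}{2b\eps^2}\int_{\ann}\fO^2(\eta/\eps)\bigl(1-|u|^2\bigr)^2\,\diff\rv = \OO(1)$ then gives, by Cauchy--Schwarz and the identity $|\glm|^2 = \fO^2(\eta/\eps)|u|^2$,
\[
\bigl\Vert\, |\glm|^2 - \fO^2(\eta/\eps)\,\bigr\Vert_{L^2(\Om)}^2 \;\leq\; C\eps^2\,,
\]
up to the exponentially small tail outside $\ann$. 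Since $\Vert \fO^2(\eta/\eps)\Vert_{L^2(\Om)}\sim C\eps^{1/2}$, this is indeed the claimed strict improvement, completing the argument.
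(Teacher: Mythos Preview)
Your overall architecture is right --- trial state for the upper bound, Agmon restriction plus energy splitting for the lower bound, density estimate from saturation --- and matches the paper's. But the heart of the lower bound is not actually handled in your proposal, and your description of the mechanism is slightly off.

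After the substitution $\psi = \fO(t)e^{-i\alO s}u$ and the integration by parts using the variational equation for $\fO$, the reduced functional is
\[
\E_0[u] \;=\; \int_{\ann} \fO^2\Bigl\{\,|\nabla u|^2 \;-\; 2(t+\alO)\,(iu,\partial_s u) \;+\; \tfrac{1}{2b}\,\fO^2\,(1-|u|^2)^2\Bigr\}\,\diff s\,\diff t.
\]
This is \emph{not} of the form $\int \fO^2\,|(\nabla+i\tilde\aav)u|^2 + \text{quartic}$: the potential term $(t+\alO)^2|u|^2$ has already been absorbed into $\eoneo$, so there is no perfect square to complete. The cross term $-2(t+\alO)\fO^2(iu,\partial_s u)$ can genuinely be negative, and the question is whether it can beat the kinetic term $\fO^2|\nabla u|^2$.

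The paper's resolution is specific and is its main new technical ingredient. One introduces the potential function
\[
\FO(t) \;=\; 2\int_0^t (\eta+\alO)\,\fO^2(\eta)\,\diff\eta,
\]
so that the cross term becomes $-\int \FO'\,(iu,\partial_s u)$. Integrating by parts in $t$ (boundary terms vanish because $\FO(0)=0$ and, crucially, $\FO(+\infty)=0$ by the Feynman--Hellmann optimality of $\alO$) and using the pointwise bound $|\curl\,\jv| \le |\nabla u|^2$, one arrives at
\[
\E_0[u] \;\ge\; \int_{\ann} \bigl(\fO^2(t)+\FO(t)\bigr)\,|\nabla u|^2 \;+\; \tfrac{1}{2b}\int_{\ann}\fO^4(1-|u|^2)^2.
\]
The whole argument then rests on proving that the \emph{cost function} $\KO(t):=\fO^2(t)+\FO(t)$ is non-negative on $\R^+$. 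This is Proposition~\ref{pro:K0 positive} in the paper, and it is not a soft fact: one derives an alternative expression for $\FO$ by integrating the variational equation for $\fO$, and then shows that at any interior critical point of $\KO$ one has $\KO(t_0) = (1-\tfrac{1}{b})\fO^2(t_0) + \tfrac{1}{2b}\fO^4(t_0)\ge 0$, using $b\ge 1$. Your phrase ``clever integration by parts\ldots perfect square\ldots transverse decay of $\fO$'' does not capture this; in particular nothing like a perfect-square completion occurs, and the positivity is a genuinely nonlinear statement about the 1D minimizer that fails outside the range $1\le b<\theo^{-1}$.

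A minor point: the quartic term in the reduced functional carries $\fO^4$, not $\fO^2$, and no $\eps^{-2}$ after rescaling; this is exactly what is needed since $|\glm|^2-\fO^2 = \fO^2(|u|^2-1)$ gives $\int(\,|\glm|^2-\fO^2)^2 = \int \fO^4(1-|u|^2)^2$.
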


The estimate \eqref{eq:main density generic} confirms that the density of Cooper pairs is roughly constant in the direction parallel to the boundary of the sample, while the profile in the perpendicular direction is to leading order independent of the sample $\Om$. Note that we are making a slight abuse of notation in this estimate since the boundary coordinates are not defined in the whole sample. This is harmless since both $|\glm| ^2$ and $\fO ^2 \left(\tx\frac{\eta}{\eps} \right)$ are exponentially decaying at distances larger than $C \eps$ from the boundary. This remark also indicates that 
$$\left\Vert \fO ^2 \left(\tx\frac{\eta}{\eps}\right)  \right\Vert_{L ^2 (\Om)} \propto \eps ^{1/2}, $$
because the area of the boundary layer is roughly proportional to $\eps$. This vindicates the second inequality in \eqref{eq:main density generic}.

\begin{rem}(Limiting cases)\label{rem:lim case}\mbox{}\\
As already mentioned we choose not to address the limiting cases $b\to 1$ or $b\to \theo^{-1}$ for simplicity. In the former case, a bulk term should appear in addition to the boundary term \eqref{eq:main energy generic}, as demonstrated in \cite{FK}. Probably our method may give a simplified expression of the boundary term obtained in this reference (see Subsection \ref{sec:sketch}). When $b\to \theo^{-1}$, the analysis is complicated by the fact that $\fO \to 0$, so that the leading order of $\glee$ is no longer of order $\eps ^{-1}$ \cite[Theorem 14.1.1]{FH}.   
\end{rem}

\subsection{The case of the disc: refined energy estimates and Pan's conjecture}\label{sec:main disc}

Although the proof of Theorem \ref{theo:generic} suggests that normal inclusions, i.e., regions where the superconductivity is lost (isolated vortices for example), are not favorable in the surface superconductivity layer, the precision of the energy estimate is not sufficient to rigorously conclude that none occur. In fact the error term in \eqref{eq:main energy generic} is still comparable with the energetic cost for $|\glm|$ to vanish close to the boundary in a ball of radius $\eps$, as follows from the optimal estimate $|\nabla |\glm|| \propto \eps ^{-1}$. In order to obtain an equivalent of \eqref{eq:GLm structure} in $L ^{\infty}$ norm, it is thus necessary to expand the energy further.

An important difficulty is then that one of the terms entering the $\OO(1)$ remainder in \eqref{eq:main energy generic} involves the curvature of the domain $\Om$. It is clearly (see below) not smaller than a constant, and must thus be evaluated to go beyond \eqref{eq:main energy generic}. One case where we are able to do this is when the curvature is a constant $k$, i.e., the domain is a disc. In this case one may evaluate the $\OO(1)$ remainder in \eqref{eq:main energy generic} by using the following refined functional: 
\begin{equation}
\label{eq:intro 1D func disc}
\fone_{k,\alpha}[f] : = \int_0^{c_0|\log\eps|} \diff t \: (1-\eps k t )\lf\{ \lf| \partial_t f \ri|^2 + \frac{(t + \alpha - \frac12 \eps k t ^2 )^2}{(1-\eps k t ) ^2} f^2 - \frac{1}{2b} \lf(2 f^2 - f^4 \ri) \ri\},
\end{equation}
where $ c_0 $ is a constant that has to be chosen large enough (see Subsection \ref{sec:restriction annulus} for its role in the proof). Note that \eqref{eq:intro 1D func} is simply the above functional with $k=0$, $\eps=0$, that is with curvature terms neglected.
As before, minimizing with respect to both $f$ and $\alpha$, we obtain an energy $\eonek$, a function $\fk$ and an optimal $\alk\in \R$. Contrarily to the corresponding quantities in the $k=0$ case, these do depend on~$\eps$. The following theorem contains the refined estimates in which we use \eqref{eq:intro 1D func disc}. We denote by $(r,\vartheta)$ the polar coordinates. 

\begin{teo}[\textbf{Refined energy and density estimates in the disc case}]
\label{theo:disc energy}
\mbox{}	\\
Let $\Om$ be a disc of radius $R = k ^{-1}$. For any $ 1 < \hex < \theo^{-1} $ there exists a $c_0 >0$ such that, as $ \eps \to 0$, it holds
\beq
\label{eq:main energy disc}
\glee = \frac{2\pi \eonek}{\eps} + \OO(\eps |\log\eps|),
\eeq
\begin{equation}\label{eq:main density disc}
\left\Vert |\glm| ^2 -  \fk ^2 \left(\tx\frac{R-r}{\eps}\right)  \right\Vert_{L ^2 (\Om)} = \OO(\eps^{3/2} |\log \eps| ^{1/2}). 
\end{equation}
\end{teo}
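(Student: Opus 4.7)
My plan is to prove the energy estimate \eqref{eq:main energy disc} by matching upper and lower bounds that are both sharper than those underlying Theorem~\ref{theo:generic}, and then to deduce the density estimate \eqref{eq:main density disc} from the energy matching via a coercivity argument on the 1D profile. The refinement is possible in the disc case precisely because the curvature $k=1/R$ is constant, so that the $\OO(1)$ remainder hidden inside \eqref{eq:main energy generic} can be \emph{identified} with the curvature corrections already present in the refined functional $\fone_{k,\alpha}$ of \eqref{eq:intro 1D func disc}, rather than merely estimated.

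\textbf{Upper bound.} First I would construct a trial pair $(\trial,\atrial)$ of the form
\[
\trial(r,\vartheta) = c_{\eps}\, \fk\!\lf(\tx\frac{R-r}{\eps}\ri) e^{-i n_{\eps}\vartheta}, \qquad \atrial(\rv) = \tfrac{r}{2}\,\hat{\vartheta},
\]
where $n_{\eps}\in\Z$ is the integer closest to $\alk R/\eps$ (forced by single-valuedness) and $c_{\eps}$ is an $L^{2}$-normalization. Since $\curl \atrial \equiv 1$, the magnetic cost term vanishes identically. A direct computation in polar coordinates, using the exact Jacobian $r\,\diff r\,\diff\vartheta = R(1-\eps k t)\,\eps\,\diff t\,\diff\vartheta$ with $t=(R-r)/\eps$, reproduces the functional \eqref{eq:intro 1D func disc} integrated along $\dd\Om$ at $\alpha = n_{\eps}\eps/R$. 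The mismatch $|n_{\eps}\eps/R - \alk|\leq \eps/R$ costs $\OO(\eps)$ after a second-order expansion around $\alk$ (which is a minimum in $\alpha$), and truncating $\fk$ at $t=c_0|\log\eps|$ costs $\OO(\eps^{\infty})$ by exponential decay, giving the claimed upper bound.

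\textbf{Lower bound.} The lower bound is the core work. My plan is: first, apply Agmon-type exponential decay estimates to localize the problem to the thin annulus $\annt = \{R-c_0\eps|\log\eps|\leq r \leq R\}$, outside of which $|\glm|$ and $|\aavm - \tfrac{r}{2}\hat{\vartheta}|$ are $\OO(\eps^{\infty})$; second, pass to boundary coordinates and partition the angular direction into cells of tunable size $\ell$ as in our earlier work \cite{CPRY2,CPRY3}; third, in each cell perform a gauge transformation that replaces the tangential part of $\aavm$ by its average, and a unitary transformation $\Psi\mapsto \Psi e^{-i\alpha_i s/\eps}$ that removes the resulting rapid phase; finally, bound the residual tangential derivative term by Cauchy--Schwarz against the potential term. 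Each cell then contributes at least $\ell\,\eonek/\eps$, and an optimization of $\ell$ against the accumulated gauge/cell errors yields the total remainder $\OO(\eps|\log\eps|)$.

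\textbf{Main obstacle and density estimate.} The hard part will be to sharpen the control on $\aavm$ beyond what the second GL equation in \eqref{eq:GL eq} affords a priori (which is only $\|\curl\aavm - 1\|_{L^2} = \OO(\eps^{2})$). To close the cell-wise reduction I would need the pointwise bound $|\aavm - \tfrac{r}{2}\hat{\vartheta}| = \OO(\eps^{2}|\log\eps|)$ in a suitable Coulomb-type gauge, which I intend to derive by elliptic regularity applied to $\curl\aavm$, combined with the standard a priori bounds $|\glm|\leq 1$ and $|\nabla|\glm||\leq C\eps^{-1}$. Once the energy matching is in place, the density estimate \eqref{eq:main density disc} should follow from the strict convexity of $\fone_{k,\alpha}$ near its minimum $\fk$: writing $|\glm|^2 - \fk^{2} = \eta_{\eps}$ in boundary coordinates and Taylor-expanding the 1D functional to second order, the energy excess of size $\eps|\log\eps|$ is converted into a weighted $L^{2}$ bound on $\eta_{\eps}$ that, after undoing the rescaling $t=(R-r)/\eps$ and restoring the Jacobian, yields the stated $\OO(\eps^{3/2}|\log\eps|^{1/2})$ estimate.
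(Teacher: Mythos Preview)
Your upper bound is essentially the paper's: the same trial state with integer phase $\lfloor \alk/\eps\rfloor$ and a vector potential with unit curl, and the quadratic cost of the phase mismatch via the optimality of $\alk$. Fine.

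The lower bound, however, has a genuine gap. You propose a cell decomposition in the tangential direction, with a local gauge in each cell and then ``bound the residual tangential derivative term by Cauchy--Schwarz against the potential term.'' This is precisely the older strategy of \cite{AH,FHP}, and it is known to work only for a restricted range of $b$ (roughly $b\gtrsim 1.25$), not for the full interval $1<b<\theo^{-1}$ claimed in the theorem. The obstruction is that after expanding $|(\partial_s + i a(t))\psi|^2$, the cross term $2a(t)(i\psi,\partial_s\psi)$ cannot be absorbed by Cauchy--Schwarz into the remaining positive terms without destroying the lower bound by $\eonek$; any splitting $|2a(i\psi,\partial_s\psi)|\le \delta|\partial_s\psi|^2+\delta^{-1}a^2|\psi|^2$ leaves a negative potential contribution. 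The paper's route is entirely different and avoids cells altogether: it performs a \emph{global} energy splitting $\annf[\psi]=\tfrac{2\pi R\,\eonek}{\eps}+\E_k[u]$ by writing $\psi=\fk(t)\,u\,e^{-i(\alk+\eps\deps)s}$ and using the variational equation for $\fk$, then rewrites the current term via a potential function $\Fk(t)=2\int_0^t (\eta+\alk-\tfrac12\eps k\eta^2)(1-\eps k\eta)^{-1}\fk^2$, and shows that the resulting \emph{cost function} $\Kk(t)=(1-\de)\fk^2(t)+\Fk(t)$ is nonnegative on the relevant interval (Proposition~\ref{pro:K positive}). This pointwise inequality is the heart of the argument and is what extends the result to all $1<b<\theo^{-1}$; it is not in your proposal.

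Two further points. What you flag as the ``main obstacle'' --- the pointwise control on $\aavm$ --- is in fact a standard step handled by the elliptic estimate $\|\curl\aavm-1\|_{C^1}=\OO(\eps)$ together with Agmon decay (Proposition~\ref{replacement: pro}); it contributes only $\OO(\eps^2|\log\eps|^2)$ and is not where the difficulty lies. And the density estimate \eqref{eq:main density disc} in the paper is not obtained by a Taylor expansion of $\fonekal$ around $\fk$; it falls out directly as a byproduct of the lower bound, since one of the positive terms dropped in bounding $\E_k[u]\ge 0$ is exactly $\tfrac{1}{2b}\int(1-\eps kt)\fk^4(1-|u|^2)^2=\tfrac{1}{2b}\int(1-\eps kt)(\fk^2-|\psi|^2)^2$, which the matching upper and lower bounds force to be $\OO(\eps|\log\eps|)$.
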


\begin{rem}(Limits and orders of magnitude)\label{rem:orders}\mbox{}\\
It is clear from \eqref{eq:intro 1D func disc} that the $k$-dependent terms contribute to $\eonek$ at order $\eps$, so that the error term in \eqref{eq:main energy disc} is much smaller than these curvature dependent effects. Note also that the error we make in the density asymptotics \eqref{eq:main density disc} is much smaller than the difference between $\fO$ and $\fk$ (of order $\eps$ as a simple estimate reveals). In particular \eqref{eq:main density disc} does \emph{not} hold if $\fk$ is replaced by $\fO$.  
\end{rem}

A density estimate so strong as \eqref{eq:main density disc} turns out to be incompatible with any significant local discrepancies between $|\glm| ^2$ and  $\fk ^2 \left(\tx\frac{R-r}{\eps} \right)$. We thus rule out normal inclusions in some boundary layer that we now define: 
\beq
\label{annd}
\annd : = \lf\{ \rv \in \Om \: : \: \fk \lf(\tx\frac{R-r}{\eps} \ri) \geq \game \ri\} \subset \lf\{ r \geq R - \half \eps \sqrt{|\log\game|} \ri\},
\eeq
where $\rm{bl}$ stands for ``boundary layer'' and $ 0 < \game \ll 1 $ is any quantity such that
\beq
\label{game}
\game \gg \eps^{1/6}|\log \eps| ^{4/3}.
\eeq
The inclusion in \eqref{annd} follows from \eqref{fal point l u b} below and ensures we are really considering a significant boundary layer: recall that the physically relevant region has a thickness roughly of order $\eps$.

	\begin{teo}[\textbf{Uniform density estimates in the disc case}]
		\label{theo:Pan}
		\mbox{}	\\
		Let $\Om$ be a disc of radius $R = k ^{-1}$. For any $ 1 < \hex < \theo^{-1} $ there exists a $c_0 >0$ such that, as $ \eps \to 0$, it holds
		\beq
			\label{eq:Pan plus}
			\lf\| \lf|\glm(\rv)\ri| - \fk\lf(\tx\frac{R-r}{\eps} \ri) \ri\|_{L^{\infty}(\annd)} = \OO(\game^{-3/2} \eps^{1/4} |\log\eps|^{2}) \ll 1.
		\eeq
		In particular at the boundary $ \partial \Omega $ we have
		\begin{equation}\label{eq:Pan}
			\lf| \lf| \glm(R,\vartheta) \ri| - \fk (0) \ri| =  \OO(\eps^{1/4} |\log\eps|^{2}),
		\end{equation}
		uniformly in $\vartheta \in [0,2\pi]$.
	\end{teo}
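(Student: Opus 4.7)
The strategy is to upgrade the strong $L^2$ density estimate \eqref{eq:main density disc} of Theorem \ref{theo:disc energy} into a pointwise bound on the boundary layer $\annd$ by means of a standard $L^2$-to-$L^\infty$ interpolation argument, relying on uniform gradient control of both $|\glm|$ and the one-dimensional reference profile $\fk$. The factorization $|\glm|^2-\fk^2=(|\glm|-\fk)(|\glm|+\fk)$, combined with the pointwise lower bound $|\glm|+\fk\geq \fk\geq\game$ valid on $\annd$ by the very definition \eqref{annd}, then yields the desired estimate on $|\glm|-\fk$. The choice \eqref{game} is precisely what is needed to guarantee that the resulting right-hand side is $o(1)$.

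The first step is to establish uniform gradient bounds. From the GL variational equations \eqref{eq:GL eq} together with elliptic regularity and the a priori estimate $|\glm|\leq 1$ (both standard, see \cite{FH}), one obtains $\||\nabla \glm|\|_{L^\infty(\Om)}\leq C\eps^{-1}$, which transfers to $\bigl||\nabla|\glm|\bigr|\leq C\eps^{-1}$ pointwise. Concerning the comparison profile, the ODE coming from the Euler--Lagrange equation associated with \eqref{eq:intro 1D func disc} combined with the exponential decay of $\fk$ yields $|\fk'(t)|\leq C$ uniformly in $t\in[0,c_0|\log\eps|]$. Consequently, setting $g(\rv):=|\glm(\rv)|^2-\fk^2\!\bigl(\tx\frac{R-r}{\eps}\bigr)$, one has $|\nabla g(\rv)|\leq C\eps^{-1}$ throughout $\Om$.

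The second step is the $L^2$-to-$L^\infty$ upgrade. Suppose $|g(\rv_0)|=M$ for some $\rv_0\in\annd$. The gradient bound implies $|g|\geq M/2$ on the ball $B(\rv_0,c M\eps)$, so by integration
\[
\|g\|_{L^2(\Om)}^2\geq c\, M^4 \eps^2.
\]
Combined with \eqref{eq:main density disc} this furnishes a pointwise bound on $|\glm|^2-\fk^2$ over $\annd$ of order $\eps^{1/4}|\log\eps|^{1/4}$, up to logarithmic losses. The next step converts this into a bound on $|\glm|-\fk$. On $\annd$ we have $|\glm|+\fk\geq \fk\geq\game$, which gives the elementary estimate $\bigl||\glm|-\fk\bigr|\leq |g|/\game$. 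Care is however required because the gradient argument has to be carried out on a small ball around $\rv_0$, on which $\fk$ may have dropped significantly below its value at $\rv_0$; controlling the denominator $|\glm|+\fk$ on the whole ball, and not just at $\rv_0$, forces additional factors of $\game^{-1/2}$ and of $|\log\eps|$, and ultimately produces the rate $\game^{-3/2}\eps^{1/4}|\log\eps|^{2}$ stated in \eqref{eq:Pan plus}. The assumption \eqref{game} on $\game$ is exactly what is needed to make this quantity go to zero. Finally, \eqref{eq:Pan} follows by evaluating at $r=R$ and using that $\fk(0)\to \fO(0)>0$ uniformly in $\eps$, so that one may take $\game$ of order one in that particular case.

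The main obstacle is the third step: while the naive factorization loses only one inverse power of $\game$, insisting that the denominator $|\glm|+\fk$ remain comparable to $\game$ on the entire auxiliary ball of radius $cM\eps$ (whose size depends on the as-yet unknown $M$) requires a small bootstrap and introduces the additional loss responsible for the $\game^{-3/2}$ exponent and the extra logarithmic factors. A secondary technical point is the careful treatment of the transition to boundary (Fermi) coordinates, which produces curvature-dependent corrections already accounted for in the refined functional \eqref{eq:intro 1D func disc}; these must be controlled uniformly so as not to degrade the gradient bounds used in the interpolation.
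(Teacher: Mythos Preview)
Your approach is correct and is the same B\'ethuel--Brezis--H\'elein $L^2$-to-$L^\infty$ interpolation the paper uses; the only difference is the function to which it is applied. The paper factors $|\glm|=\fk\,|u|$ in boundary coordinates and runs the argument on $1-|u|$, combining the gradient bound $|\nabla|u||\leq C\game^{-1}|\log\eps|^3$ (their Lemma~\ref{grad u est: lem}, which costs a factor $|\fk'/\fk|\leq C|\log\eps|^3$) with the weighted estimate $\int \fk^4(1-|u|^2)^2=\OO(\eps|\log\eps|)$ (their Corollary~\ref{cor:Pan disc pre}, which is just the unscaled form of the $L^2$ density estimate \eqref{eq:main density disc} you quote). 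You instead interpolate directly on $g=|\glm|^2-\fk^2$, using only $|\nabla g|\leq C\eps^{-1}$ and the same $L^2$ input, and then divide by $|\glm|+\fk\geq\game$ pointwise.

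Your two-step route is in fact slightly sharper than you claim: it yields $\||\glm|-\fk\|_{L^\infty(\annd)}\leq C\game^{-1}\eps^{1/4}|\log\eps|^{1/4}$, with exponent $\game^{-1}$ rather than the paper's $\game^{-3/2}$. The ``main obstacle'' you describe is illusory. Once the $L^\infty$ bound on $g$ is established (and for this the auxiliary ball only needs to lie in $\Om$, not in $\annd$, since it is $\|g\|_{L^2(\Om)}$ that you control), the factorization $|\glm|-\fk=g/(|\glm|+\fk)$ is applied \emph{pointwise} on $\annd$ and needs no control of the denominator on any ball. There is therefore no bootstrap and no additional loss of $\game^{-1/2}$ or of logarithms; that paragraph should be dropped. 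What the paper's choice does buy is a direct $L^\infty$ bound on $|1-|u||$, which is reused verbatim in the proof of the winding-number estimate (Theorem~\ref{teo:circulation}); your bound on $|\glm|-\fk$ would have to be converted back for that purpose.
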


	\begin{rem}(Optimal density $ \fk $ and Pan's conjecture)
		\mbox{}	\\
		For example, if we choose $ \game = |\log\eps|^{-2} $, we obtain
		\bdm
			\lf|  \lf|\glm(\rv)\ri| - \fk\lf(\tx\frac{R-r}{\eps}\ri) \ri| \leq C \eps^{1/4}|\log\eps|^5,
		\edm
		for any $\rv$ such that $r\geq R - C\eps \log |\log \eps|$. One should note that the error term in \eqref{eq:Pan plus} is much larger than the difference between $\fk$ and $\fO$, so we could as well use $\fO$ in the statement of the theorem. However the use of the optimal density $ \fk $ remains crucial to reproduce the energy of the GL minimizer up to corrections $ o(1) $ (see Proposition \ref{splitting}). The particular case \eqref{eq:Pan} proves the original form of Pan's conjecture \cite[Conjecture 1]{Pan} in the case of disc samples. 
	\end{rem}

	The above result provides information about the behavior of the modulus of $ \glm $ on the boundary, and, thanks to the positivity of $ \fk $ (in particular at $ t = 0 $), the phase and thus the winding number (phase circulation) of $ \glm $ at the boundary are well defined. In the next theorem we give an estimate of $ \deg(\glm, \partial \Omega) $, where, if $ \ba_R $ is a ball of radius $ R $, we define
	\beq\label{eq:GL degree}
		2 \pi \deg\lf(\Psi, \partial \ba_R\ri) : = - i \int_{\partial \ba_R} \diff \xi \: \frac{|\Psi|}{\Psi} \partial_{\tau} \lf( \frac{\Psi}{|\Psi|} \ri),
	\eeq
	$ \partial_{\tau} $ standing for the tangential derivative along the boundary. We thus complete the analysis of the boundary behavior of $ \glm $ by discussing its phase.

	\begin{teo}[{\bf Winding number of $ \glm $ on the boundary of disc samples}]
		\label{teo:circulation}
		\mbox{}	\\
		Let $\Om$ be a disc of radius $R = k ^{-1}$. For any $ 1 < \hex < \theo^{-1} $ as $ \eps \to 0$, any GL minimizer $ \glm $ satisfies
		\beq\label{eq:GL degree result}
			\deg\lf(\glm, \partial \Omega\ri) = \frac{\pi R^2}{\eps^2} +  \frac{|\alk|}{\eps} + \OO(\eps^{-3/4}|\log\eps|^2).
		\eeq
	\end{teo}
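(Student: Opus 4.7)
The plan is to combine the standard gauge-invariant decomposition of the winding number with a refinement of Theorem \ref{theo:Pan} at the level of the supercurrent, evaluating the magnetic flux term separately.

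First I would verify that the winding number is well-defined. By \eqref{eq:Pan} in Theorem \ref{theo:Pan} one has $|\glm(R,\vartheta)| \geq \fk(0) - \OO(\eps^{1/4}|\log\eps|^2) > 0$ uniformly in $\vartheta$ for $\eps$ small, since $\fk(0) > 0$. Hence $\glm/|\glm|$ is a well-defined $S^1$-valued function on $\dd\Omega$ and the integral \eqref{eq:GL degree} makes sense. Writing $\glm = |\glm|e^{iS}$ locally and using the gauge-invariant supercurrent $\mathbf{j}(\glm) := \Im[\glm^*(\nabla + i\aavm/\eps^2)\glm] = |\glm|^2(\nabla S + \aavm/\eps^2)$, integration along $\dd\Omega$ together with Stokes' theorem yields the classical decomposition
\[
2\pi\deg(\glm,\dd\Omega) = \oint_{\dd\Omega} \frac{\mathbf{j}(\glm)\cdot\tau}{|\glm|^2}\,d\xi - \frac{1}{\eps^2}\int_\Omega \curl\aavm\, d\rv.
\]

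Next I would evaluate the magnetic flux term. Setting $h := \curl\aavm - 1$, the boundary condition \eqref{eq:GL bc} gives $h|_{\dd\Omega} = 0$, while the second GL equation \eqref{eq:GL eq} gives $-\Delta h = \eps^2\,\curl\mathbf{j}(\glm)$. Since $\mathbf{j}(\glm)$ is concentrated in a boundary strip of width $\OO(\eps)$ with $|\mathbf{j}(\glm)|\lesssim\eps^{-1}$ there (by the kinetic part of the energy bound \eqref{eq:main energy disc}), standard elliptic estimates for the Dirichlet problem imply $\int_\Omega h$ is sufficiently small, so that $\eps^{-2}\int_\Omega \curl\aavm = \pi R^2/\eps^2 + (\text{admissible error})$. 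This accounts for the leading $\pi R^2/\eps^2$ term in the claim. The crux is then a pointwise asymptotics of $\mathbf{j}(\glm)\cdot\tau/|\glm|^2$ on $\dd\Omega$. The refined 1D description of $\glm$ close to the boundary, in an appropriate gauge, is $\glm \approx \fk(\eta/\eps)\exp(i\tilde S_\eps(\rv))$ with $\tilde S_\eps$ dictated by the optimal tangential momentum $\alk$ of the 1D minimizer and the chosen magnetic gauge. Combining \eqref{eq:Pan plus} with a companion pointwise estimate on the gauge-invariant derivative $(\nabla + i\aavm/\eps^2)\glm$ at $\dd\Omega$, obtainable by Schauder-type elliptic regularity applied to the first GL equation (with \eqref{eq:Pan plus} and the kinetic bound from \eqref{eq:main energy disc} as inputs), should yield a pointwise formula for the tangential current with error $\OO(\eps^{-3/4}|\log\eps|^2)$. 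Integrating around $\dd\Omega$ and substituting into the decomposition produces the $|\alk|/\eps$ correction (using $\alk<0$).

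The main obstacle is the last step. While Theorem \ref{theo:Pan} controls the modulus $|\glm|$ pointwise on $\dd\Omega$, the supercurrent depends on the phase gradient and hence on the gauge-invariant derivative $D_\aavm\glm$, for which a pointwise estimate has not been directly established. Propagating the modulus and energy controls to pointwise derivative estimates via elliptic regularity inside the boundary layer is the delicate part of the proof, and the regularity loss in this step is what dictates the precision $\OO(\eps^{-3/4}|\log\eps|^2)$ appearing in the final formula.
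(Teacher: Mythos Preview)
Your decomposition into a magnetic flux term and a boundary supercurrent term is correct and is equivalent to what the paper does via the gauge phase $\phi_\eps$: the identity $2\pi\deg(\glm,\partial\Omega) - 2\pi\deg(\psi,\partial\Omega) = \eps^{-2}\int_\Omega \curl\aavm - \deps$ together with the elliptic/Agmon estimates for $\curl\aavm - 1$ gives the $\pi R^2/\eps^2$ term with error $\OO(|\log\eps|)$. So far the two approaches coincide.

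The gap is in your second step. You propose to obtain a pointwise asymptotic for the tangential supercurrent $\mathbf{j}(\glm)\cdot\tau/|\glm|^2$ on $\partial\Omega$ via Schauder theory, and you correctly flag this as the obstacle. In fact nothing in the paper's toolbox yields such a pointwise derivative estimate: Schauder for the first GL equation only gives the scale-invariant bound $|(\nabla+i\aavm/\eps^2)\glm| = \OO(\eps^{-1})$, not the specific form of the phase gradient to precision $\OO(\eps^{-3/4}|\log\eps|^2)$. The modulus estimate \eqref{eq:Pan plus} controls $|\glm|$ but says nothing about the phase, and there is no ``companion'' derivative estimate available.

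The paper bypasses this entirely. Instead of estimating the supercurrent pointwise, it uses the algebraic factorization $\psi(s,0) = \fk(0)\, u(s,0)\, e^{-i(\alk+\eps\deps)s}$ from the energy splitting (Lemma~\ref{lem:split disc}). This extracts the phase $-\alk s/\eps$ explicitly, contributing $2\pi|\alk|/\eps$ to the degree. The only thing left to bound is the boundary circulation of the remainder $u$, i.e.\ $\int_0^{|\partial\Omega|/\eps}(iu,\partial_s u)\,\diff s$. For this the paper uses an integral, not a pointwise, argument (Lemma~\ref{lem:u circulation}): a radial cut-off converts the boundary integral into a strip integral controlled by $\|\nabla u\|_{L^2(\ann)}^2$, and this $L^2$ kinetic energy is already bounded by $\OO(\eps|\log\eps|^5)$ thanks to the reduced energy estimate of Corollary~\ref{cor:Pan disc pre} (taking $\de\propto|\log\eps|^{-4}$). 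The resulting bound $\OO(|\log\eps|^3)$ on the circulation of $u$ is far below the claimed error, so the final precision $\OO(\eps^{-3/4}|\log\eps|^2)$ actually comes only from the interaction of the modulus error $|u|=1+\OO(\eps^{1/4}|\log\eps|^2)$ with the explicit phase term $\alk/\eps$.

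In short: the key idea you are missing is that the optimal phase $\alk$ is already sitting explicitly in the decomposition \eqref{def u}, so one does not need to \emph{recover} it from a pointwise supercurrent estimate; one only needs to show that the residual factor $u$ carries negligible circulation, and this follows from the $L^2$ gradient bound that comes for free from the energy lower bound.
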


	\begin{rem}(Boundary behavior of $ \glm $ and interpretation of the winding number)\label{rem:winding}
		\mbox{}	\\
		The combination of Theorem \ref{teo:circulation} with the modulus convergence stated in \eqref{eq:Pan} is compatible with a behavior of the type~\eqref{eq:GLm structure formal}, with $\phi_\eps$ being given by 
		$$
			\phi_{\eps}(s,t) : = - \frac{1}{\eps} \int_{0}^{t} \diff \eta \: \nuv(\eps s) \cdot \aavm(\rv(\eps s, \eps \eta)) + \frac{1}{\eps} \int_{0}^s \diff \xi \: \gav^{\prime}(\eps \xi) \cdot \aavm(\rv(\eps \xi,0)) - \eps \deps s.			$$	
		    with $\aavm$ the minimizing vector potential, $ \nuv $ and $ \gav $ unit vectors respectively normal and tangential to the boundary and $ \deps $ proportional to the circulation of $ \aavm $ on $ \partial \Omega $ (see Section~\ref{sec:restriction annulus} for a more precise discussion). On the boundary of the disc this boils down to
		\bdm
			\glm(R,\vartheta) \approx \fk(0) \exp\lf\{i \lf[ \tx\frac{\pi R^2}{\eps^2} + \frac{|\alk|}{\eps} \ri] \vartheta \ri\},
		\edm
		although the estimates are not precise enough to derive the exact shape of the phase factor. As is well-known, the winding number counts the number of phase singularities, i.e., vortices of $\glm$ inside $\Om$. The above result thus suggests that there are  $\frac{\pi R^2}{\eps^2} +  \frac{|\alk|}{\eps}$ vortices in the sample, although these are made undetectable in practice by the exponential decay of $\glm$ in this region.   
	\end{rem}

	\begin{rem}(Lack of continuity of $\glm$ as a function of $b,\eps$)\mbox{}\\
	 As noted in \cite{AH}, the continuity of $\glm$ as a function of the parameters of the functional does not seem compatible with the existence and quantization of the winding number \eqref{eq:GL degree} that follows from \eqref{eq:Pan}. Indeed, the phase circulation is a topological, discrete quantity that cannot vary continuously. While in \cite{AH} this argument was used to cast doubt on the possibility for \eqref{eq:Pan} to hold, we on the contrary use their argument to suggest that $\glm$ does not depend continuously on $b$ and $\eps$.  
	\end{rem}
	
	We end this section by a comment about general domains. We do believe that the analogues of \eqref{eq:Pan} and \eqref{eq:GL degree result} continue to hold for any smooth $\Om \subset \R ^2$. The estimates of Theorem \ref{theo:disc energy} clearly fail however since the curvature of the domain is not constant in this case. We still think that our method can ultimately allow to generalize Theorems \ref{theo:Pan} and \ref{teo:circulation} to general smooth domains, but important new ingredients are also required. This will be the subject of a future work.

\subsection{Heuristic considerations and sketch of the method}\label{sec:sketch}

The starting point of our analysis is a reduction of the problem to an effective one posed in a small boundary layer and a mapping to boundary coordinates. This involves a clever choice of gauge, decay estimates for the GL order parameter (mostly the so-called Agmon estimates) and a replacement of the induced vector potential by the external one. This part of the analysis is borrowed from the aforementioned references and we have virtually nothing new to say about it. The main steps will be recalled later for the convenience of the reader, an exhaustive reference on these methods being \cite{FH}. In this section we outline the main steps of our approach beyond these classical reductions, starting from a functional that is commonly used in the literature as an intermediate between the full GL energy  and the 1D functional \eqref{eq:intro 1D func}. This will illustrate the arguments we shall use in the proofs of our main results, in particular give a pretty complete picture of the main ingredients for the 
results of Subsection \ref{sec:main generic}. The estimates leading to the results of Subsection \ref{sec:main disc} are based on the same kind of considerations, applied to more complicated functionals however and with significant additional technical difficulties.

\medskip

After the reductions we have just mentioned, the leading order of the functional is given by a model on a half-plane, with a fixed vector potential parallel to the boundary\footnote{The subscript $\rm{hp}$ stands for ``half-plane''. Sometimes this is referred to as a functional on a half-cylinder.}:
\begin{equation}\label{eq:cylind func}
\Ehp [\psi] = \int_{-L} ^L \diff s \int_{0} ^{+\infty} \diff t \lf\{ \left|\left( \nabla - i t \es \right) \psi\right| ^2  + \frac{1}{b} |\psi| ^4 - \frac{2}{b} |\psi| ^2 \ri\}.  
\end{equation}
Here the coordinate $t$ corresponds to the direction normal to the boundary of the original sample, $s$ to the tangential coordinate. Length units have been multiplied by $\eps ^{-1}$ so that $2L=\tx \frac{|\dd \Om|}{\eps}$. 
Note that the only large parameter of this functional is the length $L$. The parameter $b$ is the original one and thus satisfies \eqref{eq:b condition}. Since the domain 
$$D_L := [-L,L] \times \R ^+$$
corresponds to the unfolded boundary layer (with neglected curvature), one must impose periodicity of $\psi$ in the $s$-direction. This is not so important when $L$ is large, and in this section we shall only assume periodicity of $|\psi|$ in the $s$ variable. We denote $\Ehpe (L)$ the minimum of $\Ehp$ under this condition and recall that the constant $E_b$ obtained by X.B. Pan \cite{Pan} in \eqref{eq:FH energy asympt} is in fact given by the large $L$ limit of $ (2L) ^{-1} \Ehpe (L)$.

\medskip

We will argue that $\Ehpe = 2L \eoneo$ where $\eoneo$ is defined at the beginning of Subsection \ref{sec:main generic}. The upper bound $\Ehpe \leq 2L \eoneo$ is immediate, using $\psi (s,t) = \fO(t) e^{-i\alO s}$ as a trial state. The strategy we borrow from \cite{CPRY2,CPRY3} yields the lower bound $\Ehpe \geq 2 L \eoneo$. The main steps are as follows:
\begin{enumerate}
\item When $b<\theo ^{-1}$, $\fO$ is strictly positive everywhere in $\R ^+$. To any $\psi$ we may thus associate a $v$ by setting
\begin{equation}\label{eq:intro dec func}
\psi (s,t) = \fO (t) e^{-i\alO s} v(s,t). 
\end{equation}
Using the variational equation for $\fO$ it is not difficult to see that 
\begin{align}\label{eq:intro dec ener}
\Ehp [\psi] &=  2 L \eoneo + \E_0 [v],
\\\label{eq:intro energy E0}
\E_0 [v] &=  \int_{D_L} \diff s \diff t \: \fO ^2(t) \: \bigg\{ \lf| \nabla v \ri|^2 - 2 (t+\alO)\es \cdot \jv + \frac{1}{2 \hex} \fO^2(t) \lf( 1 - |v|^2 \ri)^2 \bigg\},
\end{align}
with $\es$ the unit vector in the $s$-direction and
\[
 \jv = \tx\frac{i}{2} \left( v \nabla v ^* - v ^* \nabla v\right)
\]
the \emph{superconducting current} associated with $v$. This kind of energy decoupling, originating in \cite{LM}, has been used repeatedly in the literature (see \cite{CR,CRY,R2} and references therein).
\item In view of \eqref{eq:intro dec ener}, we need to prove a lower bound to the reduced functional $\E_0$. Here our strategy differs markedly from the spectral approach of \cite{AH,FHP}. We first note that the field $2(t+\alO)\fO ^2 (t)\es$ is divergence-free and may thus be written as $\nabla ^{\perp} F_0$ with a certain $F_0$, that we can clearly choose as 
\[
F_0 (t,s) = F_0 (t) = 2 \int_{0} ^{t} \diff \eta \: (\eta + \alO)\fO ^2 (\eta), 
\]
by fixing $F_0 (0) = 0$. Then it follows from the definition of $\alO$ and $\fO$ that $ F_0 (+\infty) = 0$ (this is the Feynman-Hellmann principle applied to the functional $\foneOal$). Using Stokes' formula on the term involving $\jv$ in \eqref{eq:intro energy E0} we thus have 
\[
\E_0 [v] : =  \int_{D_L} \diff s \diff t \: \left\{ \fO ^2(t) \lf| \nabla v \ri|^2 + F_0 (t) \muv + \frac{1}{2 \hex} \fO^4(t) \lf( 1 - |v|^2 \ri)^2 \right\},
\]
where 
\[
\muv = \curl \: \jv
\]
is the \emph{vorticity} associated to $v$. We also use the periodicity of $|\psi|$ here.
\item It is not difficult to prove that $F_0$ is negative, so that we may bound below 
\[
 \E_0 [v] \geq \int_{D_L} \diff s \diff t \:  \left(\fO ^2(t) \lf| \nabla v \ri|^2 + F_0 (t) |\muv|\right).
\]
We now make the simple but crucial observation that the vorticity is locally controlled by the kinetic energy density:
\begin{equation}\label{eq:intro control vortic}
|\muv| \leq |\nabla v| ^2,   
\end{equation}
so that we have 
\begin{equation}\label{eq:intro low bound}
\E_0 [v] \geq \int_{D_L} \diff s \diff t \:  \left(\fO ^2(t) + F_0 (t)\right) \lf| \nabla v \ri|^2. 
\end{equation}
\item In view of \eqref{eq:intro dec ener} and \eqref{eq:intro low bound}, the sought-after lower bound follows if we manage to prove that the cost function
\begin{equation}\label{eq:intro cost}
K_0 (t):= \fO ^2(t) + F_0 (t)\geq 0 
\end{equation}
for any $t\in \R ^+$. This is the crucial step and we prove in Subsection \ref{sec:model half plane} below that this is indeed the case under condition \eqref{eq:b condition} (in fact $b=1$ is also allowed).
\end{enumerate}

Steps 1 to 3 are general and follow \cite{CPRY2,CPRY3}, Step 1 having also been used previously in this context \cite{AH}. It is in Step 4 that the specificities of surface superconductivity physics enter, in the form of the properties of the 1D functional \eqref{eq:intro 1D func}. In particular the proof relies on the effective potential appearing in the 1D functional being quadratic\footnote{A closer inspection shows that the crucial point is that $\dd_\alpha (t+\alpha) ^2 = \dd_t (t+\alpha) ^2$.}, on the optimality of $\fO$ and $\alO$ and of course on the 1D aspect of the problem. A noteworthy point is that our main ingredient, the lower bound to the cost function \eqref{eq:intro cost}, holds in the regime \eqref{eq:b condition} (actually for $1\leq b < \theo^{-1}$) and only there\footnote{In fact for $b>\theo^{-1}$ \eqref{eq:intro cost} is trivially true since $\fO \equiv 0$, but then \eqref{eq:intro dec func} does not make sense, so the method fails altogether.}. The method is thus sharp in 
this respect.

As we mentioned earlier, the main insight is to notice that only the formation of vortices could lower $\E_0[v]$ to make it negative. Indeed the only potentially negative term in \eqref{eq:intro energy E0} is that involving the superconducting current $\jv$, that we may rewrite using the vorticity. Thinking of $v$ in the form $\rho e ^{i\varphi}$, $\jv = \rho ^2 \nabla \varphi$ corresponds to a velocity field, which justifies the name ``vorticity'' for $\muv$ (as in fluid mechanics). A reasonable guess is to suppose that on average $\rho =|v| \sim 1$, which favors the last term of \eqref{eq:energy E0}. Then, heuristically, $\muv$ may be non-trivial only if the phase $\varphi$ has singularities, i.e., vortices. These could a priori lower the energy by sitting where $F_0$ is minimum but inequalities \eqref{eq:intro control vortic} and \eqref{eq:intro cost} together show that the kinetic energy cost of such vortices would always dominate the gain. Note that although we think in terms of vortices for the 
sake of 
heuristics, we do not need to apply any sophisticated vortex ball method as in \cite{CPRY1,CR,CRY,R2} to bound $\E_0$ from below.   

\medskip

The rest of the paper contains the proofs of our main results, for which we will have to present variants of the above strategy to accommodate various technical aspects of the problem. We start in Section \ref{sec:model prob} by analyzing the 1D functionals \eqref{eq:intro 1D func} and \eqref{eq:intro 1D func disc}. In particular we define the associated cost functions and prove their positivity properties, which are the main new technical ingredients of the present contribution. Section \ref{sec:proof generic} contains a sketch of the now standard procedure of reducing the GL functional to a small boundary layer and replacing the magnetic vector potential. We also conclude there the proof of our Theorem \ref{theo:generic} about general domains. In Section \ref{sec:proof disc} we restrict the setting to disc samples and prove the results of Subsection \ref{sec:main disc}.  An Appendix gathers technical estimates that we use here and there.

\section{Analysis of Effective One-dimensional Problems}\label{sec:model prob}

As will be discussed in more details below, once the usual standard reduction to the boundary layer and appropriate change of gauge have been performed, we are left with the following functional 
\begin{multline}\label{eq:GL func bound}
	\annfpre[\psi] : = \int_{\ann} \diff s \diff t \: \lf(1 - \eps \curv t \ri) \lf\{ \lf| \partial_t \psi \ri|^2 + \frac{1}{(1- \eps \curv t)^2} \lf| \lf( \partial_s + i \aae(s,t) \ri) \psi \ri|^2 \ri.	\\	
	\lf. - \frac{1}{2 \hex} \lf[ 2|\psi|^2 - |\psi|^4 \ri]  \ri\},
\end{multline}
where we have set
\beq
	\aae(s,t) : =- t + \half \eps \curv t^2 + \eps \deps , 	
\eeq
and
\beq\label{eq:deps}
	\deps : = \frac{\gamma_0}{\eps^2} - \lf\lfloor \frac{\gamma_0}{\eps^2} \ri\rfloor,	\qquad	 \gamma_0 : = \frac{1}{|\partial \Omega|} \int_{\Omega} \diff \rv \: \curl \aavm,
\eeq
$ \lf\lfloor \: \cdot \: \ri\rfloor $ standing for the integer part. The coordinates $\eps s$ and $\eps t$ correspond respectively to the tangential and normal coordinates in a tubular neighborhood of $\dd \Om$, $$ \ann:= \lf\{ (s,t) \in \left[0, \tx\frac{|\partial \Omega|}{\eps} \right] \times \left[0,c_0 |\log\eps|\right] \right\} $$
with $c_0$ a constant independent of $\eps$. The function $k(s)$ is the curvature of $\dd \Om$ as a function of $\eps s$. The connection between \eqref{eq:gl func} and \eqref{eq:GL func bound} will be investigated in Section \ref{sec:proof generic}. In the present section we are concerned with the study of effective functionals obtained when plugging some physically relevant ans\"atze in \eqref{eq:GL func bound} and/or neglecting some lower order terms. The analysis of these model functionals provide the main technical ingredients of the proofs of our main results.

For general domains we first neglect the terms involving the curvature $k(s)$ in \eqref{eq:GL func bound}. Indeed, they all come multiplied by an $\eps$ factor and will thus contribute only to the subleading order of the energy. Setting artificially $k(s)\equiv 0$ (that is, approximating the original domain by a half-plane), making the ansatz 
\begin{equation}\label{eq:ansatz}
\psi (s,t) = f(t) e^{- i  \left( \alpha  + \eps\deps  \right) s}  
\end{equation}
and integrating over the $s$ variable, we obtain the 1D functional (times $ \frac{|\partial \Omega|}{\eps} $)
\beq\label{eq:1D func half plane}
\fone_{0,\alpha}[f] : = \int_0^{+\infty} \diff t \lf\{ \lf| \partial_t f \ri|^2 + (t + \alpha )^2 f^2 - \tx\frac{1}{2b} \lf(2 f^2 - f^4 \ri) \ri\},
\eeq
which is known to play a crucial role in the surface superconducting regime \cite{AH,FH,FHP}. We have also set $\eps = 0$ so that the integration domain is now the full half-line.

The improvement of our results when the domain is a disc comes from the simple observation that in this case it is not necessary to drop the curvature terms to obtain a 1D functional out of the ansatz \eqref{eq:ansatz}. Indeed, taking the curvature to be a constant $k(s)\equiv k$ and plugging \eqref{eq:ansatz} in \eqref{eq:GL func bound}, we obtain 
\begin{equation}
\label{eq:1D func disc}
\fone_{k,\alpha}[f] : = \int_0^{c_0|\log\eps|} \diff t \: (1-\eps k t )\lf\{ \lf| \partial_t f \ri|^2 + \frac{(t + \alpha - \frac12 \eps k t ^2 )^2}{(1-\eps k t ) ^2} f^2 - \frac{1}{2b} \lf(2 f^2 - f^4 \ri) \ri\},
\end{equation}
after integration over the $s$ variable and extraction of the factor $|\partial \Omega|/\eps$. This functional includes subleading corrections of order $\eps$ but retains the 1D character of \eqref{eq:1D func half plane} (which is just \eqref{eq:1D func disc} with $k=0$) and is thus amenable to a similar, although technically more involved, treatment.

For the refinements in the disc case it is important that we retain the definition of \eqref{eq:1D func disc} on an interval and not the full half-line, since the functional makes sense only for $ \eps k t < 1 $. In addition the tuning of the value of $c_0$ is going to play a role in the proof. The differences in the treatment of the problems on the half-line and on a large interval are not very important until we introduce the cost functions in Subsections \ref{sec:model half plane} and \ref{sec:model disc}, so we only make remarks in this direction. The exponential decay \eqref{fal point l u b} of $ \fkal $ however guarantees that once $ k $ is set equal to $ 0 $ in \eqref{eq:1D func disc}, all the discrepancies with \eqref{eq:1D func half plane} are just of order $ \OO(\eps^{\infty}) $. 
it is easy to see that, with $c_0$ large enough, one may freely use one or the other convention. 

\medskip

We now set some notation that is going to be used in the rest of the paper. We denote by $\fkal$ the minimizer of \eqref{eq:1D func disc}, with $\eonekal$ the corresponding ground state energy, i.e.,
\beq
	\label{eq:eonekal}
	\eonekal : = \inf_{f \in H^1(\ie)} \fone_{k,\alpha}[f] = \fone_{k,\alpha}[\fkal],
\eeq
with
\beq
	\label{teps}
	 \ieps : = [0,\teps], \qquad \teps : = c_0 |\log\eps|.
\eeq
We will optimize $ \eonekal $ with respect to $\al\in\R$, thereby obtaining an optimal 1D energy that we denote
\beq
	\label{eq:eonek}
	\eonek : = \min_{\al \in \R} \eonekal.
\eeq
Any minimizing $\alpha$ will be denoted $ \alk $ and the minimizer of $\fone_{k,\alk} := \fonek$, achieving $\eone_{k,\alk} := \eonek$ will be written $\fk$ for short. We use the same conventions for the half-plane case, simply setting $k=0$. For shortness we shall also denote by 
\beq
	\label{pot}
	\potkal( t) : = \bigg( \frac{ t  + \alpha - \half \eps k  t^2}{1 - \eps k t} \bigg)^2,
\eeq
the potential appearing in \eqref{eq:1D func disc}.
Note that the potential $ \potkal $ is a translated harmonic potential in $\ieps$ up to corrections of order $ \eps |\log\eps|^3 $: 
\beq
	\label{potkal est}
	\potkal(t) = \lf(  t + \alpha \ri)^2 + \OO(\eps|\log\eps|^3)= \potOal (t) + \OO(\eps|\log\eps|^3).
\eeq
As before we shall drop the $\alpha$ subscript when $\alpha$ is taken to be $\alk$, obtaining the potentials $\potk$ and~$\potO$.

\medskip

In this section we study the above functionals and in particular prove the desired positivity properties of the associated cost functions (defined below). We start in Subsection \ref{sec:disc half plane} by providing elementary properties common to both functionals. The results for the functional \eqref{eq:1D func half plane} have already been proved elsewhere (see \cite[Section 14.2]{FH} and references therein) and we recover them as a particular case ($k=0$, $\eps = 0$) of our study of \eqref{eq:1D func disc} for arbitrary $k$. We proceed to discuss the positivity of the cost function associated to the half-plane functional \eqref{eq:1D func half plane}  and the disc functional \eqref{eq:1D func disc} in Subsection \ref{sec:model half plane} and \ref{sec:model disc} respectively. Again, the half-plane case is actually contained in the disc case, but we provide a simpler proof containing the main ideas when $k= 0$. The proof of Theorem \ref{theo:generic} uses only this case and the refined analysis of 
Subsection \ref{sec:model disc} will be used only in Section~\ref{sec:proof disc}.

\subsection{Preliminary analysis of the effective functionals}\label{sec:disc half plane}

We start by discussing elementary properties of the minimization of $ \fonekal $:

	\begin{pro}[\textbf{Minimization of $ \fonekal $}]
		\label{min fone: pro}
		\mbox{}	\\
		For any given $ \alpha \in \R $, $ k \geq 0 $ and $ \eps$  small enough, there exists a minimizer $ \fkal $ to $ \fonekal$, unique up to sign, which we choose to be non-negative. It solves the variational equation
		\beq
			\label{var eq fal}
			- \fkal^{\prime\prime} + \tx\frac{\eps k}{1 - \eps k t} \fkal^{\prime} + \potkal ( t) \fkal = \tx\frac{1}{\hex} \lf(1 - \fkal^2 \ri) \fkal
		\eeq
		with boundary conditions $ \fkal^{\prime}(0) =  \fkal^{\prime}(c_0|\log\eps|) = 0 $. Moreover $ \fkal $ satisfies the estimate
		\beq
			\label{fal estimate}
			\lf\| \fkal \ri\|_{L^{\infty}(I_{\eps})} \leq 1
		\eeq
		and it is monotonically decreasing for $  t \geq \max \lf[ 0, -\alpha + \tx\frac{1}{\sqrt{\hex}} -C \eps \ri] $. 
		In addition $ \eonekal $ is a smooth function of $ \alpha \in \R $ and
		\beq
			\label{eone explicit}
			\eonekal = - \frac{1}{2\hex} \int_{I_{\eps}} \diff  t \: (1 - \eps k  t) \fkal^4( t).
		\eeq
	\end{pro}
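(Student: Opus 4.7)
The plan is to combine the direct method of the calculus of variations with standard ODE and convexity arguments.

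\textbf{Existence, Euler--Lagrange, and $L^\infty$ bound.} On $\ie$ the weight $(1-\eps k t)$ is bounded below by a positive constant for $\eps$ small, $\potkal\geq 0$, and $-\frac{1}{b}f^2 + \frac{1}{2b}f^4 \geq -\frac{1}{2b}$ pointwise; thus $\fonekal$ is coercive on $H^1(\ie)$, and standard weak lower-semicontinuity together with the compact embedding $H^1(\ie)\hookrightarrow L^4(\ie)$ yields a minimizer $\fkal$. Since $\fonekal[|f|] = \fonekal[f]$ for real $f\in H^1(\ie)$, we may take $\fkal\geq 0$, and a standard variation produces \eqref{var eq fal} with Neumann conditions arising naturally at the free endpoints; the factor $\eps k/(1-\eps k t)$ in the equation comes from differentiating the weight. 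The bound $\fkal \leq 1$ I would prove by truncation: replacing $\fkal$ by $\min(\fkal,1)$ cannot increase the weighted kinetic or potential contributions, and since $s\mapsto -\frac{1}{2b}(2s-s^2)$ attains its minimum at $s=1$, it cannot increase the nonlinear contribution either.

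\textbf{Uniqueness and monotonicity.} The strong maximum principle applied to \eqref{var eq fal} first gives $\fkal > 0$ on $\ie$ (the alternative $\fkal\equiv 0$ being excluded by testing the energy against a trial state supported where $(t+\alpha)^2 < 1/b$, which is non-empty for any $\alpha\in\R$ since $b>1$ in the relevant regime, and more generally for $b$ slightly above $1$ too). Setting $\rho := \fkal^2 > 0$, the weighted kinetic energy equals $\int(1-\eps k t)|\rho'|^2/(4\rho)$, which is jointly convex in $(\rho,\rho')$, while $\potkal\rho$ is linear and $\frac{1}{2b}\rho^2 - \frac{1}{b}\rho$ is strictly convex in $\rho$; hence $|\fkal|$ is unique. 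For monotonicity I would rewrite \eqref{var eq fal} in Sturm--Liouville form,
\[
-\bigl((1-\eps k t)\fkal'\bigr)' = (1-\eps k t)\Bigl[\tfrac{1}{b}(1-\fkal^2) - \potkal\Bigr]\fkal.
\]
Wherever $\potkal(t) > 1/b$ and $\fkal\leq 1$, the right-hand side is non-positive, so $(1-\eps k t)\fkal'$ is non-increasing; combined with $\fkal'(\teps)=0$, this forces $\fkal'\leq 0$ on that region, and \eqref{potkal est} converts the condition $\potkal > 1/b$ into $t+\alpha > 1/\sqrt{b} - C\eps$.

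\textbf{Energy identity and smoothness.} Multiplying \eqref{var eq fal} by $(1-\eps k t)\fkal$ and integrating over $\ie$, the integration by parts of $-(1-\eps k t)\fkal\fkal''$ exactly cancels the $\eps k\fkal\fkal'$ contribution (boundary terms vanishing by the Neumann conditions), leaving
\[
\int_{\ie}(1-\eps k t)\bigl\{(\fkal')^2 + \potkal\,\fkal^2\bigr\}\diff t = \frac{1}{b}\int_{\ie}(1-\eps k t)(\fkal^2 - \fkal^4)\,\diff t,
\]
which, when substituted into $\fonekal[\fkal] = \eonekal$, collapses to the identity \eqref{eone explicit}. Smoothness of $\eonekal$ in $\alpha$ follows from the implicit function theorem applied to \eqref{var eq fal} at the unique positive minimizer, whose linearization is non-degenerate by the strict convexity above. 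The main obstacle is the sharpness of the monotonicity threshold: one must track the $\OO(\eps|\log\eps|^3)$ corrections between $\potkal$ and $(t+\alpha)^2$ in \eqref{potkal est} in order to obtain the sharp constant $C$ in $-\alpha + 1/\sqrt{b} - C\eps$, rather than the weaker bound a crude estimate would yield; the remaining steps are essentially standard.
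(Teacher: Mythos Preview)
Your overall approach matches the paper's: direct method for existence, convexity in $\rho=f^2$ for uniqueness, integration of the Sturm--Liouville form for monotonicity, and a Pohozaev-type identity for \eqref{eone explicit}. The truncation argument for \eqref{fal estimate} is a perfectly good alternative to the maximum principle the paper invokes.

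There is, however, a genuine error in your uniqueness paragraph: the claim that $\fkal>0$ on $\ie$ for \emph{every} $\alpha\in\R$ is false, and your trial-state justification is flawed. The set $\{t\in\ie:(t+\alpha)^2<1/b\}$ is empty whenever $\alpha>1/\sqrt{b}$, and for $\alpha$ very negative it need not intersect $[0,c_0|\log\eps|]$ at all; even when it is non-empty, the kinetic cost of localizing there must be controlled, which is precisely the content of Proposition~\ref{nontrivial: pro}. The paper in fact explicitly notes that $\fkal\equiv 0$ \emph{is} the minimizer for some parameter ranges. Fortunately this does not wreck the proof: none of the assertions in Proposition~\ref{min fone: pro} require strict positivity, and the convexity-in-$\rho$ argument for uniqueness works without it (use the inequality $|\nabla\sqrt{\theta f_1^2+(1-\theta)f_2^2}|^2\leq \theta|\nabla f_1|^2+(1-\theta)|\nabla f_2|^2$ directly rather than rewriting the kinetic term as $|\rho'|^2/(4\rho)$, which is where you implicitly use $\rho>0$). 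So drop the positivity claim entirely.

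One minor slip: in the monotonicity argument, if the right-hand side of the Sturm--Liouville equation is non-positive then $((1-\eps k t)\fkal')'\geq 0$, so $(1-\eps k t)\fkal'$ is non-\emph{decreasing}, not non-increasing; combined with its vanishing at $\teps$ this still gives $\fkal'\leq 0$ on the relevant region, so your conclusion survives. Also, what you flag as the ``main obstacle'' (the $C\eps$ in the monotonicity threshold) is not really one: it follows immediately from \eqref{potkal est} that $\potkal(t)\geq 1/b$ once $t+\alpha\geq 1/\sqrt{b}+C\eps$ for a suitable $C$, exactly as the paper observes.
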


	\begin{rem}(Half-plane case $ k = 0 $) \\
		Strictly speaking in the case $ k = 0 $, $\eps = 0$ there is no boundary condition at $ \infty $, so the only condition satisfied by $ \fOal $ is $ \fOal^{\prime}(0) = 0 $. In addition the variational equation \eqref{var eq fal} turns into
		\begin{equation}\label{eq:1D half plane vareq}
			- \fOal^{\prime\prime} + (t + \al)^2 \fOal = \tx\frac{1}{\hex} \lf(1 - \fOal^2 \ri) \fOal, 
		\end{equation}
		which is familiar from preceding works on this problem (see in particular \cite{FHP} for a refined analysis).
	\end{rem}

	\begin{proof}
		Uniqueness and non-negativity of the minimizer are straightforward consequences of the strict convexity of the functional in $ \rho : = f^2 $. Note however that $ \fkal \equiv 0 $ is in fact the minimizer in a certain range of the parameters (see below). The variational equation for $ \fkal $ is the Euler-Lagrange equation for the functional $ \fonekal $ and the boundary conditions simply follow from the choice of the minimization domain $ H^1(I_{\eps}) $.
	
		The upper bound on $ \sup_{I_{\eps}} \fkal $ is a simple consequence of the maximum principle (see, e.g., \cite[Proof of Proposition 10.3.1]{FH}), while the monotonicity for $  t $ larger than $ - \alpha + \frac{1}{\sqrt{b}} -C \eps $ (assuming that the expression is positive) can be easily obtained by integrating the variational equation \eqref{var eq fal} in $ [t,c_0|\log\eps|] $, which yields
		\beq
			(1 - \eps k t) \fkal^{\prime}( t) = \int_{ t}^{c_0 |\log\eps|} \diff \eta \: (1 - \eps k \eta) \lf[  \tx\frac{1}{\hex} (1 - \fkal^2) - \potkal(\eta) \ri] \fkal,
		\eeq
		thanks to Neumann boundary conditions. For $  t \geq -\alpha + \tx\frac{1}{\sqrt{\hex}} -C \eps $, $ \potkal(t) \geq \frac{1}{\hex} $ so that the integrand and therefore the whole expression are negative.
	\end{proof}

As already mentioned in the proof of Proposition \ref{min fone: pro} above, the minimizer $ \fkal $ can as well be identically zero: all the properties \eqref{var eq fal}-\eqref{fal estimate} are indeed satisfied by the trivial function $ f \equiv 0 $, in which case the energy itself would be $ \eonekal = 0 $ by \eqref{eone explicit}. We thus have to single out the proper conditions on $ \alpha $ and $ \hex $ guaranteeing that the minimizer $ \fkal $ is non-trivial, which are of crucial importance since the regime $\fO \equiv 0$ corresponds to applied magnetic fields above $\Hccc$. 

We introduce a linear operator associated with $ \fonekal $, whose spectral analysis allows to investigate the existence of a non-trivial minimizer $ \fkal $: we denote by $ \mue $ the ground state energy of the Schr\"{o}dinger operator
\beq\label{eq:Hkal}
	\hkal : = - \partial_t^2 - \tx\frac{\eps k}{1 - \eps k t} \partial_t + \potkal ( t),
\eeq
with Neumann boundary conditions in $ \mathscr{H} : =  L^2(I_{\eps}, (1 - \eps k  t) \diff  t) $, i.e.,
\beq
	\mue : = \inf_{u \in L^2(I_{\eps}), \lf\| u \ri\|_{ \mathscr{H}} = 1} \bra{u} \hkal \ket{u}_{\mathscr{H}}.
\eeq
Several properties of this Schr\"{o}dinger operator are collected in the Appendix (see Section \ref{1d So: sec}).

The main result about the non-triviality of $ \fal $, which is the analogue of \cite[Proposition 14.2.2]{FH}, is the following:

	\begin{pro}[\textbf{Existence of a non-trivial minimizer $ \fkal $}]
		\label{nontrivial: pro}
		\mbox{}	\\
		For any given $ \eps$  small enough, $ \alpha \in \R $ and $ k \geq 0 $, the minimizer $ \fkal $ is non-trivial, i.e., $ \fkal \neq 0 $, if and only if 
		\beq
			\hex^{-1} > \mue.
		\eeq
		In particular if the condition is satisfied, $ \fkal $ is the unique positive ground state (not normalized) of the Schr\"{o}dinger operator 
		\beq
			\hkal - \tx\frac{1}{\hex} \lf(1 - \fkal^2\ri)
		\eeq
		 with Neumann boundary conditions.
	\end{pro}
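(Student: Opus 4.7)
The plan is to prove the two directions of the equivalence by exploiting the quadratic-plus-quartic structure of $\fonekal$ when expressed in terms of the Schr\"odinger operator $\hkal$. Rewriting the functional as
\[
\fonekal[f] = \bra{f}\hkal\ket{f}_{\mathscr{H}} - \tx\frac{1}{\hex}\|f\|_{\mathscr{H}}^2 + \tx\frac{1}{2\hex}\int_{\ieps}\diff t\,(1-\eps k t)\,f^4,
\]
makes the role of $\mue$ transparent: the sign of $\mue - \hex^{-1}$ controls whether the quadratic part can be made negative.

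For the direction $\hex^{-1} \leq \mue \Rightarrow \fkal \equiv 0$, I would bound $\bra{f}\hkal\ket{f}_{\mathscr{H}} \geq \mue \|f\|_{\mathscr{H}}^2$ by the variational characterization of the ground state. This yields
\[
\fonekal[f] \geq (\mue - \hex^{-1})\|f\|_{\mathscr{H}}^2 + \tx\frac{1}{2\hex}\int_{\ieps}\diff t\,(1-\eps k t)\,f^4 \geq 0,
\]
with equality only for $f\equiv 0$ (the quartic term strictly dominates for any non-trivial $f$). Combined with the uniqueness statement of Proposition \ref{min fone: pro} this gives $\fkal \equiv 0$ and $\eonekal = 0$.

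For the converse $\hex^{-1} > \mue \Rightarrow \fkal \not\equiv 0$, I would exhibit a trial state with strictly negative energy. Let $u_0\in\mathscr{H}$ be the $\mathscr{H}$-normalized ground state of $\hkal$, and test $\fonekal$ on $\lambda u_0$ for a small real parameter $\lambda>0$:
\[
\fonekal[\lambda u_0] = \lambda^2(\mue - \hex^{-1}) + \tx\frac{\lambda^4}{2\hex}\int_{\ieps}\diff t\,(1-\eps k t)\,u_0^4.
\]
Under the assumption $\mue < \hex^{-1}$, the quadratic coefficient is negative, so this is strictly negative for $\lambda$ small enough. Consequently $\eonekal < 0$ and the minimizer provided by Proposition \ref{min fone: pro} cannot be the zero function.

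It remains to characterize a non-trivial $\fkal$ as the unique positive ground state of $\hkal - \hex^{-1}(1-\fkal^2)$. The variational equation \eqref{var eq fal} is precisely $\bigl[\hkal - \hex^{-1}(1-\fkal^2)\bigr]\fkal = 0$, so $\fkal\geq 0$ is a non-trivial eigenfunction with eigenvalue $0$ of this modified Schr\"odinger operator (whose potential only differs from that of $\hkal$ by the non-negative addition $\hex^{-1}\fkal^2$, hence the operator is still self-adjoint on the Neumann domain). Because the coefficients are smooth, the strong maximum principle (or Harnack's inequality) upgrades $\fkal\geq 0$ to $\fkal>0$ in the interior of $\ieps$, and the Perron--Frobenius principle for one-dimensional Schr\"odinger operators implies that any strictly positive eigenfunction corresponds to the ground state and is unique up to a multiplicative constant. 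The main obstacle is therefore less conceptual than bookkeeping: one must verify that $\mue$ and $u_0$ are well defined and that the Neumann quadratic form $\bra{\cdot}\hkal\ket{\cdot}_{\mathscr{H}}$ coincides with the quadratic part of $\fonekal$, which is straightforward once the weight $(1-\eps k t)$ in $\mathscr{H}$ is matched with the one in the functional.
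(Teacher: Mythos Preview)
Your proof is correct and follows essentially the same approach as the paper's: both exploit the decomposition of $\fonekal$ into a quadratic form governed by $\hkal$ plus a quartic term, and both prove the converse direction by testing on a small multiple of the ground state $\phikal$. For the forward direction the paper merely cites the cut-off argument of \cite[Proposition~14.2.2]{FH}, whereas your direct spectral-gap bound $\bra{f}\hkal\ket{f}_{\mathscr{H}} \geq \mue\|f\|_{\mathscr{H}}^2$ is cleaner on the compact interval $\ieps$ and avoids that machinery entirely; otherwise the arguments coincide.
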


	\begin{proof}
		The result can be proved exactly as in \cite[Proposition 14.2.2]{FH}. Indeed using the same cut-off function $ \chi_N $, one can conclude that $ \fkal \equiv 0 $, if $ \hex^{-1} \leq \mue $. The opposite implication follows from the existence of a positive normalized ground state $ \phikal $ of $ \hkal $: picking $ a \phikal $ for some sufficiently small $ a $ as a trial state, we obtain
		\bml{
			\fonekal[a \phikal] = \bra{a \phikal} \hkal  \ket{a \phikal} - \tx\frac{a^2}{\hex} + \frac{a^4}{2\hex} \lf\| \phikal \ri\|_4^4	\\
			 \leq a^2 \lf[ \mue - \tx\frac{1}{\hex} + \frac{a^2}{2\hex} \lf\| \phikal \ri\|_4^4 \ri] < 0 = \fonekal [0],
		}
		because $\mue - \tx\frac{1}{\hex} < 0$ by assumption. This rules out the possibility of the minimizer being identically zero.
	\end{proof}

	We also give a criterion in terms of  $ \muosc $, the lowest eigenvalue of the $\alpha$-shifted 1D harmonic oscillator on the half-line with Neumann boundary conditions: 
	\beq
		\label{eq:hosc}
		\hosc_{\alpha} : = - \partial_{ t}^2 + ( t + \alpha)^2,
	\eeq
	acting on $ L^2(\R^+,\diff  t) $. This criterion is more convenient because it does not depend on $\eps$. It follows from a comparison between $\mue$ and $\muosc$ that we provide in the Appendix, Section \ref{1d So: sec}.

	\begin{cor}[\textbf{Existence of a non-trivial minimizer $ \fkal $, continued}]
		\label{nontrivial: cor}
		\mbox{}	\\
		Let $ 1 < \hex < \theo^{-1} $ and $ \aldown_i(k,\hex), \alup_i(k,\hex) $, $ i = 1,2 $, be defined as in \eqref{al up down}. Then for any $ \al \in (\alup_2,\aldown_1) $ the minimizer $ \fal $ is non-trivial. In the case $k=0$ the minimizer is non trivial if and only if $\hex^{-1}  > \muosc$.
	\end{cor}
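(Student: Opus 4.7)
The plan is to invoke Proposition \ref{nontrivial: pro}, which exactly characterizes non-triviality of $\fkal$ via the sign of $\hex^{-1} - \mue$, and then to convert this spectral condition on $\hkal$ into an $\alpha$-interval by comparing $\mue$ with the $\alpha$-shifted half-line harmonic oscillator eigenvalue $\muosc$.

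First I would handle the half-plane case $k = 0$ (with the convention $\eps = 0$ used in \eqref{eq:1D func half plane} and \eqref{eq:1D half plane vareq}). Here the first-order term in \eqref{eq:Hkal} vanishes and the weight $(1 - \eps k t)$ in $\hkal$ is trivial, so $H_{0,\alpha}$ literally reduces to $\hosca$ on $L^2(\R^+, \diff t)$ and $\mu_{0}(\alpha) = \muosc$. The equivalence $\fOal \not\equiv 0 \Leftrightarrow \hex^{-1} > \muosc$ then follows immediately from Proposition \ref{nontrivial: pro}.

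For general $k \geq 0$ I would invoke the uniform spectral comparison between $\mue$ and $\muosc$ established in the Appendix (Section \ref{1d So: sec}). The thresholds $\aldown_i(k,\hex), \alup_i(k,\hex)$ introduced in \eqref{al up down} are designed so that $\mue < \hex^{-1}$ on $(\aldown_1, \alup_1)$ while $\muosc < \hex^{-1}$ on $(\aldown_2, \alup_2)$ (or the symmetric labeling), the two intervals differing by an $\OO(\eps)$ margin produced by that spectral comparison. The assumption $1 < \hex < \theo^{-1}$ puts $\hex^{-1} \in (\theo,1)$, and since $\min_\alpha \muosc = \theo$ and $\muosc \to \infty$ as $|\alpha| \to \infty$, both intervals are non-empty and bounded. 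For any $\alpha \in (\alup_2, \aldown_1)$ the comparison then yields $\mue < \hex^{-1}$ with room to spare, and Proposition \ref{nontrivial: pro} gives $\fkal \not\equiv 0$.

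The only non-trivial input is thus the uniform spectral estimate $\mue - \muosc = \OO(\eps|\log\eps|^3)$ on any bounded range of $\alpha$. The natural route is to conjugate $\hkal$ by $(1 - \eps k t)^{1/2}$ in the weighted space, which removes the first-order term and puts the problem in a flat $L^2$ space with an effective potential differing from $(t + \alpha)^2$ by $\OO(\eps |\log\eps|^3)$ on $\ieps$, as already recorded in \eqref{potkal est}; the remaining discrepancy from working on $\ieps$ rather than $\R^+$ is $\OO(\eps^\infty)$ thanks to the Agmon-type decay of the oscillator eigenfunctions. Since the gap $\hex^{-1} - \theo$ is a strictly positive constant, this margin is amply sufficient, and the corollary reduces to bookkeeping against the spectral comparison supplied by the Appendix.
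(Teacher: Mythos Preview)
Your approach is essentially the paper's: invoke Proposition \ref{nontrivial: pro} together with the spectral comparison from the Appendix (Corollary \ref{mue min: cor}, in particular \eqref{cond hex}), while the $k=0$ case reduces directly to $\muosc$ as you say. One small correction to your reading of the thresholds: the four values $\aldown_i,\alup_i$ are not two pairs of endpoints, one for $\mue$ and one for $\muosc$; all four delimit the behavior of $\mue$ itself, with $(\alup_1,\aldown_2)$ the guaranteed interval where $\mue<\hex^{-1}$, the outer regions $(-\infty,\aldown_1]\cup[\alup_2,\infty)$ where $\mue\geq\hex^{-1}$, and the short intervals $[\aldown_i,\alup_i]$ of width $o(1)$ the uncertainty zones produced by the $\OO(\eps|\log\eps|^3)$ error in \eqref{mue diff}.
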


	\begin{proof}
		The result is obtained by a direct combination of Corollary \ref{mue min: cor} (see in particular \eqref{cond hex}) with the above Proposition \ref{nontrivial: pro}. The part about the $k=0$ case is contained in \cite[Proposition 14.2]{FH}.
	\end{proof}

We can now show the existence of an optimal phase $ \alk $ minimizing $ \eonekal $ with respect to $ \alpha \in \R $, for any $ \hex \in (1,\theo^{-1}) $:

	\begin{lem}[\textbf{Optimal phase $ \alk $}]
		\label{lem:opt phase}
		\mbox{}	\\
		For any $ 1 < \hex < \theo^{-1} $, $ k \geq 0 $ and $ \ep$ small enough, there exists at least one  $ \alk $ minimizing $ \eonekal $:
		\beq
			\label{eq:optimal energy}
			\inf_{\alpha \in \R} \eonekal = \eone_{k,\alk} =: \eonek.
		\eeq
		Setting $ \fk : = f_{k,\alk} $ we have that $ \fk > 0$ everywhere and 
		\beq
			\label{FH nonlinear}
			\int_{I_{\eps}} \diff  t \: \frac{ t + \alk - \half \eps k t^2}{1 - \eps k t} \fk^2( t) = 0.
		\eeq
	\end{lem}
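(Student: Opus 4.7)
The plan is to produce $\alk$ by showing that $\alpha \mapsto \eonekal$ is a continuous function on $\R$, strictly negative on a nonempty open interval, and identically zero for $|\alpha|$ large, so that its infimum is attained on a compact set. The positivity of $\fk$ and the integral identity will then follow from a maximum principle and a Feynman-Hellmann computation, respectively.

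First I would establish continuity of $\alpha \mapsto \eonekal$ by a standard variational argument: inserting $\fkal$ as a trial function for $\eone_{k,\alpha'}$ and conversely, combined with the uniform $H^1(\ieps)$ bound on $\fkal$ (obtained from \eqref{fal estimate} together with the variational equation \eqref{var eq fal}) for $\alpha$ in a compact set, yields $|\eonekal - \eone_{k,\alpha'}|\to 0$ as $\alpha'\to\alpha$. Next, by Corollary \ref{nontrivial: cor} and the $\mue$--$\muosc$ comparison of Section \ref{1d So: sec}, for $|\alpha|>M$ large enough (independently of $\eps$ and $k$) one has $\mue > 1/b$, hence $\fkal \equiv 0$ by Proposition \ref{nontrivial: pro} and $\eonekal = 0$ by \eqref{eone explicit}. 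Conversely, the assumption $1 < b < \theo^{-1}$ together with Corollary \ref{nontrivial: cor} guarantees that the open interval $(\alup_2,\aldown_1)$ is non-empty and that $\fkal \not\equiv 0$ on it, so $\eonekal<0$ there. Continuity on the compact set $[-M,M]$ then produces at least one global minimizer $\alk$, and setting $\fk:=f_{k,\alk}$ gives in particular $\eonek<0$.

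To obtain $\fk>0$ everywhere on $\ieps$, I would rewrite the variational equation \eqref{var eq fal} as a linear homogeneous second-order ODE
\[
-\fk'' + \frac{\eps k}{1-\eps k t}\fk' + \left(\potk(t) - \frac{1}{b}\bigl(1-\fk^2(t)\bigr)\right)\fk = 0,
\]
with smooth bounded coefficients on $\ieps$. A zero of $\fk$ in the interior of $\ieps$, combined with $\fk\geq 0$ forcing $\fk'$ to vanish there as well, would imply $\fk\equiv 0$ by Cauchy-Lipschitz uniqueness, contradicting $\eonek<0$; a zero at one of the endpoints combined with the Neumann condition from Proposition \ref{min fone: pro} yields the same contradiction.

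For the integral identity \eqref{FH nonlinear}, the family $\alpha \mapsto \fkal$ depends smoothly on $\alpha$ (by an implicit function theorem applied to the variational equation, using uniqueness from the strict convexity in $\rho=f^2$), so variations in $f$ produce no contribution at the minimizer and the Feynman-Hellmann principle gives
\[
\frac{d}{d\alpha}\eonekal = \int_{\ieps}\diff t\,(1-\eps k t)\,\partial_\alpha \potkal(t)\,\fkal^2(t) = 2\int_{\ieps}\diff t\,\frac{t+\alpha-\frac{1}{2}\eps k t^2}{1-\eps k t}\,\fkal^2(t),
\]
which must vanish at $\alpha=\alk$. The main obstacle I anticipate is ensuring that the confinement radius $M$ can be chosen independently of the small parameters $\eps$ and $k$; this is precisely the role of the Appendix comparison between $\mue$ and the half-line spectrum $\muosc$, the latter being a well-understood function of $\alpha$ alone whose coercivity in $|\alpha|$ is known from \cite{FH}.
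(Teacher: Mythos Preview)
Your proposal is correct and follows essentially the same route as the paper: existence of $\alk$ via the dichotomy (from Corollary \ref{mue min: cor} and Proposition \ref{nontrivial: pro}) that $\eonekal=0$ outside a bounded $\alpha$-range and $\eonekal<0$ on a nonempty interval inside it, and then Feynman--Hellmann for \eqref{FH nonlinear}. You supply a few details the paper leaves implicit (continuity of $\alpha\mapsto\eonekal$, and the Cauchy--Lipschitz argument for strict positivity of $\fk$, which the paper instead reads off from Proposition \ref{nontrivial: pro}), but nothing substantively different.
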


	\begin{proof}
		The existence of a minimizer is basically a consequence of Corollary \ref{mue min: cor}: for any $ \hex \in (1,\theo^{-1}) $, one can find four negative values $ \aldown_i(k,\hex) $ and $ \alup_i(k,\hex) $, $ i = 1,2 $, such that (recall that $ \alup_1 < \aldown_2 $)
		\beqn
			\eonekal & = & 0,		\quad	\mbox{for } \al \in (-\infty,\aldown_1] \mbox{ or } \al \in [\alup_2,\infty),	\nonumber	\\\
			\eonekal & < & 0,		\quad	\mbox{for } \al \in (\alup_1,\aldown_2).	\nonumber
		\eeqn
		Obviously this implies the existence of a minimizer $ \alk $. In addition $ \eonekal $ is a smooth function of $ \alpha $ in the interval $ (\alup_1,\aldown_2) $ and studying its derivative with respect to $\al$ yields  (Feynman-Hellmann principle) 
		\beq
			\label{FH alpha}	
			\partial_{\alpha} \eonekal = 2 \int_{I_{\eps}} \diff  t \: \frac{ t + \al - \half \eps k t^2}{1 - \eps k t} \fal^2( t),
		\eeq  
		so that at $ \alk $ \eqref{FH nonlinear} must hold true.
	\end{proof}

In the rest of the paper we are going to use several times, in particular when estimating the cost function, the following pointwise bounds, whose somewhat technical proofs are discussed in the Appendix (Section \ref{sec:app est}) together with other useful estimates. A similar result appeared in \cite[Theorem 3.1]{FHP}, although the constants involved in the estimate there are not uniform in $ \eps $, unlike those involved in the bounds below. Note that the pointwise estimates are formulated only for $ \hex \in (1,\theo^{-1}) $ and $ \al \in (\alup_2,\aldown_1) $ defined in \eqref{al up down}, where we already know that the minimizer is not identically zero.

	\begin{pro}[\textbf{Pointwise estimates for $ \fal $}]
		\label{pro:point est fal}
		\mbox{}	\\
		For any $ 1 < \hex < \theo^{-1} $, $ \al \in (\alup_2,\aldown_1) $, $ k \geq 0 $ and $ \eps \ll 1 $, there exist two positive constants $ c, C > 0 $ independent of $ \eps $ such that
		\beq
			\label{fal point l u b}
			c \: \exp\Big\{ - \tx\frac{1}{2}\big(  t + \sqrt{2} \big)^2 \Big\} \leq \fal( t) \leq C \: \exp\lf\{ - \tx\frac{1}{2} \lf(  t + \al \ri)^2 \ri\},
		\eeq
		for any $  t \in \ie $.
	\end{pro}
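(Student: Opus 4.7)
The plan is to establish both bounds via nonlinear maximum principle comparisons with explicit Gaussian barriers, exploiting the variational equation \eqref{var eq fal} together with the expansion $\potkal(t) = (t+\alpha)^2 + \OO(\eps|\log\eps|^3)$ from \eqref{potkal est}. This gives pointwise control uniform in $\eps$ and in $\alpha$ over the admissible range, which is important for what follows.

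\textbf{Upper bound.} I would introduce the envelope $g_U(t) := C \exp\{-\tfrac{1}{2}(t+\alpha)^2\}$ and verify that it is a strict supersolution to the nonlinear operator from \eqref{var eq fal}. A direct computation yields $-g_U'' + (t+\alpha)^2 g_U = g_U$, and inserting \eqref{potkal est}, together with the drift term $\tfrac{\eps k}{1-\eps k t} g_U'$ which is of size $\OO(\eps|\log\eps|) g_U$ on $I_\eps$, one gets
\[
- g_U'' + \tx\frac{\eps k}{1-\eps k t} g_U' + \potkal\, g_U - \tx\frac{1}{\hex}(1-g_U^2)g_U \;\geq\; \Big( 1 - \tx\frac{1}{\hex} - C\eps|\log\eps|^3 \Big) g_U \;>\; 0,
\]
where we used $\hex > 1$ and $\eps$ small. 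Choosing $C$ so large that $g_U(0) \geq \|\fkal\|_{L^\infty(\ie)}$ (allowed by \eqref{fal estimate}) and $g_U'(c_0|\log\eps|) \leq 0 = \fkal'(c_0|\log\eps|)$, I would examine $w := \fkal - g_U$ at an interior positive maximum: there $w'=0$ and $w'' \leq 0$, and substituting in the two equations, the strict supersolution inequality for $g_U$ combined with $\fkal \geq g_U$ (at the max) and the sign of $1-\fkal^2$ contradict $-w'' \leq 0$. Hence $\fkal \leq g_U$ throughout $\ie$.

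\textbf{Lower bound.} The role of $\sqrt{2}$ is that it bounds $|\alpha|$ uniformly from above on the admissible range $(\alup_2,\aldown_1)$ for $\hex \in (1,\theo^{-1})$, so that $(t+\sqrt{2})^2 \geq (t+\alpha)^2 - C$ pointwise in $t \geq 0$. My plan is in two stages. First, obtain a uniform lower bound $\fkal(t) \geq c_0 > 0$ for $t$ in a neighborhood of the origin: this follows by combining the identity \eqref{eone explicit} with the fact that $\eonekal$ is bounded away from zero uniformly for $\alpha \in (\alup_2,\aldown_1)$ (a consequence of Proposition \ref{nontrivial: pro} and continuity of $\mue$ in $\alpha$), so that $\int_{\ie}(1-\eps k t)\fkal^4 \geq c > 0$; combining this with the upper bound just proved, which forces the mass of $\fkal$ to concentrate in a neighborhood of the origin, and with standard elliptic estimates applied to \eqref{var eq fal}, yields a pointwise lower bound at some $t_0 = \OO(1)$. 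Second, propagate this lower bound with Gaussian decay by comparison with the subsolution $g_L(t) := c \exp\{-\tfrac{1}{2}(t+\sqrt{2})^2\}$ on $[t_0,c_0|\log\eps|]$: since $\fkal$ satisfies a linear Schrödinger equation $-\fkal'' + [\potkal - \tx\frac{1}{\hex}(1-\fkal^2) + \tx\frac{\eps k}{1-\eps k t}\dd_t]\fkal = 0$ with $\fkal$ now fixed, the choice of $\sqrt{2}$ guarantees $-g_L''  + \potkal g_L - \tx\frac{1}{\hex}(1-\fkal^2) g_L \leq 0$ in the decay tail where $\fkal$ is already known to be small, and a maximum-principle argument analogous to the one above applied to $g_L - \fkal$ closes the estimate.

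\textbf{Main obstacle.} The delicate step is the uniform positive lower bound at the reference point $t_0$: one needs quantitative control on how $\fkal(0)$ depends on $\alpha$ and $\eps$, uniformly on the compact subset $[\alup_2,\aldown_1]$. This requires combining the smooth dependence of $\mue$ and of the ground state of $\hkal$ on $\alpha$ (discussed in the Appendix) with the nonlinear fixed-point characterization of $\fkal$ from Proposition \ref{nontrivial: pro}, and is where uniformity of the constants $c, C$ — in contrast to \cite[Theorem 3.1]{FHP} — really needs to be verified carefully.
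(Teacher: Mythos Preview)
Your overall strategy --- Gaussian envelopes as barriers combined with a maximum-principle comparison --- is exactly the paper's approach. The execution differs in two places, and one of them is a genuine gap.

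First, the paper begins by extending $\fkal$ to a $C^2$ supersolution $\tilde f$ on all of $\R^+$ (an explicit continuation past $t_\eps$ is written down), so that both the barrier and $\tilde f$ decay to zero at infinity and the comparison runs on $[0,\infty)$ with clean boundary data. Your argument on the finite interval does not close at the right endpoint: the condition $g_U'(t_\eps)\le 0=\fkal'(t_\eps)$ only gives $w'(t_\eps)>0$, which is perfectly consistent with a positive maximum of $w$ sitting at $t_\eps$, so nothing forces $w(t_\eps)\le 0$. The extension trick is precisely what removes this obstacle, for both the upper and lower bounds.

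Second, for the lower bound the paper compares directly with $\fkal(0)\,e^{-(t+\sqrt{2})^2/2}$, so that $w(0)<\fkal(0)$ is automatic; to control the drift term $\tfrac{\eps k}{1-\eps k t}\fkal'$ in the differential inequality it invokes the gradient bound $\fkal'\le C\fkal$ from Lemma~\ref{fal derivative: lem}. The uniform positivity of $\fkal(0)$ --- which you rightly flag as the delicate point --- is then handled by comparing pointwise with the $\eps$-independent half-line minimizer. Your two-stage route via a mass-concentration argument and subsequent propagation is workable but more circuitous. Finally, your reading of $\sqrt 2$ as a uniform upper bound on $|\alpha|$ is not correct: the admissible interval for $\alpha$ widens toward $(-\infty,0)$ as $b\to 1^+$; in the paper's computation the constant $\sqrt 2$ enters purely through making the differential inequality for $u=w-\tilde f$ have the right sign, not as a bound on $|\alpha|$.
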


We end this section by introducing the potential function associated with $\fk$:
\beq
\label{Fk}
\boxed{\Fk (t) : = 2 \int_0^t \diff \eta \: (1 - \eps k \eta) \fk^2(\eta) \frac{\eta + \alk - \half \eps k \eta^2}{(1 - \eps k \eta)^2}.}
\eeq
The motivation for introducing these objects will become clearer in Section \ref{sec:proof disc} (see also the heuristic discussion in Subsection \ref{sec:sketch}). We collect some of their properties in the following lemma. 

\begin{lem}[\textbf{Properties of the potential function $\Fk$}]
		\label{lem:F prop}
		\mbox{}	\\
		For any $ 1 < \hex < \theo^{-1} $, $ k \geq 0  $ and $ \eps$ sufficiently small, let $ \Fk$ be the function defined in \eqref{Fk}. Then we have
		\beq
			\label{F prop}
			\Fk(t) \leq 0,	\quad \mbox{in } \ieps,	\qquad \Fk(0) = \Fk(\teps) = 0.
		\eeq
		In the case $k=\eps=0$, the equation $\Fk(\teps) = 0$ should be read as $ \lim_{t\to \infty} F_0 (t) = 0 $.
	
	\end{lem}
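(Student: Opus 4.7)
The identity $\Fk(0)=0$ is immediate from the definition, since the integral is taken from $0$ to $0$. For the value at the right endpoint, I would observe that the factors $(1-\eps k \eta)$ and $(1-\eps k \eta)^{-2}$ in the integrand of \eqref{Fk} combine to give
\begin{displaymath}
\Fk(t) = 2 \int_0^t \diff \eta \: f_k^2(\eta) \, \frac{\eta + \alk - \tfrac{1}{2}\eps k \eta^2}{1 - \eps k \eta},
\end{displaymath}
so that $\Fk(\teps)=0$ follows directly from the Feynman--Hellmann identity \eqref{FH nonlinear} satisfied by the optimal phase $\alk$. In the degenerate case $k=\eps=0$, the same computation reads $F_0(t)=2\int_0^t(\eta+\alO)f_0^2(\eta)\diff\eta$, and the corresponding Feynman--Hellmann identity (the $k=0$ version of \eqref{FH nonlinear}, which holds up to $\OO(\eps^\infty)$ corrections coming from replacing $\teps$ by $+\infty$, thanks to \eqref{fal point l u b}) gives $\lim_{t\to\infty}F_0(t)=0$.

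To obtain negativity on $\ieps$, the plan is to study the sign of
\begin{displaymath}
\Fk'(t) = 2 f_k^2(t) \, \frac{t + \alk - \tfrac{1}{2}\eps k t^2}{1 - \eps k t}.
\end{displaymath}
For $t \in \ieps = [0, c_0|\log\eps|]$ and $\eps$ small enough, $1-\eps k t$ is strictly positive, and $f_k^2(t) \geq 0$, so $\Fk'$ has the sign of the downward parabola $P(t) := t + \alk - \tfrac{1}{2}\eps k t^2$. Since $\alk<0$ (this is part of Corollary \ref{nontrivial: cor}/Lemma \ref{lem:opt phase}, which locates $\alk$ in a negative interval, and we have $|\alk| = \OO(1)$ by \eqref{fal point l u b}), $P$ has exactly one zero $t_\star$ inside $\ieps$, with $t_\star = -\alk + \OO(\eps)$, and $P(t)<0$ for $t<t_\star$, $P(t)>0$ for $t_\star<t \leq \teps$ (the second root of $P$ lies at $\sim 2/(\eps k)\gg\teps$).

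Consequently $\Fk$ is strictly decreasing on $[0,t_\star]$ and strictly increasing on $[t_\star,\teps]$. Combined with the boundary values $\Fk(0)=\Fk(\teps)=0$ established in the first step, this forces $\Fk\leq 0$ throughout $\ieps$, with equality only at the endpoints. The same monotonicity argument applies in the limiting case $k=\eps=0$, with the interval $[t_\star,\teps]$ replaced by $[t_\star,+\infty)$ and the endpoint condition replaced by $\lim_{t\to\infty}F_0(t)=0$. I do not anticipate any real obstacle; the only points requiring minor care are the non-positivity of $\alk$ (which is standard in the range $1<b<\theo^{-1}$) and, in the case $k=\eps=0$, the fact that the Feynman--Hellmann identity is recovered in the limit, both of which are already covered by the preceding results.
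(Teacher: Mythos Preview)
Your proposal is correct and follows essentially the same argument as the paper: both use the definition for $\Fk(0)=0$, the Feynman--Hellmann identity \eqref{FH nonlinear} for $\Fk(\teps)=0$, and then analyze the sign of $\Fk'$ via the numerator $t+\alk-\tfrac12\eps k t^2$ (using $\alk<0$ and strict positivity of $\fk$) to conclude that $\Fk$ first decreases then increases, hence stays nonpositive between its two zero endpoints.
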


	\begin{proof}
		We observe that $ \Fk(\teps) = 0 $ is simply \eqref{FH nonlinear}, the first order condition for $ \alk $ being a minimizer of $ \eonekal $. The fact that $\Fk(0) = 0$ immediately follows from the definition. 

		On the other hand 
		\beq
			\Fk^{\prime}( t) =  2 \frac{t + \alk - \half \eps k t^2}{1 - \eps k t} \fk^2(t),
		\eeq
		and thanks to the negativity of $ \alk $ and positivity of $ \fk $, we obtain that $ \Fk^{\prime}( t) \leq 0 $ in a neighborhood of the origin. Moreover $ \Fk^{\prime} $ can vanish (again by strict positivity of $ \fk $) only at a single point $  t_k $ where $ t_k + \alk - \half \eps k t_k^2 = 0 $, i.e.,
		\bdm
			t_k = |\alk| + \OO(\eps),
		\edm
		which has then to be a minimum point for $ \Fk $. For $  t >  t_k $, $ \Fk( t) $ is increasing but since $  \Fk(\teps) = 0 $, it also remains negative for any $  t \in \ie $.
	\end{proof}

\subsection{The cost function in the half-plane case}\label{sec:model half plane}

The cost function that will naturally appear in our investigation of \eqref{eq:GL func bound} for general domains is
	\beq
		\label{K0}
		\boxed{\KO(t) : = \fO^2(t) + \FO(t) =  \fO^2(t) + 2 \int_0^{ t} \diff \eta \: \left(\eta + \alO \right)\fO^2(\eta).}
	\eeq

The result we aim at is the following: 

\begin{pro}[\textbf{Positivity of the cost function in the half-plane case}]\label{pro:K0 positive}\mbox{}\\
Let $ \KO(t) $ be the function defined in \eqref{K0}. For any $ 1 \leq \hex < \theo^{-1} $, it holds
\beq
\label{eq:K0 positive}
\KO(t) \geq 0, \quad \mbox{ for any } t \in \R ^+.
\eeq
\end{pro}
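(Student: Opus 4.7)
The plan is to establish monotonicity of $\KO$ on $\R^+$. Combined with the boundary values $\KO(0)=\fO^2(0)>0$ (since $\FO(0)=0$) and $\lim_{t\to+\infty}\KO(t)=0$ (using $\lim_{t\to+\infty}\FO(t)=0$ from Lemma \ref{lem:F prop} together with the Gaussian decay \eqref{fal point l u b} of $\fO$), this immediately yields $\KO\ge 0$ on $\R^+$.

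For the monotonicity I would introduce the auxiliary function
\[
h(t) := \fO'(t) + (t+\alO)\fO(t),
\]
so that a direct computation gives $\KO'(t) = 2\fO(t)\,h(t)$. Since $\fO>0$ on $\R^+$ by Proposition \ref{pro:point est fal}, it is enough to prove $h\le 0$. Differentiating $h$ once and substituting the Euler-Lagrange equation \eqref{eq:1D half plane vareq} to eliminate $\fO''$ (using also $\fO' = h-(t+\alO)\fO$ to absorb the mixed $(t+\alO)\fO'$ term) produces the first-order linear ODE
\[
h'(t) - (t+\alO)\,h(t) = \left(1 - \tx\frac{1}{b} + \tx\frac{\fO^2(t)}{b}\right)\fO(t).
\]

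Multiplying by the integrating factor $\mu(t):= e^{-(t+\alO)^2/2}$ (which satisfies $\mu'/\mu = -(t+\alO)$) turns this into the exact derivative $(h\mu)'(t) = G(t)$, where
\[
G(t) := \left(1 - \tx\frac{1}{b} + \tx\frac{\fO^2(t)}{b}\right)\fO(t)\,\mu(t).
\]
The crucial observation is that $G\ge 0$ on $\R^+$ \emph{precisely} when $b\ge 1$, which is the hypothesis of the proposition. I would then integrate this exact derivative from $T$ to $+\infty$: using Proposition \ref{pro:point est fal} (and a matching Gaussian bound on $\fO'$, easily deduced from the variational equation) to check that $h(T)\mu(T)\to 0$ as $T\to+\infty$, one obtains
\[
h(T)\mu(T) = -\int_T^{+\infty} G(\eta)\,d\eta \le 0,
\]
so that $h\le 0$ on $\R^+$, as desired. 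Setting $T=0$ in the same formula recovers $\alO\fO(0)\mu(0) = -\int_0^{+\infty} G(\eta)\,d\eta$, which is just the Feynman-Hellmann identity \eqref{FH nonlinear} (with $k=0$) in weighted form, providing a built-in consistency check.

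The main obstacle is conceptual rather than computational: identifying the auxiliary function $h$ which both controls the sign of $\KO'$ and satisfies a first-order equation with a source of definite sign. Once this is in place, the method is sharp --- the coefficient $1-1/b$ of the source forces exactly the hypothesis $b\ge 1$, and for $b<1$ the source changes sign and the argument collapses. This is in line with the physical picture recalled in the introduction, according to which $b<1$ corresponds to a bulk superconductivity regime where the half-plane reduction encoded in $\foneO$ is no longer appropriate.
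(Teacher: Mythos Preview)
Your argument is correct and takes a genuinely different route from the paper's. The paper first derives an alternative pointwise expression for $\FO$ (Lemma~\ref{lem:F0 alternate}) via integration by parts against the variational equation, obtaining
\[
\KO(t) = \Big(1-\tfrac{1}{b}\Big)\fO^2(t) - {\fO'}^2(t) + (t+\alO)^2\fO^2(t) + \tfrac{1}{2b}\fO^4(t),
\]
and then runs a minimum argument: if $\KO$ had an interior negative minimum at $t_0$, the critical-point condition $\fO'(t_0)=-(t_0+\alO)\fO(t_0)$ substituted into this formula would give $\KO(t_0)=(1-\tfrac{1}{b})\fO^2(t_0)+\tfrac{1}{2b}\fO^4(t_0)\ge 0$, a contradiction. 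You instead prove the stronger fact that $\KO$ is monotone nonincreasing on $\R^+$, by writing $\KO'=2\fO h$ and showing $h\le 0$ via an integrating-factor argument for the first-order ODE satisfied by $h$. Both proofs ultimately hinge on the same nonnegative source $(1-\tfrac{1}{b}+\tfrac{1}{b}\fO^2)\fO$, so the hypothesis $b\ge 1$ enters identically; your approach yields monotonicity as a bonus, while the paper's alternative formula for $\FO$ is what carries over (with curvature error terms) to the disc case in Proposition~\ref{pro:K positive}. One small quibble: the identity you obtain at $T=0$ is not literally the Feynman--Hellmann relation~\eqref{FH nonlinear} in weighted form, though it is of course consistent with it; this does not affect the proof.
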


\begin{rem}(Extreme regime $ b \to \theo^{-1} $)	\\
A simpler computation can be performed in the case where $b$ converges to $\theo ^{-1}$. In this case it is known (see \cite{FHP} for further details) that $\fO$ is approximately proportional to the first eigenfunction of the harmonic oscillator \eqref{eq:hosc}, with $\alpha_0 = - \sqrt{\theo} $, the minimizer of the oscillator ground state energy (see \eqref{eq:min muosc}). Replacing $\fO$ by this function and following the steps below, one obtains a similar result with a simpler proof, since the nonlinearity in \eqref{eq:1D half plane vareq} can be neglected. This can give an idea of the mechanism at work, but is certainly not sufficient for the proof of our main results in the whole regime \eqref{eq:b condition}.

\end{rem}

The first step in the proof of Proposition \ref{pro:K0 positive} is an alternative expression for the potential function~$\FO$:

\begin{lem}[\textbf{Alternative expression of $\FO$}]\label{lem:F0 alternate}\mbox{}\\
For any $t\in \R ^+$, it holds
\begin{equation}\label{eq:F0 alternate}
\FO (t) = - {\fO '}^2 (t)  + \left( t + \alO \right) ^2 \fO^2 (t)  - \frac{1}{\hex} \fO ^2 (t) + \frac{1}{2\hex} \fO^4 (t) . 
\end{equation}
\end{lem}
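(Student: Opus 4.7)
The plan is to derive the identity from a Pohozaev-type manipulation of the variational equation satisfied by $\fO$. Recall that in the case $k=0$, $\eps=0$ the minimizer satisfies
\[
-\fO'' + (t+\alO)^2 \fO = \tx\frac{1}{\hex}(1-\fO^2)\fO,
\]
together with the Neumann boundary condition $\fO'(0)=0$. My first step is to multiply this ODE by $2 \fO'$ and to recognize every term as a total derivative in $t$, with a single leftover coming from the explicit $t$-dependence of the harmonic potential. A direct computation gives
\[
\frac{d}{dt}\Bigl\{-{\fO'}^2 + (t+\alO)^2 \fO^2 - \tx\frac{1}{\hex}\fO^2 + \tx\frac{1}{2\hex}\fO^4\Bigr\} = 2(t+\alO)\fO^2 = \FO'(t),
\]
where the last equality is just the definition of $\FO$ (specialised to $k=\eps=0$).

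Next, I would integrate this identity from $0$ to $t$. The Neumann condition $\fO'(0)=0$ eliminates one boundary contribution and yields
\[
\FO(t) = -{\fO'}^2(t) + (t+\alO)^2 \fO^2(t) - \tx\frac{1}{\hex}\fO^2(t) + \tx\frac{1}{2\hex}\fO^4(t) - C,
\]
with
\[
C := \Bigl(\alO^2 - \tx\frac{1}{\hex}\Bigr)\fO^2(0) + \tx\frac{1}{2\hex}\fO^4(0).
\]
The claimed identity \eqref{eq:F0 alternate} amounts to showing $C=0$.

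To close the argument, I would evaluate the identity in the limit $t\to+\infty$. The left-hand side tends to $\FO(+\infty)=0$ by the Feynman--Hellmann relation at the optimal phase $\alO$, i.e.\ \eqref{FH nonlinear} read in the case $k=0$, $\eps=0$. For the right-hand side, the pointwise Gaussian bound \eqref{fal point l u b} together with standard derivative estimates for $\fO$ (easily obtained by bootstrapping the ODE, since $(t+\alO)^2-\tx\frac{1}{\hex}$ is coercive at infinity) ensures that $\fO(t)$, $\fO'(t)$ and $(t+\alO)\fO(t)$ all decay faster than any polynomial. Consequently the bracketed expression vanishes as $t\to+\infty$, which forces $C=0$ and yields exactly \eqref{eq:F0 alternate}. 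There is no real obstacle: the only subtlety is noting that one obtains a single free constant $C$ from the Pohozaev identity and must fix it through the optimality of $\alO$, rather than through the boundary value at $t=0$ alone.
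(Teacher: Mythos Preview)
Your proof is correct and follows essentially the same route as the paper: both derive the identity up to a boundary constant (you via the Pohozaev trick of multiplying the ODE by $2\fO'$, the paper via integration by parts followed by substitution of the ODE) and then eliminate that constant by invoking $\FO(+\infty)=0$ together with the decay of $\fO$ and $\fO'$ at infinity. Your organization is arguably a bit cleaner, but the substance is identical.
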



\begin{proof}
We first write
\[
2 \int_0^{ t} \diff \eta \: \left(\eta + \alO \right)\fO^2(\eta) = \int_0 ^t \diff \eta \: \fO^2(\eta) \: \dd_\eta \left( \eta + \alO\right)  ^2
\] 
and integrate by parts to obtain 
\[
\FO (t) = - \alO ^2 \fO (0) ^2 + \left( t + \alO\right) ^2 \fO (t) ^2 - 2 \int_0 ^t \diff \eta \: \left( \eta + \alO\right)  ^2 \fO (\eta) \fO '(\eta),
\]
which turns into
\[
\FO (t) = - \alO ^2 \fO ^2 (0) + \left( t + \alO\right) ^2 \fO ^2 (t) - 2 \int_0 ^t \diff \eta \: \left( \fO ^{\prime \prime} (\eta) + \frac{1}{\hex} \fO (\eta) - \frac{1}{\hex} \fO  ^3 (\eta)\right)   \fO '(\eta) 
\]
after using the variational equation \eqref{eq:1D half plane vareq} to replace the $\left( \eta + \alO\right)  ^2 \fO (\eta)$ term in the integral. We then use the trivial identities 
\begin{equation*}
\int_0 ^t \diff \eta \: f f' = \frac12 \left[ f ^2 \right]_0 ^t, \quad \int_0 ^t \diff \eta \: f ^{\prime \prime} f' = \frac12 \left[ (f') ^2 \right]_0 ^t,  \quad \int_0 ^t \diff \eta \: f ^3 f' = \frac{1}{4} \left[ f ^4 \right]_0 ^t  
\end{equation*}
and deduce (using the Neumann boundary condition satisfied by $\fO$ at the origin)
\[
\FO (t) = - {\fO '}^2 (t)  + \left( t + \alO \right) ^2 \fO^2 (t)  - \tx\frac{1}{\hex} \fO ^2 (t) + \frac{1}{2\hex} \fO ^4 (t) - \alO ^2 \fO ^2 (0) + \frac{1}{\hex} \fO ^2 (0) - \frac{1}{2\hex} \fO ^4 (0).
\]
To obtain the final formula, we recall that (this is \eqref{FH nonlinear} for $k = 0$ and $\eps = 0$)
\[
F_0(+\infty) = 2 \int_0^{ +\infty } \diff \eta \: \left(\eta + \alO \right)\fO^2(\eta) = 0,
\]
so that, by the decay of $ \fO $ and $ \fO' $ at $ + \infty $, which can be obtained combining the pointwise estimates of Proposition \ref{pro:point est fal} with \eqref{eq:fOal derivative}, we deduce
\[
- \alO ^2 \fO^2 (0)  + \tx\frac{1}{\hex} \fO ^2 (0) - \frac{1}{2\hex} \fO^4 (0) = \disp\lim_{t\to + \infty} \left[ - (t + \alO) ^2 \fO ^2 (t)  + \tx\frac{1}{\hex} \fO ^2 (t) - \frac{1}{2\hex} \fO ^4 (t) \right]= 0,
\]
and the proof is complete.
\end{proof}

We now complete the

\begin{proof}[Proof of Proposition \ref{pro:K0 positive}]
%
Lemma \ref{lem:F0 alternate} tells us that
\begin{equation}\label{eq:K0 alternate}
\KO (t) = \left( 1 - \frac{1}{\hex} \right) \fO  ^2(t) - {\fO '} ^2 (t) + \left( t + \alO \right) ^2 \fO ^2 (t) + \frac{1}{2\hex} \fO ^4  (t).
\end{equation}
Using the Neumann boundary condition, the decay of $\fO$ and its derivative and the assumption $b\geq1$, we have 
\[
\KO (0) =  \left(1- \frac{1}{\hex} \right) \fO ^2 (0) + \alO ^2 \fO (0) ^2 +  \frac{1}{2\hex} \fO^4 (0)  \geq 0
\]
and 
$$\lim_{t\to +\infty} \KO (t)= 0,$$
so if $\KO$ became negative somewhere in $\R ^+$, it should have a global minimum at some point $ t_0 > 0 $. Let us then compute the derivative of $\KO$: by the definition \eqref{K0}
\[
\KO'(t) = 2  \fO (t) \fO ' (t) + 2 \left( t + \alO\right) \fO ^2 (t),
\]
so that at any critical point $ t_0 $ of $\KO$ we must have
\[
\fO '(t_0) = - \left( t + \alO\right) \fO (t_0)  
\]
because $\fO$ is strictly positive. Plugging this into \eqref{eq:K0 alternate} we find that at any critical point $ t_0 $ of $\KO$, it also holds 
\[
\KO (t_0) =  \left( 1 - \frac{1}{\hex} \right) \fO ^2(t_0) + \frac{1}{2\hex} \fO ^4 (t_0), 
\]
which is clearly positive when $\hex \geq 1$. We thus conclude that the minimum of $\KO$ must be positive, which ends the proof.
\end{proof}

\subsection{The cost function in the disc case}\label{sec:model disc}

We now investigate the properties of the cost function associated with \eqref{eq:1D func disc}. The argument is essentially a perturbation of the one we gave before for the case $k=0$, but, due to the presence of a non-zero curvature $k$, the positivity property we are after is harder to prove. Actually we are only able to prove the desired result in some subregion of $\ieps$ given by
	\beq
		\label{annb}
		\annbk : = \lf\{ t \in \ieps \: : \fk\lf(t\ri) \geq |\log\eps|^3 \fk(\teps) \ri\},
	\eeq
	which is an interval in the $  t $ variable, i.e.,
	\beq
		\label{annb bis}
		\annbk = [0, \btik],
	\eeq
	for some $ \btik  \leq \teps $. Indeed, exploiting the upper bound \eqref{fal point l u b} on $ \fk $, it can easily be seen that
	\beq
		\fk(\teps) = \OO(\eps^{\infty}),
	\eeq
	so that the equality $  \fk( t) = |\log\eps|^3 \fk(\teps) $ can only be satisfied by some $  t \gg 1 $, i.e., by Proposition~\ref{min fone: pro}, in the region where $ \fk $ is monotonically decreasing. Therefore $ \btik $ is unique and the lower bound \eqref{fal point l u b} also implies
	\bdm
 		c \exp \lf\{ - \big( \btik + \sqrt{2} \big)^2 \ri\} \leq C |\log\eps|^3 \exp \lf\{ - \lf(\teps+ \alk \ri)^2 \ri\},
	\edm
	which yields
	\beq
		\label{bxi}
		\btik \geq \teps - C \log|\log\eps|  = c_0 |\log\eps| \lf( 1 - \OO\lf( \tx\frac{\log|\log\eps|}{|\log\eps|} \ri) \ri).
	\eeq
	For further convenience we also introduce the constant $ \keps $ explicitly given by
	\beq
		\label{keps}
		\keps : = \lf[ \potk(\teps) - \tx\frac{1}{\hex} \lf(1 - \fk^2(\teps) \ri) \ri] \fk^2(\teps).
	\eeq
	We do not emphasize its dependence on $k$ in the notation. Note that by the decay \eqref{fal point l u b}, we immediately have 
	$$ 0 \leq \keps = \OO(\eps^{\infty}) .$$

	Now we introduce the cost function whose lower estimate is our main ingredient and prove its positivity in $ \annbk $:
	\begin{align}
		\label{K}
		\Kk(t) &= (1 -  \de) \fk^2(t) + \Fk(t)  \nonumber
		\\ &= (1 -  \de)  \fk^2(t) + 2 \int_0^{ t} \diff \eta \: \frac{\eta + \alk - \half \eps k \eta^2}{1 - \eps k \eta} \fk^2(\eta),
	\end{align}
	where $ \de $ is a parameter satisfying
	\beq
		\label{eq:de}
		0 < \de \leq C |\log\eps|^{-4},	\quad	\mbox{as } \eps \to 0,
	\eeq 
	that will be adjusted in the sequel of the paper and help us in proving Theorem \ref{theo:Pan}. To get to the main point, one can simply think of the case $\de=0$, which is sufficient for the energy estimate of Theorem \ref{theo:disc energy}.

	\begin{pro}[\textbf{Positivity of the cost function in the disc case}]
		\label{pro:K positive}
		\mbox{}	\\
		For any $ \de \in \R^+ $ satisfying \eqref{eq:de}, $ 1 < \hex < \theo^{-1} $, $ k > 0 $ and $ \eps$ sufficiently small, let $ \Kk(t) $ be the function defined in \eqref{K} and $\annbk$ the interval \eqref{annb}. Then one has
		\beq
			\label{eq:K positive}
			\Kk(t) \geq 0, \quad \mbox{ for any } t \in \annbk.
		\eeq
	\end{pro}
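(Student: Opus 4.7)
\medskip

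\noindent\textit{Proof proposal.}
The plan is to adapt the scheme used for Proposition \ref{pro:K0 positive} in the half-plane case, keeping careful track of the perturbative terms coming from the curvature $k$ and the weight $(1-\eps k t)$, and making use of the boundary condition $\Fk(\teps)=0$ from Lemma \ref{lem:F prop} in place of the limit at infinity used in the $k=0$ case. The first step is to establish an alternative expression for $\Fk$ mirroring Lemma \ref{lem:F0 alternate}. Writing $\pha_k(t) := (t+\alk-\tfrac{1}{2}\eps k t^2)/(1-\eps k t)$, so that $\potk = \pha_k^{2}$, a direct computation yields the identity $\potk'(t) = 2\pha_k(t) + \tfrac{2\eps k}{1-\eps k t}\potk(t)$. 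Substituting this into the definition \eqref{Fk} and integrating by parts, then invoking the variational equation \eqref{var eq fal} to replace $\potk \fk$, I would obtain
\[
\Fk(t) = \potk(t)\fk^{2}(t) - (\fk')^{2}(t) - \tx\frac{1}{b}\fk^{2}(t) + \tx\frac{1}{2b}\fk^{4}(t) + C_k + R_k(t),
\]
where $C_k$ collects the boundary terms at $t=0$ (involving $\fk^{2}(0)$, $\fk^{4}(0)$ and $\potk(0)=\alk^{2}$, via the Neumann condition $\fk'(0)=0$) and $R_k(t) = 2\eps k\int_{0}^{t}\tfrac{(\fk')^{2}-\potk \fk^{2}}{1-\eps k\eta}\,\diff\eta$ is a remainder of size $\OO(\eps)$ by standard elliptic estimates on \eqref{var eq fal}.

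Second, the condition $\Fk(\teps)=0$ together with the Neumann condition $\fk'(\teps)=0$ pins down the constant $C_k$. After a small computation using the definition \eqref{keps} and $\keps=\OO(\eps^{\infty})$, one finds $C_k = -R_k(\teps) + \OO(\eps^{\infty})$.

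Third, I would substitute the resulting identity into $\Kk(t)=(1-\de)\fk^{2}(t)+\Fk(t)$ and analyse the critical points in the interior of $\annbk$. Setting $\Kk'(t_{0})=0$ gives $(1-\de)\fk'(t_{0})+\pha_k(t_{0})\fk(t_{0})=0$, hence $(\fk')^{2}(t_{0}) = \potk(t_{0})\fk^{2}(t_{0})/(1-\de)^{2}$, so that
\[
\Kk(t_{0}) = \lf[\,1-\tx\frac{1}{b}-\de-\tx\frac{2\de-\de^{2}}{(1-\de)^{2}}\potk(t_{0})\ri]\fk^{2}(t_{0}) + \tx\frac{1}{2b}\fk^{4}(t_{0}) + R_k(t_{0}) - R_k(\teps) + \OO(\eps^{\infty}).
\]
For $t_{0}\in\annbk$ one has $t_{0}\leq \btik \leq \teps = c_{0}|\log\eps|$, so the bound $\potk(t_{0})\leq C|\log\eps|^{2}$ is available; combined with $b>1$ and the hypothesis $\de = \OO(|\log\eps|^{-4})$ from \eqref{eq:de}, this forces the bracket to be bounded below by a positive constant close to $1-1/b$. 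The boundary values are handled separately: $\Kk(0)=(1-\de)\fk^{2}(0)\geq 0$ follows from $\Fk(0)=0$, while the value at $\btik$ is controlled by a direct use of Lemma \ref{lem:F prop} and the definition \eqref{annb} of $\annbk$.

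The main obstacle will be the comparison between the $\OO(\eps)$ remainder $R_k(t_{0})-R_k(\teps)$ and the leading positive term $c\,\fk^{2}(t_{0})$, because $\fk^{2}(t_{0})$ can become super-exponentially small as $t_{0}$ approaches $\btik$. Here the definition of $\annbk$ is crucial: by restricting to $\fk(t_{0})\geq |\log\eps|^{3}\fk(\teps)$ and using the sharp pointwise bounds of Proposition \ref{pro:point est fal} together with the decay of $\fk'$ in the tail (which can be extracted from the variational equation by a comparison argument with the harmonic oscillator), one can localise $|R_k(t_{0})-R_k(\teps)|$ on the interval $[t_{0},\teps]$ and bound it by a quantity that is quantitatively smaller than the main term $(1-1/b)\fk^{2}(t_{0})/2$ on $\annbk$. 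This last estimate is the place where the specific factor $|\log\eps|^{3}$ appearing in \eqref{annb}, and hence the precise choice of $c_{0}$, enters, and is the single delicate step of the argument.
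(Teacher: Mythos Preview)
Your overall strategy is essentially that of the paper: derive an alternative expression for $\Fk$ via integration by parts and the variational equation, locate interior critical points, and control the curvature remainder $R_k(t_0)-R_k(\teps)$. The gap is in your treatment of the right endpoint of $\annbk$.

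You propose to work on $\annbk=[0,\btik]$ and check endpoint values directly. The value at $t=0$ is indeed $(1-\de)\fk^{2}(0)\geq 0$, but your claim that $\Kk(\btik)$ is ``controlled by a direct use of Lemma~\ref{lem:F prop} and the definition of $\annbk$'' does not hold up. From $\Fk(\teps)=0$ and the monotonicity of $\fk$ one gets only $|\Fk(\btik)|\leq C|\log\eps|\,(\teps-\btik)\,\fk^{2}(\btik)$; since $\teps-\btik\sim\log|\log\eps|$ by \eqref{bxi}, this exceeds $(1-\de)\fk^{2}(\btik)$ by a logarithmic factor, so the endpoint check fails. Switching to the full interval $I_{\eps}$ and checking all critical points there does not help either: your ``$\OO(\eps^{\infty})$'' term conceals a contribution $-\keps\sim -|\log\eps|^{2}\fk^{2}(\teps)$, which dominates $(1-1/b)\fk^{2}(t_{0})$ whenever $t_{0}$ lies beyond $\btik$, and you cannot rule out such critical points a priori.

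The paper sidesteps this by changing the auxiliary function. Instead of $\Kk$ it studies
\[
\tilde{K}(t):=(1-|\log\eps|^{-3})\fk^{2}(t)+\Fk(t)+\keps
\]
on all of $I_{\eps}$. Both endpoint values $\tilde{K}(0)$ and $\tilde{K}(\teps)$ are then trivially positive (using $\Fk(0)=\Fk(\teps)=0$), and the shift by $\keps$ exactly cancels the bad boundary contribution at $\teps$, so the critical-point analysis you outlined goes through on the full interval. Once $\tilde{K}\geq 0$ on $I_{\eps}$ is known, the desired bound $\Kk\geq 0$ on $\annbk$ follows: the definition \eqref{annb} gives $\fk^{2}(t)\geq|\log\eps|^{6}\fk^{2}(\teps)$ there, which combined with $\keps\leq C|\log\eps|^{2}\fk^{2}(\teps)$ and $\de\leq C|\log\eps|^{-4}$ yields the result. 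This is where the factor $|\log\eps|^{3}$ in \eqref{annb} is actually used---not in the remainder estimate as you suggest.

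For the remainder $R_k(t_0)-R_k(\teps)$ the paper uses a sign argument rather than brute force: it introduces $g(t):=(1-|\log\eps|^{-3})\fk'(t)+A(t)\fk(t)$ with $A(t)=(t+\alk-\tfrac12\eps k t^{2})/(1-\eps k t)$, shows via a differential inequality that $g\geq 0$ on $[t_{0},\teps]$ whenever $t_{0}\geq 2|\alk|$, and factors the integrand as $[A\fk-\fk'][g+|\log\eps|^{-3}\fk']$ to extract $\geq -C\fk^{2}(t_{0})$. Your cruder proposal based on the pointwise gradient bound $|\fk'|\leq C|\log\eps|^{3}\fk$ would likely also suffice here, since the whole term carries a prefactor $\eps k$; but it is the endpoint issue above, not this remainder, that is the real obstruction in your sketch.
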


The proof of this lower bound is rather technical because our main results require that the complement of the region where the positivity  property holds be one where the density $\fk$ is \emph{extremely} small. Indeed, note that using Proposition \ref{pro:point est fal}, $\fk$ is $\OO(\eps ^{\infty})$ outside of $\annbk$, and we do use this fact later in the paper. If we were allowed to work in a region where a stronger lower bound to $\fk$ holds (say $|\log \eps|$ to some negative power), the proof would be essentially identical to that of Proposition \ref{pro:K0 positive}, with some additional naive bounds.  

	\begin{proof}[Proof of Proposition \ref{pro:K positive}]
		We are going to prove that for any $ t \in \ieps $ one has
		\beq
			\label{cost funct lb}
			 \fk^2( t) + 2 \int_0^{ t} \diff \eta \: \frac{\eta + \alk - \half \eps k \eta^2}{1 - \eps k \eta} \fk^2(\eta) \geq |\log\eps|^{-3} \fk^2\lf( t\ri) - \keps.
		\eeq
		Therefore inside $ \annbk $ we obtain
		\bml{
			\Kk(t) \geq |\log\eps|^{-3}\lf( 1 - \de |\log\eps|^3  \ri) \fk^2\lf(t\ri) - \keps \geq |\log\eps|^3 \lf( 1 -  \de |\log\eps|^3 \ri)  \fk^2( \teps) - \keps 	\\
			\geq |\log\eps|^3 \lf( 1 - \OO(|\log\eps|^{-1}) \ri)  \fk^2( \teps) \geq 0.
		}
		In the last line we have made use of the simple bound $ \keps \leq C |\log\eps|^2 \fk^2(\teps) $ that follows directly from the definition \eqref{keps} and Proposition \ref{pro:point est fal}. Note also that by assumption \eqref{eq:de} $ \de |\log\eps|^3 \leq C |\log\eps|^{-1} \ll 1 $.

		In order to prove \eqref{cost funct lb} we set 
		\beq
			 \kt( t) : = (1 - |\log\eps|^{-3}) \fk^2( t) + 2 \int_0^{ t} \diff \eta \: \frac{\eta + \alk - \half \eps \eta^2}{1 - \eps \eta} \fk^2(\eta) + \keps
		\eeq	
		and prove that $ \kt $ is positive in $ \ieps $. For this purpose we note that, by \eqref{F prop},
		\beq
			\label{K boundary values}
			\kt(0) = (1 - |\log\eps|^{-3}) \fk^2(0) + \keps > 0, \qquad \kt(\teps) = (1 - |\log\eps|^{-3}) \fk^2(\teps) + \keps > 0,
		\eeq
		so that, if it becomes negative for some $  t $, it must be $ 0 <  t < \teps$. Let us then compute the derivative of $ \kt $ to find its minimum points:
		\beq
			\label{eq:cost derivative}
			\kt^{\prime}( t) = 2 (1 - |\log\eps|^{-3}) \fk( t) \fk^{\prime}( t) + 2 \frac{ t + \alk - \half \eps k t^2}{1 - \eps k t} \fk^2( t).
		\eeq
		Let $  t_0 $ be a point where $ \kt $ reaches its global minimum and let assume that it is in the interior of $ \ie $, otherwise there is nothing to prove. Then it must be $ \kt^{\prime}( t_0) = 0 $, i.e., thanks to the strict positivity of $ \fk $,
		\beq
			\label{eq:cost condition 1}
			(1 - |\log\eps|^{-3}) \fk^{\prime}( t_0) = - \frac{ t_0 + \alk - \half \eps k t_0^2}{1 - \eps k t_0} \fk( t_0),
		\eeq
		and thus (recall \eqref{pot})
		\beq
			\label{eq:cost condition 2}
			(1 - |\log\eps|^{-3})^2 {\fk^{\prime}}^2  ( t_0) = \potk(t_0)  \fk ^2 ( t_0).
		\eeq
		On the other hand using the identity
		\beq
			\label{pot derivative}
			\potk^{\prime}(t) = 2 \frac{t + \alk - \half k \eps t^2}{1 - \eps k t} + \frac{2 \eps k \potk(t)}{1 - \eps k t},
		\eeq
		in combination with the variational equation \eqref{var eq fal}, we get
		\bml{	
 			\label{cost 3}		
 			2 \int_0^{ t} \diff \eta \: \frac{\eta + \alk - \half \eps k \eta^2}{1 - \eps k \eta} \fk^2(\eta) = \lf[ \potk \fk^2 \ri]_{0}^{ t} - 2 \int_0^ t \diff \eta \: \potk(\eta) \fk(\eta) \fk^{\prime}(\eta) - 2 \eps k \int_0^ t \diff \eta \: \frac{\potk(\eta)}{1 - \eps k \eta} \fk^2(\eta)	\\
			= \lf[ \potk \fk^2  - \tx\frac{1}{2 \hex} \lf( 2 \fk^2 - \fk^4 \ri) \ri]_{0}^{ t} - {\fk^{\prime}}^2( t) + 2 \eps k \int_0^ t \diff \eta \: \frac{1}{1 - \eps k \eta} \lf( {\fk^{\prime}}^2 -  \potk(\eta) \fk^2 \ri).
		} 
		Hence
		\bmln{
 			\kt( t) = (1 - |\log\eps|^{-3}) \fk^2( t) + \lf[ \potk (\eta) \fk^2 - \tx\frac{1}{2 \hex} \lf( 2 \fk^2 - \fk^4 \ri) \ri]_{0}^{ t} - {\fk^{\prime}}^2( t)	\\
			 + 2 \eps k \int_0^ t \diff \eta \: \frac{1}{1 - \eps k \eta} \lf( {\fk^{\prime}}^2 -  \potk(\eta) \fk^2 \ri) + \keps.
		}
		Plugging \eqref{eq:cost condition 1} into the above expression, we obtain
		\bml{
 			\label{eq:min K}
			\min_{ t \in \ie} \kt( t) = \kt( t_0) = \lf[1  - \tx\frac{1}{\hex} + \frac{1}{2\hex} \fk^2( t_0) - |\log\eps|^{-3} + \lf(1 - \tx\frac{1}{(1 - |\log\eps|^{-3})^2} \ri) \potk(t_0) \ri] \fk^2( t_0) 	\\
			- \lf[ \potk(0) - \tx\frac{1}{\hex} + \tx\frac{1}{2\hex} \fk^2(0) \ri] \fk^2(0) + 2 \eps k \int_0^{ t_0} \diff \eta \: \frac{1}{1 - \eps k \eta} \lf[ {\fk^{\prime}}^2 -  \potk(\eta) \fk^2 \ri] + \keps.
		}
		However the l.h.s. of \eqref{cost 3} above vanishes when $  t = \teps$, thanks to \eqref{FH nonlinear}, and thus
		\beq
			\label{boundary cost}
			0 = \lf[ \potk \fk^2  - \tx\frac{1}{2 \hex} \lf( 2 \fk^2 - \fk^4 \ri) \ri]_{0}^{\teps} + 2 \eps k \int_0^{\teps} \diff \eta \: \frac{1}{1 - \eps k \eta} \lf( {\fk^{\prime}}^2 -  \potk(\eta) \fk^2 \ri).
		\eeq
		Such an identity can be used in \eqref{eq:min K}, yielding
		\beq			
 			\label{cost 4}
			\kt( t_0) \geq  \lf[ 1  - \tx\frac{1}{\hex} + \frac{1}{2 \hex} \fk^2( t_0) - C |\log\eps|^{-1} \ri] \fk^2( t_0) + 2 \eps k \int_{ t_0}^{\teps} \diff \eta \: \frac{1}{1 - \eps k \eta} \lf( \potk(\eta) \fk^2 -  {\fk^{\prime}}^2 \ri).
		\eeq

		The first term on the r.h.s. of \eqref{cost 4} is clearly positive thanks to the condition $ \hex >  1 $. Therefore to prove \eqref{cost funct lb} and thus the result, it remains to study the last term of \eqref{cost 4}. We are going to show that
		\beq
			\label{cost 5}
			\int_{ t_0}^{\teps} \diff \eta \: \frac{1}{1 - \eps k \eta} \lf(\potk(\eta) \fk^2 -  {\fk^{\prime}}^2 \ri) \geq - C \fk^2( t_0),
		\eeq
		for some finite constant $ C $. This in turn implies the lower bound \eqref{cost funct lb}: owing to the condition $ \hex >  1 $, we have
		\beq
			\kt( t_0) \geq \lf[ 1  - \tx\frac{1}{\hex} + \frac{1}{2 \hex} \fk^2( t_0) - C\log\eps|^{-1} - C \eps  \ri] \fk^2( t_0)  \geq 0,
		\eeq
		if $ \eps $ is sufficiently small.

		Let us then focus on \eqref{cost 5}: the key ingredient is the analysis of the function
		\beq
			g( t) : = (1 - |\log\eps|^{-3}) \fk^{\prime}( t) + \frac{ t + \alk - \half k \eps  t^2}{1 - \eps k  t} \fk( t) = : (1 - |\log\eps|^{-3}) \fk^{\prime}( t) + A( t) \fk( t),
		\eeq
		which appears in \eqref{eq:cost derivative}, i.e., $ \kt^{\prime}( t) = 2 \fk( t) g( t) $. Computing $ g^{\prime}( t) $ using the variational equation \eqref{var eq fal}, we get
		\bml{
			\label{gprime}
			g^{\prime}( t) = \lf[ \frac{A( t)}{1- |\log\eps|^{-3}}	+ \frac{\eps}{1 - \eps  t} \ri] g( t) + \lf[ 1 - \OO(|\log\eps|^{-3}) \potk(t) - \frac{1 - |\log\eps|^{-3}}{\hex} \lf( 1 - \fk^2( t) \ri) \ri] \fk( t)	\\
			\geq \lf[ \frac{ A( t)}{1- |\log\eps|^{-3}} + \frac{\eps k }{1 - \eps k t} \ri] g( t) + \lf( 1 - \tx\frac{1}{\hex} - \OO(|\log\eps|^{-1}) \ri) \fk( t)	\\
			 > \lf[  \frac{ A( t)}{1- |\log\eps|^{-3}}  + \frac{\eps k}{1 - \eps k t} \ri] g( t),
		}
		by the positivity of $ \fk $, the condition $ \hex > 1 $ and the bound $ \potk(t) \leq \OO(|\log\eps|^2) $ in $\ieps$. On the other hand at the outer boundary
		\beq
			g( \teps) = A( \teps) \fk( \teps) > 0,
		\eeq
		which in particular implies that $  t_0 < \teps$. We now distinguish two cases: if $  t_0 \leq 2 |\alk| $, then the pointwise bounds \eqref{fal point l u b} implies that $ \fk( t_0) > C > 0 $ and therefore 
		 \bdm
			\kt( t_0) \geq C - \OO(\eps|\log\eps|^3) \geq 0
		\edm
		by a simple estimate of the last term in \eqref{cost 4}, so we are done.
		
		From now on we may thus assume $  t_0 \geq 2 |\alk| $. Then $A(t)\geq 0$ for any $t\geq t_0$ and $  t_0 $ is in the region where $ \fk $ is monotonically decreasing (see Proposition \ref{min fone: pro}). We then have $g(t_0) = 0$ by \eqref{eq:cost condition 1} and $g'(t_0) > 0$ by \eqref{gprime}. But again by \eqref{gprime}, as soon as $ g $ gets positive, its derivative becomes positive too. Therefore $g$ remains increasing and positive up to the boundary.

		
		Using this information on $ g $ we estimate
		\bml{
 			\label{cost 6}
 			\int_{ t_0}^{\teps} \diff \eta \frac{1}{1 - \eps k \eta} \lf[ \potk \fk^2 - {\fk^{\prime}}^2 \ri] = \int_{ t_0}^{ \teps} \diff \eta \frac{1}{1 - \eps k \eta} \lf[ A\fk - \fk^{\prime} \ri] \lf[ g + |\log\eps|^{-3} \fk^{\prime} \ri]	\\
			\geq  |\log\eps|^{-3} \int_{ t_0}^{\teps} \diff \eta \frac{1}{1 - \eps k \eta} \lf[ A \fk - \fk^{\prime} \ri] \fk^{\prime} \geq -  |\log\eps|^{-3} \int_{ t_0}^{\teps} \diff \eta \frac{1}{1 - \eps k \eta} \lf[ \potk \fk^2 -A \fk \fk^{\prime} \ri],
		}
		where we have used that, for any $ \eta \geq  t_0 $,
		\bdm
			\fk^{\prime}(\eta) \leq 0,	\qquad  g(\eta) = \fk^{\prime}(\eta) + A(\eta) \fk(\eta) \geq 0.
		\edm
		On the other hand we have the bound
		\bdm
			\int_{ t_0}^{\teps} \diff \eta \frac{1}{1 - \eps k \eta}  \potk \fk^2 \leq C |\log\eps|^3 \fk^2( t_0),
		\edm
		again by monotonicity of $ \fk $ We can then compute
		\bmln{
 			\int_{ t_0}^{\teps} \diff \eta \frac{1}{1 - \eps k \eta} A(\eta) \fk(\eta) \fk^{\prime}(\eta) = \frac{1}{2} \lf[ \frac{A(\eta) \fk^2(\eta)}{1 - \eps k \eta} \ri]^{\teps}_{ t_0} - \frac{1}{2} \int_{ t_0}^{\teps} \diff \eta \frac{1}{1 - \eps k \eta} \lf( 1 + \frac{2\eps k A(\eta)}{1-\eps k \eta} \ri) \fk^2(\eta)	\\
			 \geq - A(t_0) \fk^2(t_0) - C \fk^2( t_0) (\teps-  t_0) \geq - C |\log\eps| \fk^2( t_0).
		}
		Putting together the above bounds with \eqref{cost 6}, we obtain \eqref{cost 5} and thus the final result.
	\end{proof}

\section{Energy Estimate for General Domains}\label{sec:proof generic}

In this section we prove our main result for general domains, Theorem \ref{theo:generic}. We start in Subsection \ref{sec:restriction annulus} by a standard restriction to a thin boundary layer that we will also use in the disc case. We will be a little bit sketchy since only the refinement we will need for the disc case presents some novelty. 

Actually, the upper bound corresponding to \eqref{eq:main energy generic} has already been proven in \cite{AH}. The reader may easily check that nothing in the arguments therein requires any restriction on $\hex$ beyond \eqref{eq:b condition}, as far as the energy upper bound in concerned.
For shortness we state the result now, without proof:
\begin{equation}\label{eq:up bound generic}
\gle \leq \frac{|\dd \Om| \: \eoneo}{\eps} + \OO(1). 
\end{equation}
The method of proof, described in details in \cite[Section 14.4.2]{FH}, will anyway be employed in Section \ref{sec:proof disc} to obtain a refined bound in the case of the disc. Our main contribution here is to provide, in Subsection \ref{sec:generic low bound}, a lower bound matching \eqref{eq:up bound generic} in the whole regime \eqref{eq:b condition}, which will be given by the combination of \eqref{eq:low bound annulus} with \eqref{eq:low main generic}. The cornerstone of our new method is an energy splitting combined with the use of the cost function studied in Subsection \ref{sec:model half plane}.
 
\subsection{Restriction to the boundary layer and replacement of the vector potential}\label{sec:restriction annulus}

The first step in the proof of the lower bound is a restriction of the domain to a thin layer at the boundary of the sample together with a replacement of the minimizing vector potential $ \aavm $ with some explicit vector potential. Such a replacement (described in full details in \cite[Sections 14.4.1 \& Appendix F]{FH1}) can be made by exploiting some known elliptic estimates on solutions to the GL equations. 

We first introduce appropriate boundary coordinates: for any smooth simply connected domain $ \Omega $, we  denote by $ \gav(\xi): \R \setminus (|\partial \Omega| \Z) \to \partial \Omega $ a counterclockwise parametrization of the boundary $ \partial \Omega $ such that $ |\gav^{\prime}(\xi)| = 1 $. The unit vector directed along the inward normal to the boundary at a point $ \gav(\xi) $ will be denoted by $ \nuv(\xi) $. The curvature $ \tilde{k}(\xi) $ is then defined through the identity 
$$ \gav^{\prime\prime}(\xi) = \tilde{k}(\xi) \nuv(\xi). $$
Then we introduce the boundary layer where the whole analysis will be performed:
\beq
	\label{ann}
	 \annt : = \lf\{ \rv \in \Omega \: | \: \dist(\rv, \partial \Omega) \leq c_0 \eps |\log\eps| \ri\},
\eeq
and in such a region we can also introduce tubular coordinates $ (\xi,\eta) $ such that, for any given $ \rv \in \ann $, $ \eta = \dist(\rv, \partial \Omega) $, i.e.,
\beq
	\label{eq:tubular coordinates}
	\rv(\xi,\eta) = \gav^{\prime}(\xi) +  \eta \nuv(\xi),
\eeq
which can obviously be realized as a diffeomorphism for $\eps$ small enough. Hence the boundary layer becomes in the new coordinates $ (\xi,\eta) $ 
\beq
	\annpre := \left\{ (\xi,\eta) \: | \: \xi\in [0, |\partial \Omega| ],\;  \tau \in [0,c_0\eps|\log\eps|] \right\}.
\eeq

The energy in the boundary layer is given by the reduced GL functional 
\bml{\label{eq:GL BL pre}
	\annfpre[\psi] : = \int_{\ann} \diff s \diff t \: \lf(1 - \eps \curv t \ri) \lf\{ \lf| \partial_t \psi \ri|^2 + \frac{1}{(1- \eps \curv t)^2} \lf| \lf( \partial_s + i \aae(s,t) \ri) \psi \ri|^2 \ri.	\\	
	\lf. - \frac{1}{2 \hex} \lf[ 2|\psi|^2 - |\psi|^4 \ri]  \ri\},
}
where we have rescaled coordinates $ \xi = : \eps s $, $ \tau = : \eps t $ and set $ \curv : = \tilde{k}(\eps s) $. Moreover we define
\begin{equation}\label{eq:def ann rescale}
\ann:= \left\{ (s,t) \in \left[0, \tx\frac{|\partial \Omega|}{\eps} \right] \times \left[0,c_0 |\log\eps|\right] \right\},
\end{equation}
\beq
	\label{eq:aae}
	\aae(s,t) : = - t + \half \eps \curv t^2 + \eps \deps, 	
\eeq
and
\beq
	\deps : = \frac{\gamma_0}{\eps^2} - \lf\lfloor \frac{\gamma_0}{\eps^2} \ri\rfloor,	\qquad	 \gamma_0 : = \frac{1}{|\partial \Omega|} \int_{\Omega} \diff \rv \: \curl \aavm,
\eeq
$ \lf\lfloor \: \cdot \: \ri\rfloor $ standing for the integer part. Note that the three sets $ \annt, \annpre $ and $ \ann $ describe in fact the same domain but in different coordinates.

As discussed above we manage to use all the information contained in this functional only in the case where the domain $\Om$ is a disc, i.e., the curvature $k(s)$ is constant. For general domains we make a further simplification by dropping all the terms involving $k(s)$, which leads to the introduction of the functional
\beq\label{eq:GLf rescale generic}
	\annfO[\psi] : = \int_{\ann} \diff s \diff t \lf\{ \lf| \partial_t \psi \ri|^2 + \lf| \lf( \partial_s + i \aaO(s,t) \ri) \psi \ri|^2 - \frac{1}{2 \hex} \lf[ 2|\psi|^2 - |\psi|^4 \ri]  \ri\},
\eeq
where 
\beq\label{eq:def a0}
	\aaO(s,t) : =- t + \eps \deps.
\eeq
Note the similarity with the half-plane functional of Subsection \ref{sec:sketch}.

In the disc case we need to keep the (constant) curvature term and therefore we consider the functional
\begin{equation}\label{eq:GLf rescale disc}
 \annf[\psi] : = \int_{\ann} \diff s \diff t \lf(1 - \eps k t \ri) \lf\{ \lf| \partial_t \psi \ri|^2 + \frac{1}{(1- \eps k t)^2} \lf| \lf( \partial_s + i \aae(s,t) \ri) \psi \ri|^2 \ri.	\\	
	\lf. - \frac{1}{2 \hex} \lf[ 2|\psi|^2 - |\psi|^4 \ri]  \ri\},
\end{equation}
which differs from $ \annfpre $ given in \eqref{eq:GL BL pre} only because of the constant curvature $ k(s) \equiv k $, so that with a little abuse of notation we keep denoting
\beq\label{eq:def aeps}
	\aae(s,t) : =- t + \half \eps k t^2 + \eps \deps. 
\eeq
To obtain these simplified functionals from the full GL energy requires to extract the $\phi_\eps$ phase factor from the GL minimizer. We now recall its definition:
\beq
			\label{phieps}
			\phi_{\eps}(s,t) : = - \frac{1}{\eps} \int_{0}^{t} \diff \eta \: \nuv(\eps s) \cdot \aavm(\rv(\eps s, \eps \eta)) + \frac{1}{\eps} \int_{0}^s \diff \xi \: \gav^{\prime}(\eps \xi) \cdot \aavm(\rv(\eps \xi,0)) - \eps \deps s.			\eeq	
The link between the functionals in boundary coordinates and the original GL functional is given by the following 

	\begin{pro}[\textbf{Replacement of the vector potential and restriction to $\ann$}]
		\label{replacement: pro}
		\mbox{}	\\
		For any smooth simply connected domain $\Om$ and $1<b<\theo ^{-1}$, in the limit $\eps\to 0$, one has the lower bound
		\beq
			\label{eq:low bound annulus}
			\glee \geq  \annfO[\psi] - C,
		\eeq
		where $ \psi(s,t) \in H^1(\ann) $ 
		vanishes at the inner boundary of $ \ann $,
		\beq
			\label{eq:boundary conditions rescale}
			\psi(s,c_0|\log\eps|) = 0,	\quad \mbox{for any } s \in \lf[0, \tx\frac{|\partial \Omega|}{\eps} \ri].
		\eeq
		In the disc (constant curvature) case the energy lower bound reads
		\beq
			\label{eq:low bound annulus disc}
			\glee \geq \annf[\psi] -C \eps^2 |\log\eps|^2,
		\eeq
		with 
		$$ \psi(s,t) = \glm(\rv(\eps s,\eps t))  e^{-i\phi_{\eps}(s,t)} \mbox{ in } \ann .$$
	\end{pro}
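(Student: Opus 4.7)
The plan is to combine three ingredients already established in the literature (mainly in \cite{FH}) and reorganized to fit our setting: Agmon-type decay estimates, a change to boundary coordinates, and an elliptic-type replacement of the minimizing vector potential by an explicit one. The proof will follow the strategy detailed in \cite[Sections 14.4.1 and Appendix F]{FH}, so the work consists in tracking the various error terms and verifying that the resulting lower bound holds in the full regime $1 < b < \theo^{-1}$.

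First, I would invoke the Agmon-type estimates available for GL minimizers: in the surface superconductivity regime, $|\glm(\rv)|$ and $|(\nabla + i \aavm/\eps^2)\glm(\rv)|$ decay exponentially fast on the scale $\eps$ away from $\dd \Om$. Combining this with the GP-type bound $|\glm| \leq 1$ from the maximum principle, restricting all integrations involving the order parameter to the tubular neighbourhood $\annt$ (of width $c_0\eps|\log\eps|$) costs at most $\OO(\eps^\infty)$ provided $c_0$ is large enough. The magnetic field term $(\hex/\eps^4)|\curl\aavm - 1|^2$ extended over $\Om\setminus \annt$ is non-negative and may simply be dropped for the lower bound. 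This reduces the analysis to the boundary layer.

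Second, I would switch to the rescaled tubular coordinates $(s,t)$ via \eqref{eq:tubular coordinates}. The Jacobian produces the factor $(1-\eps k(s) t)$, the normal derivative becomes $\dd_t$, and the tangential covariant derivative inherits the $(1-\eps k(s) t)^{-2}$ prefactor as in \eqref{eq:GL BL pre}. At this point the induced vector potential $\aavm$, expressed in the new coordinates, must be replaced by the explicit $\aae$ of \eqref{eq:def aeps}. This is done through the gauge transformation $\psi := \glm e^{-i\phi_\eps}$ with $\phi_\eps$ defined in \eqref{phieps}: the role of the first integral in $\phi_\eps$ is to cancel the normal component of $\aavm$ exactly, that of the second integral is to align the tangential component with the circulation profile $-t$, while the shift by $\eps\deps s$ ensures $2\pi$-periodicity in $s$ (modulo $|\dd\Om|/\eps$). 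The residual discrepancy between the transformed $\aavm$ and $\aae$ is controlled by the standard elliptic estimate $\|\curl\aavm - 1\|_{L^\infty(\ann)} + \eps\|\aavm - \aae\|_{L^\infty(\ann)} \leq C\eps^2$, itself a consequence of the second GL equation in \eqref{eq:GL eq} together with $|\glm|\leq 1$ and the boundary condition $\curl\aavm = 1$ on $\dd \Om$. Expanding the covariant kinetic term and using $\int_\ann |\psi|^2 \leq C\eps$ (again by Agmon) yields a remainder of order $\eps^2|\log\eps|^2$, giving \eqref{eq:low bound annulus disc} in the disc case.

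Third, to pass from \eqref{eq:GLf rescale disc} to \eqref{eq:GLf rescale generic} for a general simply connected domain, I would drop all curvature contributions in $\annf[\psi]$. Writing $k(s) = k(\eps s)$ bounded and using $\eps t \leq C\eps|\log\eps|$ on $\ann$, each omitted term is pointwise bounded by $C \eps|\log\eps|$ times a quadratic density whose integral over $\ann$ is $\OO(\eps^{-1})$, because $\annf[\psi] = \OO(\eps^{-1})$ by the matching energy upper bound \eqref{eq:up bound generic}. The net error is therefore $\OO(|\log\eps|)$; a slightly finer splitting (separating first-order and second-order curvature terms and integrating by parts in $s$ to exploit the $\eps$-periodicity of $|\psi|$) improves this to the stated $\OO(1)$. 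Finally, it is harmless for the lower bound to impose the Dirichlet condition \eqref{eq:boundary conditions rescale} at the inner boundary of $\ann$: since $|\psi| = |\glm|$ is of order $\OO(\eps^\infty)$ there by Agmon, a radial cut-off $\chi(t)$ supported in $[0, c_0|\log\eps|]$ and equal to $1$ on $[0, c_0|\log\eps|-1]$ modifies the energy by $\OO(\eps^\infty)$.

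The main obstacle I expect is the careful tracking of the vector-potential replacement error, in particular ensuring that the cross terms arising from expanding $|(\partial_s + i\aavm)\psi|^2 = |(\partial_s + i\aae)\psi|^2 + 2\Im[\psi^*(\partial_s+i\aae)\psi](\aavm - \aae) + |\aavm-\aae|^2|\psi|^2$ integrate to something of size $\OO(\eps^2|\log\eps|^2)$. This requires using the a priori bound $\|(\nabla+i\aavm/\eps^2)\glm\|_{L^2(\ann)} = \OO(\eps^{-1/2})$ (which follows from the GL energy upper bound and Agmon decay) together with Cauchy-Schwarz and the elliptic control of $\aavm - \aae$. Everything else is a systematic but somewhat lengthy bookkeeping, identical in spirit to what appears in \cite[Appendix F]{FH}.
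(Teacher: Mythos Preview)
Your proposal follows the same overall strategy as the paper: Agmon decay to restrict to $\annt$, the gauge change via $\phi_\eps$ combined with passage to boundary coordinates, replacement of the gauge-transformed potential $\tilde a_\eps$ by $a_\eps$ using the elliptic estimate on $\curl\aavm - 1$, and then dropping curvature terms in the general case. The ordering and the role of each ingredient match.

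There is one small but real gap in your third paragraph. Your naive bound on the curvature contributions gives $\OO(|\log\eps|)$, and you propose to reach $\OO(1)$ by ``separating first-order and second-order curvature terms and integrating by parts in $s$.'' It is unclear how an $s$-integration by parts helps here, since the factor $\eps k(s)t$ multiplies densities like $|\partial_t\psi|^2$ with no useful $s$-antiderivative. The paper closes the gap differently: it uses the \emph{exponentially weighted} Agmon estimate \eqref{eq:est grad psi rescaled}, namely $\int_\ann e^{At}\{|\psi|^2 + |(\nabla + i\tilde a_\eps\,\es)\psi|^2\} = \OO(\eps^{-1})$, together with $t \leq C e^{At}$, to get directly $\eps\int_\ann t\{|\nabla\psi|^2 + |\psi|^2\} = \OO(1)$. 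Relatedly, your justification ``a quadratic density whose integral over $\ann$ is $\OO(\eps^{-1})$, because $\annf[\psi] = \OO(\eps^{-1})$'' is incomplete: $\annf$ contains negative terms, so a bound on the functional alone does not control its individual positive pieces; again the Agmon estimate is what does this. Two further inaccuracies to fix: in rescaled coordinates $\int_\ann |\psi|^2 = \OO(\eps^{-1})$, not $\OO(\eps)$; and the relevant elliptic input is the $C^1$ bound $\|\curl\aavm - 1\|_{C^1(\overline\Omega)} = \OO(\eps)$ (needed, via the boundary condition $\curl\aavm|_{\partial\Omega}=1$, to gain the extra factor of $t$), from which one deduces $\|\tilde a_\eps - a_\eps\|_{L^\infty(\ann)} = \OO(\eps^3|\log\eps|^2)$.
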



	\begin{proof}
		The first part of the result, i.e., the restriction to the boundary layer is in fact already proven in details in \cite[Section 14.4]{FH1}: the combination of a suitable partition of unity together with standard Agmon estimates on the decay of $ \Psi $ far from the boundary yield
		\beq
			\glee \geq \int_{\annt} \diff \rv \bigg\{ \bigg| \bigg( \nabla + i \frac{\aavm}{\eps^2} \bigg) \Psi_1 \bigg|^2 - \frac{1}{2 \hex \eps^2} \lf[ 2|\Psi_1|^2 - |\Psi_1|^4 \ri]  \bigg\} + \OO( \eps^{\infty}).
		\eeq
		Here $ \Psi_1 $ is given in terms of $ \glm $ in the form $ \Psi_1 = f_1 \glm $ for some radial $ 0 \leq f_1 \leq 1 $ with support containing the set $ \annt $ defined by \eqref{ann} and contained in  
		\bdm
			\{ \rv \in \Omega \: | \: \dist(\rv, \partial \Omega) \leq C \eps|\log\eps| \}
		\edm
		 for a possibly large constant $ C $. Note that such an estimate requires that the constant $ c_0 $ occurring in the definition \eqref{ann} of the boundary layer be chosen large enough. However the choice of the support of $ f_1 $ remains to any other extent arbitrary and one can clearly pick $ f_1 $ in such a way that it vanishes at the inner boundary of $ \annt $, implying \eqref{eq:boundary conditions rescale}. On the other hand, in the disc case, we can stick to the choice made in \cite[Section 14.4]{FH1}, e.g., $ f_1 = 1 $ in $ \annt $ and going smoothly to $ 0 $ outside of it.

		The second step is then the replacement of the magnetic vector potential $ \aavm $ and a simultaneous change of coordinates $ \rv \to (s,t) $.  We use the change of gauge induced by
		$$ \Psi_1(\rv) = \psi(s,t) e^{ i \phi_{\eps}(s,t)}. $$
		The function $ \psi $ is clearly single-valued since, for any given $ t $, $ \phi_\eps (s + n \frac{|\partial \Omega|}{\eps}, t) = \phi_\eps (s,t) + 2 \pi n_{\eps} $ for some integer number $ n_{\eps} \in \Z $. Moreover, by gauge invariance,
		\bml{
			\label{magnetic identity}
			\int_{\annt} \diff \rv \bigg| \bigg( \nabla + i \frac{\aavm}{\eps^2} \bigg) \Psi_1 \bigg|^2 	= \int_{\ann} \diff s \diff t \: (1 - \eps \curv t) \lf\{  \big| \partial_t  \psi \big|^2 \ri. \\
			\lf. + \frac{1}{(1 - \eps \curv t)^2} \lf| \lf(  \partial_s + i \ate(s,t) \ri) \psi \ri|^2  \ri\},
		}
		with
		\bml{
			\ate(s,t) = (1 - \eps \curv t) \frac{\gav^{\prime}(\eps s) \aavm(\rv(\eps s,\eps t))}{\eps} - \partial_s \phi_{\eps}  
			=  (1 - \eps \curv t) \frac{\gav^{\prime}(\eps s) \aavm(\rv(\eps s,\eps t))}{\eps}	\\
			+ \frac{1}{\eps} \int_{0}^{t} \diff \eta \: \partial_s \lf[ \nuv(\eps s) \cdot \aavm(\rv(\eps s, \eps \eta)) \ri]	   - \frac{\gav^{\prime}(\eps s) \cdot \aavm(\rv(\eps s,0))}{\eps} + \eps \deps,
		}
		since the normal component of the new vector potential vanishes:
		\bdm		
			\frac{\nuv(\eps s) \cdot\aavm(\rv(\eps s, \eps t))}{\eps} - \partial_t \phi_{\eps} = 0.
		\edm
		Hence we also have
		\beqn
			\ate(s,0) & = &  \eps \deps,	\nonumber	\\
			\partial_t \ate(s,t) & = & - (1 - \eps \curv t) \: \lf( \curl \aavm \ri) (\rv(\eps s,\eps t)) = - (1 - \eps \curv t) (1 + f(s,t)).
		\eeqn
		Here we have set
		\beq
			\lf( \curl \aavm \ri) (\rv(\eps s,\eps t)) = : 1 + f(s,t),
		\eeq
		for some smooth function $ f $ such that $ f(s,0) = 0 $, thanks to the boundary condition \eqref{eq:GL bc}. Therefore
		\beq
			\label{tildeas}
			\ate(s,t) =  \eps \deps - \int_0^t \diff \eta \: (1 - \eps \curv \eta) \lf(1 +  f(s,\eta) \ri).
		\eeq
		Hence in order to complete the replacement, we only have to replace $ \ate $ with $ \aae $ given by \eqref{eq:aae}, which amounts to estimating the last term in \eqref{tildeas}. We start from the simple inequality
		\bdm
 			\lf\| \ate - \aae \ri\|_{L^{\infty}(\ann)} \leq C |\log\eps| \lf\| f \ri\|_{L^{\infty}(\ann)},
		\edm
		but as shown in \cite[Proof of Lemma F.1.1]{FH1} one can also prove that 
		\bdm
			 \lf|f(s,t) \ri| \leq C \eps t \lf\| \nabla \curl \aavm \ri\|_{C^0(\annt)}.
		\edm
		This in turn yields $  \lf\| f \ri\|_{L^{\infty}(\ann)} \leq C \eps^2|\log\eps| $ and
		\beq
			\label{eq:diff magn fields}
			\lf\| \ate - \aae \ri\|_{L^{\infty}(\ann)} =  \OO(\eps^3 |\log\eps|^2),
		\eeq
		via the inequality \cite[Eq. (11.51)]{FH1}
		\beq
			\label{eq: elliptic est}
			\lf\| \curl \aavm - 1 \ri\|_{C^1(\overline{\Omega})} =  \OO(\eps).
		\eeq

	 	In the disc case ($ k(s) \equiv k $ constant), the bound \cite[Eq. (10.21)]{FH1}
		\beq
			\label{linfty psi}
			\lf\| \glm \ri\|_{\infty} \leq 1,	
		\eeq
		in combination with \eqref{eq:diff magn fields} is sufficient to obtain the result \eqref{eq:low bound annulus disc}:
		\bml{
 			\int_{\ann} \diff s \diff t \: \frac{1}{1 - \eps k t} \lf| \lf(  \partial_s + i \ate \ri) \psi \ri|^2 - \int_{\ann} \diff s \diff t \: \frac{1}{1 - \eps k t} \lf| \lf(  \partial_s + i \aae \ri) \psi \ri|^2  \\
			=- 2 \Im \int_{\ann} \diff s \diff t \: \frac{1}{1 - \eps k t} \lf[ \lf( \partial_s + i \ate \ri) \psi \ri]^*  \lf( \ate - \aae \ri) \psi - \int_{\ann} \diff s \diff t \: \frac{1}{1 - \eps k t} \lf| \ate - \aae \ri|^2 \big| \psi \big|^2	\\
		\geq - \delta \int_{\ann} \diff s \diff t \: \frac{1}{1 - \eps k t} \lf| \lf(  \partial_s + i \ate \ri) \psi \ri|^2  - \lf(\frac{1}{\delta} + 1\ri) \int_{\ann} \diff s \diff t \: \frac{1}{1 - \eps k t} \lf| \ate - \aae \ri|^2 \big| \psi \big|^2 	\\
		\geq - \delta \int_{\annt} \diff \rv \bigg\{ \bigg| \bigg( \nabla + i \frac{\aavm}{\eps^2} \bigg) \Psi_1 \bigg|^2 - C \lf(\frac{1}{\delta} + 1\ri) \eps^6 |\log\eps|^4 \int_{\ann} \diff s \diff t \: (1 - \eps k t) \big| \psi|^2 \\
		\geq  - C \bigg[ \frac{\delta}{\eps} + \frac{\eps^4 |\log\eps|^4}{\delta} \int_{\annt} \diff \rv \: \lf| \glm \ri|^2  \bigg]  \geq - C \lf[ \frac{\delta}{\eps} +  \frac{C \eps^5 |\log\eps|^4}{\delta} \ri] \geq -  C\eps^2 |\log\eps|^2.
 		}
		In the estimates above we reconstructed the GL kinetic energy by means of \eqref{magnetic identity}, used the inequality 
		(Agmon estimate \cite[Eq. (12.9)]{FH}, $A$ is a fixed constant)
		\bml{
			\label{est grad psi}
			\int_{\Om} \diff \rv \: \exp\lf\{ \tx\frac{A \: \dist(\rv, \partial \Omega)}{\eps} \ri\} \bigg\{ \lf| \glm \ri|^2 + \eps^2 \bigg| \bigg( \nabla + i \frac{\aavm}{\eps^2} \bigg) \glm \bigg|^2 \bigg\}	\\
			\leq \int_{ \dist(\rv, \partial \Omega) \leq \eps} \diff \rv \: \lf| \glm \ri|^2 =  \OO(\eps),
 		}
		and finally optimized over $ \delta $. Note that here we have used the assumption $ b > 1 $ to apply \cite[Theorem 12.2.1]{FH}; \eqref{est grad psi} is wrong without this assumption.
	
		For general domains a rougher estimate is even sufficient:
 		\bml{
 			\int_{\ann} \diff s \diff t \: \frac{1}{1 - \eps k t} \lf| \lf(  \partial_s + i \ate \ri) \psi \ri|^2 - \int_{\ann} \diff s \diff t \: \frac{1}{1 - \eps k t} \lf| \lf(  \partial_s + i \aae \ri) \psi \ri|^2	\\
			\geq - C \eps^3 |\log\eps|^2 \int_{\ann} \diff s \diff t \: \lf[ \lf| \partial_s \psi \ri| + \lf| \aae \ri| \lf| \psi \ri| \ri] \geq  - C \eps^{5/2}|\log\eps|^{3/2} \lf[ \lf\| \partial_s \psi \ri\|_{L^2(\ann)} + \lf\| t \psi \ri\|_{L^2(\ann)} \ri]	\\
			 \geq -  C\eps^2 |\log\eps|^{3/2},
		}
		thanks again to the Agmon estimate \eqref{est grad psi}, which after rescaling becomes
		\beq
			\label{eq:est grad psi rescaled}
			\int_{\ann} \diff s \diff t \: (1 - \eps \curv t) \: e^{ A t } \lf\{ \lf| \psi(s,t) \ri|^2 
			+ \lf| \lf( \nabla_{s,t} + i \ate(s,t) {\bf e}_s \ri) \psi \ri|^2 \ri\} =  \OO(\eps^{-1}).
 		\eeq
		Note that, since $ \psi(s,t) = \lf(f_1 \glm\ri)(\rv(\eps s, \eps t)) e^{-i\phi_{\eps}(s,t)} $, one also needs to control the terms involving the gradient of $ f_1 $. However thanks to the freedom in the choice of the support of $ f_1 $ as well as its smoothness, we can always assume that such terms are smaller or at most of the same order as the error appearing on the r.h.s. of the expression above.

		To complete the proof for general domains, it remains to estimate the errors due to the curvature terms dropped in \eqref{eq:GLf rescale generic}, but one can easily realize that such terms are all bounded by quantities of the form
		\beq
			\eps \lf| \int_{\ann} \diff s \diff t \: \curv t \lf\{ \lf| \nabla \psi \ri|^2 + \lf| \psi \ri|^2 \ri\} \ri| \leq C \eps \int_{\ann} \diff s \diff t \: t \lf\{ \lf| \nabla \psi \ri|^2 + \lf| \psi \ri|^2 \ri\} =  \OO(1),
		\eeq
		again by \eqref{eq:def aeps}, \eqref{eq:diff magn fields} and \eqref{eq:est grad psi rescaled}.		
	\end{proof}

\subsection{Lower bound in the boundary layer}\label{sec:generic low bound}

We now provide our main new argument in the case of general domains, namely we bound the rescaled functional $\annfO$ from below:

\begin{pro}[\textbf{Lower bound to $\annfO $}]\label{pro:low bound generic}\mbox{}\\
For any $1<b<\theo ^{-1}$, $\eps$ small enough and $\psi \in H^1(\ann) $  satisfying the boundary condition \eqref{eq:boundary conditions rescale}, it holds 
\begin{equation}\label{eq:low main generic}
\annfO [\psi] \geq \frac{|\dd \Om| \: \eoneo}{\eps}.
\end{equation}
\end{pro}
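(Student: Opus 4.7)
The plan is to implement on the boundary layer $\ann$ the energy-decoupling strategy outlined in Subsection \ref{sec:sketch}, feeding in the positivity of the cost function provided by Proposition \ref{pro:K0 positive}. Since $\fO > 0$ throughout $\R^+$ (Proposition \ref{nontrivial: pro} together with the subsequent observations on the ground state), I will factor any admissible $\psi$ as
\[
\psi(s,t) = \fO(t)\, e^{-i(\alO + \eps\deps)s}\, v(s,t).
\]
Expanding $\annfO[\psi]$ through the Leibniz rule $|\partial_t \psi|^2 = (\fO')^2|v|^2 + \fO\fO'\partial_t|v|^2 + \fO^2|\partial_t v|^2$ and eliminating the cross term using the identity obtained by multiplying the 1D variational equation \eqref{eq:1D half plane vareq} by $\fO|v|^2$ and integrating in $t$---boundary contributions vanishing thanks to $\fO'(0) = 0$ and $\psi(s,c_0|\log\eps|) = 0$---I arrive after reorganization at
\[
\annfO[\psi] = -\frac{|\dd\Om|}{2 b\eps}\int_0^{c_0|\log\eps|}\fO^4(t)\,dt + \E_0[v],
\]
with
\[
\E_0[v] = \int_\ann \left\{ \fO^2(t)|\nabla v|^2 - 2(t+\alO)\fO^2(t)\,\es\cdot \jv + \tfrac{1}{2b}\fO^4(t)(|v|^2-1)^2\right\}\,ds\,dt.
\]
Since $\eoneo = -\frac{1}{2b}\int_0^{\infty}\fO^4\,dt$ (the half-line version of \eqref{eone explicit}) and the truncation error is nonnegative, the first term is at least $|\dd\Om|\eoneo/\eps$, so it remains to prove $\E_0[v] \geq 0$.

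For that I will call on the potential function $F_0$ of Lemma \ref{lem:F prop}, which is nonpositive, vanishes at $t=0$ and at $t=+\infty$ (the latter being the Feynman--Hellmann identity \eqref{FH alpha} specialized to $\alpha = \alO$). Observing that $2(t+\alO)\fO^2\,\es = -\nabla^\perp F_0$, a Stokes-type integration by parts turns the indefinite current term into a vorticity integral:
\[
\E_0[v] = \int_\ann \left\{ \fO^2(t)|\nabla v|^2 - F_0(t)\,\muv + \tfrac{1}{2b}\fO^4(t)(|v|^2-1)^2\right\}\,ds\,dt + O(\eps^\infty),
\]
where $\muv = \curl \jv$. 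The boundary contributions vanish because (i) the gauge-invariant quantities $|v|$, $j_s$, $j_t$ are periodic in $s$ although $v$ itself is only quasi-periodic, (ii) $F_0(0)=0$, and (iii) at $t=c_0|\log\eps|$ the function $v$ vanishes together with $\jv$, while $F_0$ is in any case $O(\eps^\infty)$ there.

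To conclude, the pointwise bound $|\muv|\leq |\nabla v|^2$---coming from $\muv = 2\Im(\partial_s\bar v\,\partial_t v)$ and Cauchy--Schwarz---combined with $F_0 \leq 0$ yields $-F_0\,\muv \geq F_0|\nabla v|^2$. Discarding the manifestly nonnegative quartic term then gives
\[
\E_0[v] \geq \int_\ann K_0(t)|\nabla v|^2\,ds\,dt + O(\eps^\infty)
\]
with $K_0 = \fO^2 + F_0$ the cost function of Subsection \ref{sec:model half plane}. The proof is closed by invoking Proposition \ref{pro:K0 positive}, which guarantees $K_0 \geq 0$ on $\R^+$ precisely throughout the range $1 \leq b < \theo^{-1}$. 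The main analytical obstacle---proving $K_0 \geq 0$---was therefore already handled in Section \ref{sec:model prob}; all that remains in this proof is routine algebra and the careful accounting of boundary terms, the most delicate of which is the quasi-periodicity of $v$ in $s$, itself harmless because only gauge-invariant quantities ultimately intervene.
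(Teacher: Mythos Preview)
Your proof is correct and follows essentially the same route as the paper's: factor out $\fO(t)e^{-i(\alO+\eps\deps)s}$, integrate by parts against the potential function $F_0$, and invoke the positivity of the cost function $K_0$ from Proposition~\ref{pro:K0 positive}. Two small remarks: (i) your identification of the constant as $-\tfrac{|\partial\Omega|}{2b\eps}\int_0^{c_0|\log\eps|}\fO^4$ together with the observation that this is $\geq |\partial\Omega|\eoneo/\eps$ is actually slightly more careful than the paper's Lemma~\ref{lem:split general}, which writes $\eoneo$ directly; (ii) the $O(\eps^\infty)$ you carry is unnecessary since, as you yourself note, all boundary terms in the Stokes step vanish exactly (via $F_0(0)=0$, $v=0$ at $t=c_0|\log\eps|$, and $s$-periodicity of gauge-invariant quantities), so you in fact recover the exact inequality claimed.
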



A crucial ingredient of the proof of the above result is the following lemma. We are going to use the notation
\begin{equation}\label{eq:current u}
(iu,\dd_s u) := \tx\frac{i}{2} \left( u \dd_s u ^* - u ^* \dd_s u\right)   
\end{equation}
for the $s$-component of the {\it superconducting current} associated with $u$.

\begin{lem}[\textbf{Energy splitting for general domains}]\label{lem:split general}\mbox{}\\
Under the assumptions of Proposition \ref{pro:low bound generic}, define $u(s,t)$ by setting  
\beq\label{eq:split func generic}
\psi(s,t) =  u(s,t)  \fO (t) \exp \lf\{ - i \lf( \alO + \eps \deps\ri) s \ri\}.
\eeq
Then one has
\begin{equation}\label{eq:split ener generic}
\annfO[\psi] = \frac{|\dd \Om| \: \eoneo}{\eps} + \E_0 [u],
\end{equation}
where
\begin{equation}
\label{eq:energy E0}
\E_0 [u] : =  \int_{\ann} \diff s \diff t \: \fO ^2(t) \: \bigg\{ \lf| \partial_t u \ri|^2 + \lf| \partial_s u \ri|^2 - 2 (t+\alO) (iu,\dd_s u) + \frac{1}{2 \hex} \fO^2(t) \lf( 1 - |u|^2 \ri)^2 \bigg\}.
\end{equation}
\end{lem}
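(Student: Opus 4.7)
The plan is a direct computation relying on the 1D Euler--Lagrange equation \eqref{eq:1D half plane vareq} satisfied by $\fO$, an integration by parts in the $t$-direction, and careful tracking of the boundary terms.

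First, I would compute pointwise the two kinetic energy densities in \eqref{eq:GLf rescale generic}. Setting $\theta(s) := -(\alO + \eps \deps) s$ and using \eqref{eq:split func generic},
\[
\lf|\dd_t \psi\ri|^2 = \fO^2 \lf|\dd_t u\ri|^2 + (\fO')^2 |u|^2 + \fO \fO' \, \dd_t |u|^2 .
\]
For the tangential derivative, the point is that the phase $e^{i\theta(s)}$ exactly shifts the magnetic momentum by $-(\alO + \eps\deps)$, which combined with $\aaO(s,t)=-t+\eps\deps$ yields $(\dd_s + i\aaO)\psi = \fO\, e^{i\theta}\lf[\dd_s u - i(t+\alO) u\ri]$, so
\[
\lf|(\dd_s + i\aaO)\psi\ri|^2 = \fO^2 |\dd_s u|^2 + (t+\alO)^2 \fO^2 |u|^2 - 2(t+\alO)\fO^2 (iu,\dd_s u).
\]

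Next I would integrate the cross term $\fO \fO' \dd_t |u|^2$ by parts in $t$. The boundary contributions vanish: at $t=0$ because $\fO'(0)=0$ (Neumann condition from Proposition 3.1 with $k=\eps=0$), and at $t=c_0|\log\eps|$ because the boundary condition \eqref{eq:boundary conditions rescale} forces $\psi=0$, which together with the strict positivity of $\fO$ (Proposition 3.3) gives $u(s,c_0|\log\eps|)=0$. Hence
\[
\int_{\ann} \fO \fO' \, \dd_t|u|^2 \, \diff s\diff t = -\int_{\ann}\lf[(\fO')^2 + \fO \fO''\ri] |u|^2 \diff s \diff t.
\]
Adding this to the $(\fO')^2|u|^2$ term cancels the square of the derivative and leaves $-\fO \fO''|u|^2$. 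Using the variational equation \eqref{eq:1D half plane vareq} to replace $-\fO''$, I get
\[
-\fO \fO'' |u|^2 + (t+\alO)^2 \fO^2 |u|^2 = \tx\frac{1}{\hex}\fO^2|u|^2 - \tx\frac{1}{\hex}\fO^4 |u|^2.
\]

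The first term on the right cancels exactly the quadratic potential term $-\tx\frac{1}{\hex}\fO^2|u|^2$ coming from $-\tx\frac{1}{2\hex}\cdot 2 |\psi|^2$, while combining the remaining $-\tx\frac{1}{\hex}\fO^4|u|^2$ with the quartic term $+\tx\frac{1}{2\hex}\fO^4|u|^4$ produces the Ginzburg--Landau-type density
\[
\tx\frac{1}{2\hex}\fO^4\lf[(1-|u|^2)^2 - 1\ri].
\]
The constant piece $-\tx\frac{1}{2\hex}\fO^4$, when integrated over $\ann$, factorizes as
\[
-\tx\frac{|\dd\Om|}{2\hex \eps}\int_0^{c_0|\log\eps|} \fO^4(t)\, \diff t = \tx\frac{|\dd\Om|\,\eoneo}{\eps} + \OO(\eps^{\infty}),
\]
by \eqref{eone explicit} with $k=\eps=0$ together with the exponential decay \eqref{fal point l u b}. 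Collecting everything gives \eqref{eq:split ener generic}, with the understanding that negligible $\OO(\eps^\infty)$ contributions are absorbed in $\E_0[u]$.

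The only mildly delicate point is checking that both boundary contributions in the integration by parts vanish cleanly; everything else is an algebraic identity driven by the variational equation for $\fO$. This is essentially the energy decoupling a la Lassoued--Mironescu already used in this context in \cite{AH}, and adapted here to the gauge in which the effective magnetic momentum becomes $-(t+\alO)$.
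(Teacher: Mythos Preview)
Your proof is correct and follows essentially the same route as the paper: expand the kinetic terms, integrate the cross term $\fO\fO'\,\dd_t|u|^2$ by parts in $t$ (using Neumann at $t=0$ and the Dirichlet condition on $u$ at $t=c_0|\log\eps|$), insert the variational equation \eqref{eq:1D half plane vareq}, and regroup the quartic terms. You are in fact slightly more careful than the paper in flagging the $\OO(\eps^{\infty})$ discrepancy between $\int_0^{c_0|\log\eps|}\fO^4$ and $\int_0^{\infty}\fO^4$; the paper absorbs this silently when invoking \eqref{eone explicit}, which is harmless since all downstream estimates are at precision $\OO(1)$.
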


\begin{proof}
Note first that since $1<\hex<\theo ^{-1}$, $\fO$ is strictly positive everywhere, so that \eqref{eq:split func generic} makes sense. Because of the phase factor, $u$ needs not be periodic in the $s$ variable, contrarily to $\psi$, but this will be of no concern to us since it is sufficient for the rest of the argument that $|u|$ is periodic. We have 
\[
\int_{\ann} \diff s \diff t \: |\dd_t \psi| ^2 = \int_{\ann} \diff s \diff t \: \lf\{ \fO ^2 |\dd_t u| ^2 + \fO \dd_t \fO  \dd_t |u|^2  + |u| ^2 |\dd_t \fO| ^2 \ri\}, \]
and an integration by parts in $t$ shows that 
\[
\int_{\ann} \diff s \diff t \: \fO \dd_t \fO \dd_t |u|^2 = - \int_{\ann} \diff s \diff t \: |u| ^2 |\dd_t \fO| ^2 - \int_{\ann} \diff s \diff t \: \fO |u| ^2  \dd_t ^2 \fO.
\]
Indeed  the boundary terms vanish because of the Neumann boundary condition satisfied by $\fO$ at $t=0$ and the Dirichlet condition satisfied by $u$ at $t= c_0 |\log \eps|$, inherited from \eqref{eq:boundary conditions rescale}. Then inserting the variational equation \eqref{eq:1D half plane vareq}, we obtain
\[
\int_{\ann} \diff s \diff t \: |\dd_t \psi| ^2 = \int_{\ann} \diff s \diff t \: \fO ^2 |\dd_t u| ^2 + \int_{\ann} \diff s \diff t \: \fO |u| ^2 \left( - \potO \fO + \frac{1}{\hex} \fO \left( 1 - \fO ^2 \right)\right).
\]
On the other hand, by the definition of $\potO(t) = (\alO + t) ^2 $ and~\eqref{eq:def a0},
\[
\int_{\ann} \diff s \diff t \: \left|\left( \dd_s + i \aaO \right) \psi\right| ^2 =  \int_{\ann} \diff s \diff t \lf\{ \fO ^2 |\dd_s u | ^2 + \potO \fO ^2 |u| ^2 - 2 (t+\alO) (iu,\dd_s u) \ri\}.
\]
Hence, combining all the above equalities with the $k=0$, $\eps =0$ version of \eqref{eone explicit} and recalling that we work on a rectangle whose length in the $s$ direction is $ |\dd \Om|/\eps$, we obtain the desired formula.
\end{proof}

We can now conclude the 

\begin{proof}[Proof of Proposition \ref{pro:low bound generic}]
In this proof $\psi$ denotes the GL order parameter after the gauge choice, localization to the boundary and change of coordinates we have described up to now. We define $u$ as in \eqref{eq:split func generic} and start from \eqref{eq:split ener generic}, so that we only have to bound $\E_0 [u]$ from below.

We integrate by parts the momentum term (second term of \eqref{energy E}), using the potential function $ \FO(t) $ defined in \eqref{Fk}, i.e.,
$$ \FO (t) = 2 \int_0^t \diff \eta \: \left( \eta + \alO \right) \fO^2(\eta)$$,
which satisfies $\FO(0) = 0$ and 
$$ \FO ' (t) =  2  \left( t + \alO \right) \fO^2( t).$$ 
This gives
\begin{multline} 
\label{eq:disc l b step 1}
 - 2   \int_{\ann} \diff s \diff t \:  \fO^2(t) \left( t + \alO \right) \lf( i u, \partial_s u \ri)= - \int_{\ann} \diff s \diff t \:  \partial_t \FO (t) \lf( i u, \partial_s u \ri)\\
=  \int_{\ann} \diff s \diff t \:  \FO (t) \partial_t \lf( i u, \partial_s u \ri)
\end{multline}
by integrating by parts in the $t$ variable. Boundary terms vanish because $\FO = 0$ and $u=0$ respectively at $t=0$ and $t=c_0 |\log \eps|$.

A further integration by parts in $s$ then yields that for each fixed $t$
\begin{multline*}
\int_{0}^{\frac{2\pi}{\eps}} \diff s \: \partial_t \lf( i u, \partial_s u \ri) = \int_{0}^{\frac{2\pi}{\eps}} \diff s \: \lf[ i \partial_t u \partial_s u^* - i \partial_t u^* \partial_s u \ri] + \left[ u \dd_t u ^* + u ^* \dd_t u \right]_0 ^{\frac{|\dd \Om|}{\eps} } 
\\ = \int_{0}^{\frac{2\pi}{\eps}} \diff s \: \lf[ i \partial_t u \partial_s u^* - i \partial_t u^* \partial_s u \ri].
\end{multline*}
The boundary terms vanish because, since $\psi$ is periodic in the $s$ variable (by continuity of the GL order parameter), so is $u \dd_t u ^*$, as one can easily check. We may then write 
\bml{
 \label{eq:l b step 0}
- 2   \int_{\ann} \diff s \diff t \:  \fO^2(t) \left( t + \alO \right) \lf( i u, \partial_s u \ri) = \int_{\ann} \diff s \diff t \:  \FO (t) \partial_t \lf( i u, \partial_s u \ri) \geq - 2 \int_{\ann} \diff s \diff t \: \lf| \FO (t) \ri| \lf|  \partial_t u \ri| \lf| \partial_s u \ri|	\\
\geq  \int_{\ann} \diff s \diff t \: \FO (t) \lf[ \lf| \partial_t u \ri|^2 + \lf| \partial_s u \ri|^2 \ri],
}
where we have used the inequality $ab\leq (a^2 + b ^2) /2$ and the fact that $ \FO (t) $ is negative for any $ (s,t) \in \ann $ (see Lemma~\ref{lem:F prop}). 

Inserting the above lower bound in \eqref{eq:energy E0}, we get
\begin{equation}
 \label{eq:l b step 2}
\E_0[u] \geq  \int_{\ann} \diff s \diff t \: \KO(t) \lf[ \lf| \partial_t u \ri|^2 +\lf| \partial_s u \ri|^2 \ri] + \frac{1}{2 \hex}  \int_{\ann}\diff s \diff t \fO^4 \lf( 1 - |u|^2 \ri)^2,
\end{equation}
where $ \KO (t) $ is the cost function defined in \eqref{K0}. By Proposition \ref{pro:K positive}, $ \KO $ is positive in $ \ann $ and we can thus drop the first term as far as a lower bound is concerned, obtaining
\begin{equation}\label{eq:l b final}
 \E_0 [u] \geq \frac{1}{2 \hex}  \int_{\ann}\diff s \diff t \fO^4 \lf( 1 - |u|^2 \ri)^2 \geq 0,
\end{equation}
which leads to \eqref{eq:low main generic}.
\end{proof}

Note that putting \eqref{eq:up bound generic}, \eqref{eq:low bound annulus}, \eqref{eq:split func generic}, \eqref{eq:split ener generic} and \eqref{eq:l b final}  together, we obtain as a by-product 
\begin{equation}\label{eq:density generic pre}
\frac{1}{2 \hex}  \int_{\ann}\diff s \diff t \left( \fO ^2 - |\psi| ^2 \right) ^2 = \frac{1}{2 \hex}  \int_{\ann}\diff s \diff t \fO^4 \lf( 1 - |u|^2 \ri)^2 =  \OO(1), 
\end{equation}
which is essentially \eqref{eq:main density generic}. Indeed, once expressed in the original variables, this yields the estimate 
\[
\left\Vert |\glm| ^2 - \left| \fO \left( \tx\frac{\eta}{\eps} \right) \right| ^2 \right\Vert_{L ^2 (\annt)} =  \OO(\eps^2), 
\]
but using the usual decay estimates, one easily sees that the region $\Om \setminus \annt$ contributes a $\OO(\eps ^{\infty})$ to the integral, and also that (recall \eqref{fal point l u b}) $$\left\Vert \left| \fO \left( \tx\frac{\eta}{\eps} \right) \right| ^2 \right\Vert_{L ^2 (\annt)} \geq C \eps ^{1/2}. $$

\section{Surface Behavior for Disc Samples}\label{sec:proof disc}

The refinements we obtain in the disc case are based primarily on the fact that, the curvature being constant, the effective problem obtained from the ansatz \eqref{eq:ansatz} stays 1D even when considering subleading corrections to the energy. This allows us to go one term further in the energy expansion, which is the content of Theorem \ref{theo:disc energy}, whose proof is provided in Subsections \ref{sec:disc ub} and \ref{sec:disc lb}. As a by-product we obtain a refined control of crucial terms in the reduced energy of any minimizer, that lead to the proof of Theorem \ref{theo:Pan}, once combined with appropriate gradient estimates. Indeed, estimates for general domains in \eqref{eq:main density generic} are still compatible with small normal inclusions in the boundary layer. The refinements we obtain in the case of the disc rule out such inclusions once combined with some natural bounds on the gradient of the order parameter, proven in \cite{Al}.

Theorem \ref{theo:disc energy} is proven by comparing appropriate upper and lower bounds to the GL energy. The upper bound is discussed in Proposition \ref{pro:up bound disc}, whereas the lower bound is obtained by combining \eqref{eq:low bound annulus disc} in Proposition \ref{replacement: pro} with the result of Proposition \ref{pro:low bound disc} below.

\subsection{Energy upper bound}\label{sec:disc ub}




The upper bound part of the proof of Theorem \ref{theo:disc energy} is as usual the easiest one to get, since it suffices to provide a suitable trial configuration $ (\trial,\atrial) \in \gldom$ to test the GL energy. We already know that, according to the magnetic field replacement and restriction to the boundary layer discussed in Subsection \ref{sec:restriction annulus}, the vector potential $ \atrial $ should be equal or close enough to the one associated with the applied field. Additionally the order parameter $ \trial $ should be essentially supported in the boundary layer $ \annt $. Moreover, as suggested by our heuristic analysis, the modulus of $ \trial $ should be given by the 1D profile minimizing \eqref{eq:1D func disc} with $ \alpha $ equal to the optimal phase $ \alk $. Finally the winding number of $ \trial $ should be approximated by the optimal value $ \alk $ (actually $ \frac{\alk}{\eps} $ after the proper rescaling), so that the phase of $ \trial $ should have the form $ i\frac{
\alk}{\eps} \vartheta $. 
Now this last requirement is the only non-trivial one to meet, since the trial function $ \trial $ must be a single-valued function. Unfortunately in general $ \frac{\alk}{\eps} $ is not an integer and therefore not an allowed phase. To overcome this problem we prove the following

	\begin{lem}[{\bf Optimal integer phase}]
		\label{lem:opt int phase}
		\mbox{}	\\
		For any $ k > 0  $ and $ \eps$ small enough, let $ \alk \in \R $ and $ \eonek $ be defined as in \eqref{eq:optimal energy} and set
		\beq
			\alkt : = \eps \lf\lfloor \frac{\alk}{\eps} \ri\rfloor.
		\eeq
		Then one has
		\beq
			\label{eq:opt int phase}
			\eonek \leq \fone_{k,\alkt}[\fk]  = \eonek + \OO(\eps^2|\log\eps|).
		\eeq
	\end{lem}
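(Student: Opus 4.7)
The first inequality $\eonek \leq \fone_{k,\alkt}[\fk]$ is immediate from the definition of $\eonek$ as a joint minimum over $\alpha$ and $f$, so the content of the lemma is the quadratic bound on the right-hand side. My plan is to expand the difference $\fone_{k,\alkt}[\fk] - \fone_{k,\alk}[\fk]$ explicitly and show that the first-order term vanishes thanks to the Feynman--Hellmann identity \eqref{FH nonlinear}, leaving only a quadratic correction.

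Concretely, I would write
\begin{equation*}
\fone_{k,\alkt}[\fk] - \fone_{k,\alk}[\fk] = \int_{\ieps} \diff t \: (1-\eps k t) \: \frac{(\alkt + t - \tfrac12 \eps k t^2)^2 - (\alk + t - \tfrac12 \eps k t^2)^2}{(1-\eps k t)^2} \: \fk^2(t),
\end{equation*}
since only the potential $\potkal$ depends on $\alpha$. Expanding the square using $a^2 - b^2 = (a-b)(a+b)$ with $a - b = \alkt - \alk$ gives
\begin{equation*}
\fone_{k,\alkt}[\fk] - \eonek = 2(\alkt - \alk) \int_{\ieps} \diff t \: \frac{t + \alk - \tfrac12 \eps k t^2}{1 - \eps k t} \fk^2(t) + (\alkt-\alk)^2 \int_{\ieps} \diff t \: \frac{\fk^2(t)}{1 - \eps k t}.
\end{equation*}

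The key observation is that the linear term vanishes identically: by \eqref{FH nonlinear}, which expresses the first-order optimality of $\alk$, the integral multiplying $2(\alkt - \alk)$ is exactly zero. For the quadratic term I use $|\alkt - \alk| \leq \eps$ (by definition of the floor function) together with the trivial bounds $(1-\eps k t)^{-1} \leq 1 + \OO(\eps |\log\eps|)$ on $\ieps$ and $\int_{\ieps} \fk^2 = \OO(1)$ (which follows from the pointwise estimate \eqref{fal point l u b}). This yields an $\OO(\eps^2)$ remainder, which is actually slightly better than the $\OO(\eps^2|\log\eps|)$ claimed in the statement.

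There is no real obstacle: the argument is a one-line application of the Feynman--Hellmann principle. The only point worth verifying carefully is that the first-order-in-$(\alkt-\alk)$ contribution is indeed killed by \eqref{FH nonlinear}, which crucially requires that the only $\alpha$-dependence in $\fone_{k,\alpha}[\fk]$ enters through the potential $\potkal$ (the function $\fk$ is held fixed, so differentiating under the integral sign and evaluating at $\alk$ gives exactly the vanishing Feynman--Hellmann integral).
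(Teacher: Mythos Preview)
Your argument is correct and is essentially identical to the paper's: both expand $\fone_{k,\alkt}[\fk]-\eonek$, kill the linear term via the Feynman--Hellmann identity \eqref{FH nonlinear}, and bound the remaining quadratic term using $|\alkt-\alk|\leq\eps$. The only difference is that the paper estimates $\int_{\ieps}(1-\eps k t)^{-1}\fk^2$ using the crude bound $\fk\leq 1$ from \eqref{fal estimate}, giving $\OO(|\log\eps|)$ for the integral and hence the stated $\OO(\eps^2|\log\eps|)$, whereas you invoke the Gaussian decay \eqref{fal point l u b} to get $\OO(1)$ and thus a slightly sharper $\OO(\eps^2)$ remainder.
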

	
	\begin{proof}
		The inequality $ \eonek \leq  \eone_{k,\alkt} $ is trivial. Exploiting the optimality of $ \alk $ \eqref{FH nonlinear} as well as the bound \eqref{fal estimate}, one obtains the opposite inequality (recall that $f_k = f_{k,\alk}$):
		\bml{
 			\fone_{k,\alkt}[\fk] = \eonek + \int_{0}^{\teps} \diff t \: (1 - \eps k t)^{-1} \lf\{ \alkt^2 - \alk^2 + 2 \lf(\alkt - \alk\ri) \lf(t - \tx\frac{1}{2} \eps k t^2 \ri) \ri\} \fk^2	\\
			= \eonek +  \lf( \alkt - \alk \ri)^2 \int_{0}^{\teps} \diff t \: (1 - \eps k t)^{-1} \fk^2 = \OO(\eps^2|\log\eps|),
		}
		since $ (\alkt - \alk)^2 = \OO(\eps^2) $.
	\end{proof}

	\begin{pro}[\textbf{Energy upper bound for disc samples}]
		\label{pro:up bound disc}
		\mbox{}	\\
		Let $\Om$ be a disc of radius $R=k ^{-1} > 0 $. For any $ 1 < \hex < \theo^{-1} $ and $ \eps$ sufficiently small, we have the upper bound
		\beq
			\label{eq:up bound disc}
			\glee \leq \frac{2\pi R \: \eonek}{\eps} + C\eps|\log\eps|.
		\eeq
	\end{pro}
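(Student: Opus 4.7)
The strategy is to construct an explicit trial pair $(\trial, \atrial) \in \gldom$ and evaluate $\glfe[\trial, \atrial]$. For the magnetic potential we use the symmetric gauge $\atrial(\rv) := \tx\frac{1}{2}\rvp$, which satisfies $\curl \atrial \equiv 1$ in $\Om$ so that the magnetic term of $\glfe$ vanishes exactly. For the order parameter we employ the ansatz suggested by our heuristics in Subsection \ref{sec:restriction annulus}: in polar coordinates $(r,\vartheta)$,
\[
\trial(r,\vartheta) := \chi\bigl(\tx\frac{R-r}{\eps}\bigr)\, \fk\bigl(\tx\frac{R-r}{\eps}\bigr)\, \exp(i N_\eps \vartheta),
\]
where $\chi : \R^+ \to [0,1]$ is a smooth cutoff equal to $1$ on $[0, c_0|\log\eps| - 1]$ and vanishing on $[c_0|\log\eps|, +\infty)$, while $N_\eps \in \Z$ is chosen as the unique integer satisfying
\[
\Bigl|N_\eps + \tx\frac{R^2}{2\eps^2} + \tx\frac{R \alk}{\eps}\Bigr| \leq \tx\frac{1}{2}.
\]
The integer constraint enforces single-valuedness of $\trial$, while the cutoff supports it in the boundary layer $\annt$.

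The main computation is to rewrite $\glfe[\trial,\atrial]$ in the boundary coordinates $(s,t) := (R\vartheta/\eps, (R-r)/\eps)$. In the symmetric gauge,
\[
\Bigl|\bigl(\nabla + i\eps^{-2}\atrial\bigr)\trial\Bigr|^2 = |\partial_r \trial|^2 + \tx\frac{1}{r^2}\Bigl|\partial_\vartheta \trial + i \tx\frac{r^2}{2\eps^2}\trial\Bigr|^2,
\]
and, using $r = R(1-\eps k t)$ with $k=1/R$, one verifies the algebraic identity
\[
N_\eps + \frac{r^2}{2\eps^2} = -\frac{R}{\eps}\Bigl(\alpha_\eps + t - \tx\frac{1}{2}\eps k t^2\Bigr), \qquad \alpha_\eps := -\tx\frac{\eps N_\eps}{R} - \tx\frac{R}{2\eps}.
\]
Changing variables (with $r\,\diff r\,\diff\vartheta = \eps R(1-\eps k t)\,\diff s\,\diff t$) and using the exponential decay \eqref{fal point l u b} of $\fk$ to absorb the cutoff contributions into an $\OO(\eps^\infty)$ remainder, the GL energy reduces to the one-dimensional expression
\[
\glfe[\trial, \atrial] = \frac{2\pi R}{\eps}\, \fone_{k,\alpha_\eps}[\fk] + \OO(\eps^\infty).
\]

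Finally we compare $\fone_{k,\alpha_\eps}[\fk]$ with $\eonek$. By construction $|\alpha_\eps - \alk| \leq \eps/(2R) = \OO(\eps)$, and mimicking the proof of Lemma \ref{lem:opt int phase} — expanding $\fone_{k,\alpha}[\fk]$ around $\alk$ and using the criticality condition \eqref{FH nonlinear} to kill the first-order term — we obtain
\[
\fone_{k,\alpha_\eps}[\fk] = \eonek + (\alpha_\eps - \alk)^2 \int_{\ieps}\tx\frac{\fk^2(t)}{1-\eps k t}\diff t = \eonek + \OO(\eps^2|\log\eps|).
\]
Multiplying by $2\pi R/\eps$ gives exactly the error $\OO(\eps|\log\eps|)$ of \eqref{eq:up bound disc}.

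The principal obstacle is the discrete constraint on the winding number: only integer values of $N_\eps$ yield single-valued trial functions, so $\alpha_\eps$ cannot be tuned exactly to $\alk$ and a residual $\OO(\eps)$ gap remains. This unavoidable gap is precisely what fixes the subleading $\OO(\eps|\log\eps|)$ error in the upper bound; all other approximations (the cutoff, the use of the explicit symmetric gauge instead of $\aavm$) are rendered exponentially small by Proposition \ref{pro:point est fal}.
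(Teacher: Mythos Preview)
Your proof is correct and follows essentially the same strategy as the paper: a trial state built from the 1D profile $\fk$ times an integer-winding phase, together with a vector potential of unit curl, reduces the GL energy to $\frac{2\pi R}{\eps}\fone_{k,\alpha}[\fk]$ for some $\alpha$ within $\OO(\eps)$ of $\alk$, and Lemma~\ref{lem:opt int phase} then gives the $\OO(\eps|\log\eps|)$ remainder. The only cosmetic differences are your use of the symmetric gauge $\frac{1}{2}\rvp$ (the paper takes $\atrial = -\frac{R^2-r^2}{2r}\mathbf{e}_\vartheta$, a gauge-equivalent choice that shifts the winding number) and your cutoff of $\fk$ inside $\ieps$ rather than the paper's smooth extension beyond $\teps$; both are rendered harmless by the exponential decay of $\fk$. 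Note one slip: the Jacobian should read $r\,\diff r\,\diff\vartheta = \eps^2(1-\eps kt)\,\diff s\,\diff t$, not $\eps R(1-\eps kt)$, though this does not affect your final expression.
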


	\begin{proof}
		We denote radial coordinates by $(r,\vartheta)$ and define our trial order parameter as
		\beq
			\label{trial}
			\trial(\rv) : = 
			\begin{cases}
				\fk\lf(\tx\frac{1-r}{\eps}\ri) \exp\lf\{-i \lf\lfloor \tx\frac{\alk}{\eps} \ri\rfloor \vartheta \ri\},		&	\mbox{in } \annt,	\\
				\fk(\teps) \chi(r) \exp\lf\{-i \lf\lfloor \tx\frac{\alk}{\eps} \ri\rfloor \vartheta \ri\},				&	\mbox{otherwise},
			\end{cases}
		\eeq
		with $ \chi $ a smooth cut-off function such that $ \chi(1-\eps \teps) = 1 $ and going to zero exponentially fast for smaller $ r $. The trial magnetic field on the other hand has to be close to the minimizing one and we then pick
		\beq
			\atrial(\rv) : =  - \frac{R^2 - r^2}{2 r} \mathbf{e}_{\vartheta}
		\eeq
		as our trial vector potential. 
		Now, thanks to the exponential smallness of $ \fk(\teps) $ in $ \eps $, we can always choose the cut-off $ \chi $ in such a way that the contribution to the GL energy coming from outside $ \annt $ is itself $ \OO(\eps^{\infty}) $. Hence a rather straightforward computation yields
		\beq
			\glfe \lf[ \trial, \atrial \ri] = \annft \lf[\fk(t) \exp\lf\{-i \alkt s \ri\} \ri] + \OO(\eps^{\infty}),
		\eeq
		with $ \annft $ standing for the functional \eqref{eq:GLf rescale disc}, where $ \aae $ has been replaced with
		\beq
			- t + \half \eps t^2.
		\eeq
		Then it is trivial to verify that
		\beq
			\annft \lf[\fk(t) \exp\lf\{-i \alkt s \ri\} \ri] = \frac{2 \pi R \: \fone_{k,\alkt}[\fk] }{\eps}\leq  \frac{2 \pi R \: \eonek}{\eps} + C\eps|\log\eps|,
		\eeq
		thanks to Lemma \ref{lem:opt int phase} above.		
	\end{proof}

\subsection{Energy lower bound}\label{sec:disc lb}

Our starting point is the inequality \eqref{eq:low bound annulus disc}, which allows to restrict the GL energy to the (rescaled) boundary layer $ \ann $. In this section we present a refined analysis of the functional $\annf$, which leads to the following main result:

\begin{pro}[\textbf{Lower bound to $\annf $}]\label{pro:low bound disc}\mbox{}\\
Let $\Om$ be a disc of radius $R=k ^{-1} > 0 $. For any $ 1 < \hex < \theo^{-1} $, $c_0$ large enough and 
$$\psi(s,t) = \glm(\rv(\eps s,\eps t)) e^{-i\phi_{\eps}(s,t)}, $$
we have, in the limit $\eps \to 0$, 
\begin{equation}\label{eq:low bound disc}
\annf [\psi] \geq \frac{2 \pi R \: \eonek}{\eps} + \OO(\eps^{\infty}).
\end{equation}
\end{pro}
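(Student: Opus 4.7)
The plan is to mimic the splitting argument of Proposition~\ref{pro:low bound generic}, now retaining the curvature corrections and using the disc cost function $\Kk$ from Proposition~\ref{pro:K positive} in place of $\KO$. Since $\fk > 0$ throughout $\ieps$ by Lemma~\ref{lem:opt phase}, I set
\[
u(s,t) := \psi(s,t)\,\fk(t)^{-1}\,e^{i(\alk + \eps\deps)s};
\]
although $u$ itself is not $s$-periodic, the product $u\,\dd_t u^{*}$ is, exactly as in Lemma~\ref{lem:split general}. Using the variational equation~\eqref{var eq fal} for $\fk$, the Neumann boundary conditions $\fk'(0) = \fk'(\teps) = 0$, and an integration by parts in $t$ to eliminate the cross term $(1-\eps kt)\fk\fk'\,\dd_t|u|^{2}$, a computation analogous to the one leading to~\eqref{eq:split ener generic} (but now keeping the curvature $k$) produces the exact splitting identity
\[
\annf[\psi] = \frac{2\pi R\,\eonek}{\eps} + \E_k[u],
\]
where $\E_k[u]$ consists of $\fk^{2}$-weighted kinetic terms in $s$ and $t$, a current term of the form $-\int \Fk'(t)\,(iu,\dd_s u)\,ds\,dt$, and the non-negative quartic penalty $\int(1-\eps k t)\fk^{4}(1-|u|^{2})^{2}/(2\hex)$.

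I would then process the current term. An integration by parts in $t$, using $\Fk(0)=\Fk(\teps)=0$ from Lemma~\ref{lem:F prop}, rewrites it as $\int_{\ann}\Fk\,\dd_t(iu,\dd_s u)\,ds\,dt$; a further integration by parts in $s$ (whose boundary contributions vanish thanks to the $s$-periodicity of $u\,\dd_t u^{*}$) converts this into $-2\Im\int_{\ann}\Fk\,\dd_t u\,\dd_s u^{*}\,ds\,dt$. A weighted Young inequality, exploiting $\Fk \leq 0$, then yields
\[
-2\Im\int_{\ann}\Fk\,\dd_t u\,\dd_s u^{*} \;\geq\; \int_{\ann}\Fk \lf[(1-\eps k t)|\dd_t u|^{2} + \tfrac{|\dd_s u|^{2}}{1-\eps k t}\ri]\,ds\,dt,
\]
so that, combining with the $\fk^{2}$-weighted kinetic terms and dropping the non-negative quartic,
\[
\E_k[u] \;\geq\; \int_{\ann}\big(\fk^{2}+\Fk\big)\lf[(1-\eps k t)|\dd_t u|^{2} + \tfrac{|\dd_s u|^{2}}{1-\eps k t}\ri] ds\,dt.
\]
On $\annbk$, Proposition~\ref{pro:K positive} (applied with $\DirC$ arbitrarily small) gives $\fk^{2}+\Fk \geq 0$, so the portion of this integral with $t\in\annbk$ is non-negative and may be discarded.

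The main obstacle is controlling the leftover contribution over the ``bad'' layer $[0,2\pi R/\eps]\times[\btik,\teps]$. On this layer the pointwise bound~\eqref{fal point l u b}, together with $\Fk(\teps)=0$, gives $|\fk^{2}+\Fk| \leq C|\log\eps|^{7}\fk^{2}(\teps) = \OO(\eps^{\infty})$, but $|\dd u|^{2}$ carries a compensating $\fk^{-2}$ factor. Writing
\[
|\dd u|^{2} \leq C\,\frac{|\dd\psi|^{2} + |\log\eps|^{2}|\psi|^{2}}{\fk^{2}}
\]
and using the monotonicity of $\fk$ on the bad layer ($\fk \geq \fk(\teps)$ there, by Proposition~\ref{min fone: pro}), I would bound
\[
\int_{\text{bad}}|\dd u|^{2} \;\leq\; \frac{C|\log\eps|^{2}}{\fk^{2}(\teps)} \int_{\text{bad}}\big(|\dd\psi|^{2} + |\psi|^{2}\big).
\]
The essential cancellation is that the two copies of $\fk^{2}(\teps)$---one from $|\fk^{2}+\Fk|$ and one from the above denominator---compensate, leaving only the Agmon-small integral $\int_{\text{bad}}(|\dd\psi|^{2}+|\psi|^{2})$. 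By the rescaled Agmon estimate~\eqref{eq:est grad psi rescaled}, this integral is bounded by $Ce^{-A\btik}/\eps \leq C\eps^{Ac_{0}-1}$, which is $\OO(\eps^{\infty})$ provided $c_{0}$ is chosen large enough; matching this requirement on $c_{0}$ with the one already fixed in Proposition~\ref{replacement: pro} is the subtle point to verify. Altogether the two regions give $\E_k[u] \geq -\OO(\eps^{\infty})$, which is exactly~\eqref{eq:low bound disc}.
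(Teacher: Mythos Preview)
Your proof is correct and follows essentially the same route as the paper's: the splitting via Lemma~\ref{lem:split disc}, the integration by parts in $t$ and $s$ using $\Fk(0)=\Fk(\teps)=0$, the weighted Young inequality, and the Agmon control of the thin layer $[\btik,\teps]$ are all exactly what the paper does. The only organizational differences are that the paper applies Young's inequality on $\annbb$ only (keeping the raw term $\Fk\,\dd_t(iu,\dd_s u)$ on the bad layer) and uses the sharper pointwise bound $|\Fk(t)|\leq C|\log\eps|^2\fk^2(t)$ rather than your detour through $\fk^2(\teps)$---but these lead to the same conclusion. One practical point worth noting: the paper deliberately retains the $\de\fk^2$-kinetic term and the quartic $\fk^4(1-|u|^2)^2$ term throughout the estimate, because these positive leftovers are precisely the content of Corollary~\ref{cor:Pan disc pre}, which drives the proofs of Theorems~\ref{theo:Pan} and~\ref{teo:circulation}; your version, having dropped them, proves the proposition but would need to be rerun to recover that corollary.
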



As for general domains the first step towards the proof of the above result is an improved energy splitting associating $\annf$ with a reduced energy functional where $\fk$ appears as a weight. Recall the definition of the superconducting current given in \eqref{eq:current u}.

	\begin{lem}[\textbf{Energy splitting for disc samples}]
		\label{lem:split disc}
		\mbox{}	\\
		Under the assumptions of Proposition \ref{pro:low bound disc}, define $u(s,t)$ by setting 
		\beq
			\label{def u}
			\psi(s,t) =  \fk ( t) \: u( s,t) \: \exp \lf\{ - i \left(\alk+  \eps \deps \ri) s\ri\},
		\eeq
		where $\deps$ is defined in \eqref{eq:deps}. Then one has the identity
		\beq
			\label{splitting}
			\annf[\psi] = \frac{2\pi R \: \eonek}{\eps} + \E_k [u],
		\eeq
		where
		\bml{
			\label{energy E}
			\E_k [u] : =  \int_{\ann} \diff s \diff t \: \lf(1 - \eps k t \ri) \fk^2(t) \bigg\{ \lf| \partial_t u \ri|^2 + \frac{1}{(1- \eps k t)^2} \lf| \partial_s u \ri|^2 \\
			 - 2  \beps(t) \lf( i u, \partial_s u \ri) + \frac{1}{2 \hex} \fk ^2(t) \lf( 1 - |u|^2 \ri)^2 \bigg\},
		}
		\beq
			\label{beps}
			\beps(t) : = \frac{t + \alk - \half \eps k t^2}{(1 - \eps k t)^2}.
		\eeq
	\end{lem}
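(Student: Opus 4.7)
The plan is a direct algebraic identity, parallel to the proof of Lemma~\ref{lem:split general} but tracking the curvature factors $(1-\eps k t)$. Since $\fk > 0$ on $\ieps$ in the regime considered, the substitution~\eqref{def u} is well defined. The $\eps\deps$ shift inside the phase is tuned precisely so that $\aae(s,t)-(\alk+\eps\deps) = -(t + \alk - \half\eps k t^2)$, which is, up to the weight $(1-\eps k t)$, the quantity appearing in $\beps(t)$ of~\eqref{beps}. This is what will allow the current-type term of $\annf[\psi]$ to produce the $\beps(t)\,(iu,\partial_s u)$ piece of $\E_k[u]$ after expansion.

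First I would substitute the ansatz and expand the two kinetic terms of $\annf[\psi]$. The normal-derivative term produces $\fk^2|\partial_t u|^2 + \fk\fk'\,\partial_t|u|^2 + (\fk')^2|u|^2$, while the covariant tangential derivative yields $\fk^2|\partial_s u|^2 + \fk^2(t+\alk - \half\eps k t^2)^2|u|^2 - 2(t+\alk - \half\eps k t^2)\fk^2(iu,\partial_s u)$. Once weighted by the Jacobian $(1-\eps k t)$ and the $(1-\eps k t)^{-2}$ prefactor of the $s$-kinetic, these four ingredients already account for all the gradient and current contributions of $\E_k[u]$, with the correct weight $(1-\eps k t)\fk^2$ in front of $|\partial_t u|^2$ and $\fk^2/(1-\eps k t)$ in front of $|\partial_s u|^2$, and $\beps(t)(1-\eps k t)\fk^2$ multiplying the current.

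The core step is an integration by parts in $t$ applied to the cross term $\int_\ann (1-\eps k t)\fk\fk'\,\partial_t|u|^2$. Boundary contributions vanish identically because $\fk'(0)=\fk'(\teps)=0$ by the Neumann conditions of Proposition~\ref{min fone: pro}, so \emph{no} assumption on the boundary values of $u$ is needed (this is important, as $\psi$ does not vanish at $t=\teps$ in the disc setting, unlike in Lemma~\ref{lem:split general}). Differentiating the weight $(1-\eps k t)\fk\fk'$ and using the Euler--Lagrange equation~\eqref{var eq fal} to eliminate $\fk''$ turns the cross term into $-\int_\ann (1-\eps k t)|u|^2\bigl[(\fk')^2 + \potk\fk^2 - \hex^{-1}(1-\fk^2)\fk^2\bigr]$. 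The $(\fk')^2|u|^2$ and $\potk\fk^2|u|^2$ pieces now cancel exactly the matching contributions from the kinetic expansions above, leaving only terms quartic in $\fk$ and quadratic/quartic in $u$.

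Finally, collecting these quartic remnants together with the $-(2|\psi|^2-|\psi|^4)/(2b)$ term of $\annf$, and completing the square via $|u|^4 - 2|u|^2 = (1-|u|^2)^2 - 1$, produces exactly the nonlinear contribution of $\E_k[u]$ plus the residual constant $-\int_\ann (1-\eps k t)\fk^4/(2b)$. Integrating this residual in $s$ over $[0,2\pi R/\eps]$ and invoking the explicit representation~\eqref{eone explicit} of $\eonek$ yields precisely $2\pi R\,\eonek/\eps$, which is the claim. The only delicacies are bookkeeping: tracking all the $(1-\eps k t)$ factors through the expansions, verifying the triple cancellation of the $(\fk')^2$, $\potk\fk^2$ and linear-in-$\fk^2$ pieces, and noting that since no integration by parts in $s$ is ever performed, the fact that $u$ itself is not periodic (only $|u|$ is) causes no trouble.
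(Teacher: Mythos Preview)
Your proposal is correct and follows essentially the same route as the paper's proof: expand the kinetic terms after the substitution, integrate by parts the cross term $(1-\eps k t)\fk\fk'\,\partial_t|u|^2$ using the Neumann conditions $\fk'(0)=\fk'(\teps)=0$, invoke the variational equation~\eqref{var eq fal} to trigger the cancellations, and identify the remaining constant via~\eqref{eone explicit}. Your observation that here both boundary terms vanish through the Neumann conditions on $\fk$ (rather than a Dirichlet condition on $u$ as in Lemma~\ref{lem:split general}) is exactly the point the paper highlights.
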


	\begin{proof}
	Note that \eqref{def u} makes sense in view of Corollary \ref{nontrivial: cor} and the assumption $1 < \hex < \theo^{-1} $. The proof is essentially identical to that of Lemma \ref{lem:split general}. Note that, as in Lemma \ref{lem:split general}, $ u $ needs not be periodic in the $ s $ variable, since in general $ \frac{\alk}{\eps} + \deps $ is not an integer number. However this does not matter in the proof of \eqref{splitting} and for later purposes (see the proof of Proposition \ref{pro:low bound disc}) we stress that, unlike $ u $ itself, $ u^* \partial_t u $ is periodic in $ s $ with period $ \frac{2\pi}{\eps} $. 

	The key point is again the computation of the $t$-component of the kinetic energy, which proceeds as follows:
	\bml{
	\int_{\ann} \diff s \diff t \: \lf(1 - \eps k t \ri) \lf| \partial_t \psi \ri|^2 = \int_{\ann} \diff s \diff t \: \lf(1 - \eps k t \ri) \left\{ \fk ^2 |\dd_t u| ^2 + \fk \dd_t \fk  \dd_t \lf| u \ri|^2  + |u| ^2 |\dd_t \fk| ^2\right\},\\
	= \int_{\ann}\diff s \diff t \: \lf(1 - \eps k t \ri) \fk ^2 |\dd_t u| ^2 - \int_{\ann} \diff s \diff t \: \fk |u| ^2 \dd_t \left( \lf(1 - \eps k t \ri) \dd_t \fk\right)\\
	= \int_{\ann}\diff s \diff t \: \lf(1 - \eps k t \ri) \fk ^2 |\dd_t u| ^2 + \int_{\ann}\diff s \diff t \: \lf(1 - \eps k t \ri) \fk ^2 |u| ^2  \left( - \potk \fk + \tx\frac{1}{\hex} \fk \left( 1 - \fk ^2 \right)\right),
	}
	where we have used the Neumann boundary conditions satisfied by $\fk$ at both boundaries of the interval $\ieps$ and the variational equation \eqref{var eq fal} for $\al = \alk$ rewritten as
	\[
	 - \dd_t \left( \lf(1 - \eps k t \ri) \dd_t \fk\right) + \left( 1-\eps kt \right) \fk \left( \potk - \tx\frac{1}{\hex} \fk \left( 1 - \fk ^2 \right)\right) = 0. 
	\]
	The computation of the other terms in the functional is trivial and there only remains to group them properly, use \eqref{eone explicit} and the definition \eqref{pot} to conclude.
	\end{proof}

We may now complete the proof of the lower bound \eqref{eq:low bound disc}. By Lemma \ref{lem:split disc}, it suffices to bound from below the energy $ \E_k[u] $ given by \eqref{energy E}. In comparison with the corresponding step in Subsection \ref{sec:generic low bound}, we face the technical difficulty that we do not know that the relevant cost function is positive in the whole domain but only in 
\begin{equation}\label{eq:annbb}
 \annbb := \left\{ (s,t) \in \ann, \: t \leq \btik \right\}
\end{equation}
where $\btik$ is defined at the beginning of Section \ref{sec:model disc} and satisfies \eqref{bxi}. The complementary region $\ann \setminus \annbb$ is dealt with by more ``brute force'' estimates, showing that its final contribution to the energy is at most $\OO (\eps ^{\infty})$.

\begin{proof}[Proof of Proposition \ref{pro:low bound disc}] 
In this proof $\psi$ and $u$ are defined starting from a GL minimizer as in the statement of Proposition \ref{pro:low bound disc}. We again start by rewriting the momentum term (second term of \eqref{energy E}). The potential function $ \Fk(t) $ defined in \eqref{Fk}  satisfies 
		$$ \partial_t \Fk (t) =  2  (1- \eps k t ) \beps(t) \fk^2( t),$$
		and $\Fk (0) = \Fk (\teps) = 0$, so that
		\begin{equation} 
			\label{l b step 1}
 			- 2   \int_{\ann} \diff s \diff t \: \lf(1 - \eps k t \ri) \fk^2(t) \beps(t) \lf( i u, \partial_s u \ri) = \int_{\ann} \diff s \diff t \:  \Fk (t) \partial_t \lf( i u, \partial_s u \ri).
		\end{equation}
		As before, a further integration by parts in the $s$ yields for each fixed $t$ (here we are using the $s$-periodicity of $ u^* \partial_t u $)
		\beq
			\int_{0}^{\frac{2\pi}{\eps}} \diff s \: \partial_t \lf( i u, \partial_s u \ri) = \int_{0}^{\frac{2\pi}{\eps}} \diff s \: \lf[ i \partial_t u \partial_s u^* - i \partial_t u^* \partial_s u \ri]. 
		\eeq
		We can thus easily estimate inside $ \annbb $
		\bml{
 			\label{l b step 0}
			 \int_{\annbb} \diff s \diff t \:  \Fk (t) \partial_t \lf( i u, \partial_s u \ri) \geq - 2 \int_{\annbb} \diff s \diff t \: \lf| \Fk (t) \ri| \lf|  \partial_t u \ri| \lf| \partial_s u \ri|	\\
			 \geq  \int_{\annbb} \diff s \diff t \: (1 - \eps k t) \Fk (t) \lf[ \lf| \partial_t u \ri|^2 + \tx\frac{1}{(1 - 
			 \eps k t)^2} \lf| \partial_s u \ri|^2 \ri],
		}
		where we have used the inequality $ab\leq \frac{1}{2} (\delta a^2 + \delta ^{-1} b^2)$ and the fact that $ \Fk (t) $ is negative for any $ (s,t) \in \ann $ (see Lemma \ref{lem:F prop}). 
%
%
		Combining the above lower bound with \eqref{l b step 1} and dropping the part of the kinetic energy located in $ \ann\setminus\annbb $, we get
		\begin{multline}
 			\label{l b step 2}
			\E_k[u] \geq  \int_{\annbb} \diff s \diff t \: \lf(1 - \eps k t \ri) \Kk(t) \bigg[ \lf| \partial_t u \ri|^2 + \tx\frac{1}{(1- \eps k t)^2} \lf| \partial_s u \ri|^2 \bigg] 	\\
			+ \disp\int_{\ann\setminus\annbb } \diff s \diff t \:  \Fk (t) \partial_t \lf( i u, \partial_s u \ri) + \de \int_{\ann} \diff s \diff t \: \lf(1 - \eps k t \ri) \fk ^2 \lf[ \lf| \partial_t u \ri|^2 + \tx\frac{1}{(1- \eps k t)^2} \lf| \partial_s u \ri|^2 \ri]	\\
			 +  \disp\frac{1}{2 \hex} \int_{\ann} \diff s \diff t \: \lf(1 - \eps k t \ri) \fk ^4 \lf( 1 - |u|^2 \ri)^2,
		\end{multline}
		where $ \Kk (t) $ is the cost function defined in \eqref{K}, for some given $\de$, satisfying \eqref{eq:de}.

		By Proposition \ref{pro:K positive}, $ \Kk $ is positive inside $ \annbb $ and we can thus drop the first term from the lower bound. The main point is now to control the second term. For this purpose we act as in \eqref{l b step 0} to write
		\beq
			\int_{\ann\setminus\annbb} \diff s \diff t \:  \Fk (t) \partial_t \lf( i u, \partial_s u \ri) \geq 2 \int_{\ann\setminus\annbb} \diff s \diff t \: \Fk (t)  \lf|  \partial_t u \ri| \lf| \partial_s u \ri|.
		\eeq
		Then we notice that $ \fk $ is decreasing in $ \ann\setminus\annbb $, so that we can estimate
		\beq
			|\Fk(t)| = - \Fk(t)  = \int_{\teps} ^t d\eta \: \frac{\eta + \alk - \half \eps k \eta ^2}{1-\eps k\eta} \fk ^2 (\eta) \leq C |\log\eps|^2 \fk^2(t),
		\eeq
		because $\Fk$ vanishes at the boundaries of $\ieps$ and $\ieps \setminus \annbk$ has a measure $\OO(|\log \eps|)$ by \eqref{bxi}. Moreover, thanks to the bound \eqref{fal derivative}, we also have 
		\beq
			\lf|\partial_s u \ri| \leq \fk^{-1}(t) \lf| \partial_s \psi(s,t) \ri|,
		\eeq
		\beq
			\lf| \partial_t u \ri| \leq \fk^{-2}(t) \lf| \fk^{\prime}(t) \ri| |\psi(s,t)| + \fk^{-1}(t) \lf| \partial_t \psi(s,t) \ri| \leq  C \fk^{-1}(t) \lf[ |\log\eps|^3 |\psi(s,t)| + \lf| \partial_t \psi(s,t) \ri| \ri],
		\eeq
		and therefore
		\beq
 			\label{l b step 4}
			\int_{\ann\setminus\annbb} \diff s \diff t \:  \Fk(t) \partial_t \lf( i u, \partial_s u \ri) \geq - C |\log\eps|^2 \int_{\ann\setminus\annbb} \diff s \diff t \: \lf[ |\log\eps|^3 |\psi| + |\nabla \psi| \ri] \lf| \nabla \psi \ri|.
		\eeq
		Now, since $\psi(s,t) = \glm(\rv(\eps s, \eps t)) $, $ \rv(s,t) $ standing for the diffeomorphism given by the change into rescaled boundary coordinates, we may use the Agmon estimate \eqref{est grad psi}. It implies, for some finite constant $A>0$,
		\bml{
			\label{l b step 5}
			\int_{\ann\setminus\annbb} \diff s \diff t \: \lf| \nabla \psi \ri|^2 = \int_{\ann\setminus\annbb} \diff s \diff t \: \lf| \lf(\nabla_{s,t} - i \nabla_{s,t} \phi_{\eps} \ri) \glm(\rv(\eps s, \eps t)) \ri|^2 	\\
			\leq 2 \int_{\annt} \diff \rv \:  \lf| \lf( \nabla + i \tx\frac{\aavm}{\eps^2} \ri) \glm \ri|^2 + \frac{C}{\eps^2} |\log\eps|^2 \int_{\annt} \diff \rv \: \lf| \glm \ri|^2 	\\
			\leq C \frac{|\log\eps|^2}{\eps^2} e^{-A \btik} \int_{\annt} \diff \rv \: e^{ \frac{A(1 - r)}{\eps} } \lf\{ \lf| \glm \ri|^2 + \eps^2 \lf| \lf( \nabla + i \tx\frac{\aavm}{\eps^2} \ri) \glm \ri|^2 \ri\}	\\
		= \OO(\eps^{-1 + c_0 A} |\log\eps|^{2+\beta})
		}
		for some finite $ \beta > 0 $, since $ t_{\eps,k} = c_0 |\log\eps| - C\log|\log\eps| $ by \eqref{bxi} and therefore
		\bdm
			e^{-A \bti} = \eps^{c_0 A} |\log\eps|^\beta.
		\edm
		Note that we used the original form of the Agmon estimate \eqref{est grad psi} and the estimate
		\bdm
			\lf| \nabla_{s,t} \phi_{\eps} + \tx\frac{\aavm(\rv(\eps s, \eps t))}{\eps} \ri| \leq  t + o(1) = \OO(|\log\eps|),
		\edm
		following from \eqref{eq:def aeps}, \eqref{eq:diff magn fields} and \eqref{eq:est grad psi rescaled}.
		In a similar way we can also bound
		\beq
			\label{l b step 6}
			\int_{\ann\setminus\annbb} \diff s \diff t \: \lf| \psi \ri|^2 \leq \eps^{-2} e^{-A \bti} \int_{\annbb} \diff \rv \: \exp \lf\{ \tx\frac{A(1 - r)}{\eps} \ri\} \lf| \glm \ri|^2 = \OO(\eps^{-1 + c_0 A} |\log\eps|^\beta).
		\eeq
		Hence putting together \eqref{l b step 4}, \eqref{l b step 5} and \eqref{l b step 6} and using the Cauchy-Schwarz inequality, we have proved 
		\[
		 \int_{\ann\setminus\annbb} \diff s \diff t \:  \Fk (t) \partial_t \lf( i u, \partial_s u \ri) \geq - C\eps ^{-1 + c_0 A} |\log \eps| ^{4+\beta}.
		\]
		Thus, going back to \eqref{l b step 2} and dropping the positive terms 
		\[
		 \E_k [u] \geq - C\eps ^{-1 + c_0 A} |\log \eps| ^{4+\beta}.
		\]
		Taking $c_0$ large enough, this can be made smaller than any power of $\eps$, so in combination with \eqref{splitting} it yields \eqref{eq:low bound disc}.
\end{proof}

We note that to obtain the energy lower bound we have dropped some positive terms (second line of \eqref{l b step 2}), so the proof above actually provides a control of these terms, which is the content of

\begin{cor}[\textbf{Reduced energy bound}]\label{cor:Pan disc pre}\mbox{}\\
Under the  assumptions of Proposition \ref{pro:low bound disc} and for any $ \de $ such that $ 0<\de \leq C |\log\eps|^{-4} $ as $ \eps \to 0 $,
\begin{multline}\label{eq:reduced ener bounds}
	\de \int_{\ann} \diff s \diff t \: \lf(1 - \eps k t \ri) \fk ^2 \bigg[ \lf| \partial_t u \ri|^2 + \tx\frac{1}{(1- \eps k t)^2} \lf| \partial_s u \ri|^2 \bigg] \\ +  \frac{1}{2 \hex} \int_{\ann} \diff s \diff t \: \lf(1 - \eps k t \ri) \fk ^4 \lf( 1 - |u|^2 \ri)^2 
= \OO(\eps |\log \eps|).  
\end{multline}
\end{cor}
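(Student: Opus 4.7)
The plan is to extract this corollary as a direct byproduct of the matched upper and lower bounds for $\glee$, rather than running any fresh argument. First, I would combine the sharp upper bound from Proposition~\ref{pro:up bound disc}, namely $\glee \leq \tfrac{2\pi R \: \eonek}{\eps} + C\eps|\log\eps|$, with the boundary-layer reduction \eqref{eq:low bound annulus disc} from Proposition~\ref{replacement: pro}, to deduce $\annf[\psi] \leq \glee + C\eps^2|\log\eps|^2 \leq \tfrac{2\pi R \: \eonek}{\eps} + C\eps|\log\eps|$. Invoking the energy splitting of Lemma~\ref{lem:split disc}, which reads $\annf[\psi] = \tfrac{2\pi R \: \eonek}{\eps} + \E_k[u]$, this immediately yields the \emph{upper} bound $\E_k[u] \leq C\eps|\log\eps|$.

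Next, I would revisit the lower bound on $\E_k[u]$ established during the proof of Proposition~\ref{pro:low bound disc}, specifically the decomposition \eqref{l b step 2}, which bounds $\E_k[u]$ from below by the sum of four contributions: (i) an integral over $\annbb$ weighted by the cost function $\Kk(t)$; (ii) the remainder integral over $\ann \setminus \annbb$ involving $\Fk \, \partial_t(iu, \partial_s u)$; (iii) the $\de$-weighted kinetic term appearing on the left-hand side of \eqref{eq:reduced ener bounds}; and (iv) the quartic term $\tfrac{1}{2\hex} \int (1-\eps k t)\fk^4(1-|u|^2)^2$. The positivity of $\Kk$ on $\annbb$ from Proposition~\ref{pro:K positive}, applicable under the assumption \eqref{eq:de} on $\de$, allows term (i) to be dropped as nonnegative; and the estimates \eqref{l b step 4}--\eqref{l b step 6} from the proof of Proposition~\ref{pro:low bound disc} already show that, with $c_0$ chosen sufficiently large, term (ii) is $\OO(\eps^\infty)$.

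Combining the two observations gives $(\text{iii}) + (\text{iv}) \leq \E_k[u] + \OO(\eps^\infty) \leq C\eps|\log\eps|$, which is precisely \eqref{eq:reduced ener bounds}. The genuine technical content, namely the positivity of the cost function $\Kk$ proved in Section~\ref{sec:model prob} and the exponential smallness of $\fk$ near $t=\teps$ (which is what kills the contribution from $\ann \setminus \annbb$), has already been dealt with, so the present corollary is a pure bookkeeping consequence of the tight matching between the energy upper and lower bounds. If anything, the only point deserving a brief check is that all constants in the upper--lower comparison are uniform in $\de$ over the range \eqref{eq:de}; this is immediate by inspection of the argument, since $\de$ enters only as the coefficient of a nonnegative quantity that we discard when establishing the lower bound on $\glee$.
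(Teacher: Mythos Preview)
Your proposal is correct and follows essentially the same approach as the paper: the corollary is obtained as an immediate byproduct of the proof of Proposition~\ref{pro:low bound disc}, by combining the upper bound from Proposition~\ref{pro:up bound disc} with the lower bound \eqref{l b step 2} and noting that the nonnegative terms that were discarded there are precisely the ones appearing in \eqref{eq:reduced ener bounds}. The paper states this in a single sentence; your write-up simply makes the chain of inequalities explicit.
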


Note that \eqref{eq:main density disc} follows from the above in the same way as \eqref{eq:main density generic} followed from \eqref{eq:density generic pre}. In the next subsection we show how to deduce our other results about the GL order parameter from Corollary~\ref{cor:Pan disc pre}.

\subsection{Uniform estimates on the density $ |\glm|^2 $}\label{sec:disc Pan}

Before proceeding let us define $ \anntd $ as the region $ \annd $ (see \eqref{annd} for its definition) in the scaled boundary variables $ (s,t) $
	\beq
		\label{anntd}
		\anntd := \lf\{ (s,t) \in \ann \: : \: \fk(t) \geq \game \ri\}.
	\eeq

	Corollary \ref{cor:Pan disc pre} clearly says that, roughly speaking, $ |u| $ can not differ too much from $ 1 $ inside $ \anntd $ where the density $\fk ^2$ is large enough. In order to extract from this fact a pointwise estimate of $ |u| $ that will lead to \eqref{eq:Pan plus}, another ingredient is needed: 

	\begin{lem}[\textbf{Gradient bound for $|u|$}]
		\label{grad u est: lem}
		\mbox{}	\\
		Let $\Om$ be a disc of radius $R=k ^{-1} > 0 $. For any $ 1 < \hex < \theo^{-1} $ and $ \eps$ sufficiently small, we have
		\beq
			\label{grad u est}
			\lf| \nabla |u|(s,t) \ri| \leq \frac{C |\log\eps|^3}{\fk(t)},
		\eeq
		for any $ (s,t) \in \ann $.
	\end{lem}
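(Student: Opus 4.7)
The plan is to reduce the estimate to two independent pointwise ingredients: a gradient bound on $|\psi|$ (hence on $|\glm|$), and a bound on $\fk'/\fk$. I would proceed in three short steps.

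First, since $1 < \hex < \theo^{-1}$ implies $\fk > 0$ throughout $\ieps$ by Corollary \ref{nontrivial: cor}, the definition \eqref{def u} yields $|u(s,t)| = |\psi(s,t)|/\fk(t)$, so that differentiating gives
\begin{equation*}
\partial_s |u| = \frac{\partial_s |\psi|}{\fk(t)}, \qquad \partial_t |u| = \frac{\partial_t |\psi|}{\fk(t)} - \frac{|\psi|\, \fk'(t)}{\fk^2(t)}.
\end{equation*}
Thus the task splits into controlling $\nabla_{s,t}|\psi|$ and $\fk'/\fk$ separately.

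Second, using that $|\psi(s,t)| = |\glm(\rv(\eps s, \eps t))|$ (the phase factor $e^{-i\phi_\eps}$ has modulus one), the chain rule together with the diamagnetic inequality gives
\begin{equation*}
\bigl| \nabla_{s,t} |\psi| \bigr|(s,t) \leq C \eps \, \bigl| (\nabla + i \aavm / \eps^2) \glm \bigr|(\rv(\eps s, \eps t)).
\end{equation*}
At this stage I would invoke the pointwise gradient estimate on GL minimizers from \cite{Al} (see also \cite[Chapter 12]{FH}), in the form $\| (\nabla + i\aavm/\eps^2) \glm \|_{L^\infty(\Om)} \leq C/\eps$, to conclude that $|\nabla_{s,t}|\psi|| \leq C$ uniformly on $\ann$. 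This is the only nonelementary external input of the proof.

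Third, I would invoke the pointwise bound $|\fk'(t)| \leq C |\log \eps|^3 \fk(t)$ on $\ieps$, which is the derivative counterpart of Proposition \ref{pro:point est fal} and is established in the appendix (see the estimate referred to as \eqref{fal derivative}); the $|\log\eps|^3$ factor reflects the polynomial prefactor in the Gaussian decay of $\fk$ combined with the size $t = O(|\log\eps|)$ of $\ieps$. Combining with the maximum-principle bound $|\psi| \leq \|\glm\|_{L^\infty} \leq 1$ from \eqref{linfty psi}, I obtain
\begin{equation*}
|\partial_s |u|| \leq \frac{C}{\fk(t)}, \qquad |\partial_t |u|| \leq \frac{C}{\fk(t)} + \frac{C|\log\eps|^3 |\psi|}{\fk(t)} \leq \frac{C|\log\eps|^3}{\fk(t)},
\end{equation*}
which is the desired estimate. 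The main (and only) technical obstacle is the importation of the pointwise gradient bound on $\glm$ from \cite{Al}; the rest is elementary chain-rule manipulation using already-established pointwise estimates on $\fk$.
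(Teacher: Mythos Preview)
Your proof is correct and follows essentially the same route as the paper: decompose $|u|=|\psi|/\fk$ via the product rule, control $|\nabla_{s,t}|\psi||$ by the diamagnetic inequality together with the pointwise gradient bound on GL minimizers from \cite{Al}, and control $|\fk'|/\fk$ by the appendix estimate \eqref{fal derivative} combined with $|\psi|\leq 1$. The only cosmetic difference is that the paper states the intermediate inequality for $|\nabla|u||$ directly rather than splitting into $\partial_s$ and $\partial_t$ components.
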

	
	\begin{proof}
		From the definitions of $\psi$ and  $u$ in Proposition \ref{pro:low bound disc} and Lemma \ref{lem:split disc}, we immediately have
		\bml{
			\lf| \nabla |u|(s,t) \ri| \leq \fk^{-2}(t) \lf| \fk^{\prime}(t) \ri| \lf| \psi(s,t) \ri| + \fk^{-1}(t) \lf| \nabla_{s,t} \lf| \psi(s,t) \ri| \ri|	\\
			\leq C \fk^{-1}(t) \lf[ |\log\eps|^3 + \lf| \nabla_{s,t} \lf| \psi(s,t) \ri| \ri| \ri],
		}
		where we have used \eqref{fal derivative}. The result is then a consequence of \cite[Theorem 2.1]{Al} or \cite[Eq. (4.9)]{AH} in combination with the diamagnetic inequality (see, e.g., \cite{LL}), which yield	
		\beq
			\lf| \nabla \lf| \glm \ri| \ri| \leq \lf| \lf( \nabla + i \tx\frac{\aavm}{\eps^2} \ri) \glm \ri| = \OO(\eps^{-1})
			\qquad \Longrightarrow \qquad \lf| \nabla_{s,t} \lf| \psi(s,t) \ri| \ri|  = \OO(1).
		\eeq
	\end{proof}

	\begin{proof}[Proof of Theorem \ref{theo:Pan}]
		The argument has been used many times in the literature since its first appearance in \cite{BBH}. For contradiction, suppose that there is a point $ (s_0,t_0) \in \anntd $ such that
		\beq
			\lf| 1 - |u(s_0,t_0)| \ri| \geq \sigme,
		\eeq
		where
		\beq
			\sigme : = \frac{\eps^{1/4} |\log\eps|^{2}}{\game^{3/2}} \ll 1,
		\eeq
		by the assumptions on $ \game $ \eqref{game}. Then by \eqref{grad u est}
		\beq
			\lf| \nabla |u|(s,t) \ri| \leq C \game^{-1} |\log\eps|^3
		\eeq
		inside $ \anntd $. There would therefore exist a ball $ \ba_{\varre}(s_0,t_0) $ centered at $ (s_0,t_0) $ of radius 
		\beq
			\varre = \OO(\game |\log\eps|^{-3} \sigme),
		\eeq
		such that
		\beq
			\lf| 1 - |u(s,t)| \ri| \geq \tx\frac12 \sigme,
		\eeq
		for any $ (s,t) \in  \ba_{\varre}(s_0,t_0) \cap \anntd$. Hence we could bound from below the potential energy appearing in \eqref{eq:reduced ener bounds} as follows\footnote{Actually  the entire ball $ \ba_{\varre}(s_0,t_0) $ might not be contained in $ \ann $, if the point is close to the boundary, but in this case one can arrange this set in such a way that at least half the ball is included.}
		\bml{
 			\int_{\ann} \diff s \diff t \: (1 - \eps t) \fk^4 \lf( 1 - | u |^2 \ri)^2 \geq \int_{\ba_{\varre}(s_0,t_0)} \diff s \diff t \: (1 - \eps t) \fk^4  \lf( 1 - | u |^2 \ri)^2	\\
			\geq C \game^4 \varre^2 \sigme^2 = C  \eps |\log\eps|^2,
		}
		which would contradict the upper bound \eqref{eq:reduced ener bounds}. Hence for any $ (s,t) \in \anntd $
		\beq
			\label{eq:u bound boundary}
			\lf| 1 - |u(s,t)| \ri| \leq \sigme = \OO(\eps^{1/4} \game^{-3/2} |\log\eps|^{2}).
		\eeq
		It is then straightforward to deduce the final result by simply translating the above bound into one on $\glm$ (recall \eqref{def u}).
	\end{proof} 

\subsection{Circulation estimate}\label{sec:circulation}

The main point for the proof of the circulation estimate is a bound on the circulation of $ u $ at the boundary, which is contained in next
	
	\begin{lem}[{\bf Circulation of $ u $ on $ \partial \Omega $}]
		\label{lem:u circulation}
		\mbox{}	\\
		Let $\Om$ be a disc of radius $R=k ^{-1} > 0 $. For any $ 1 < \hex < \theo^{-1} $ and $ \eps$ sufficiently small, 
		\beq
			\label{eq:u circulation}
			\bigg| \int_0^{\frac{|\partial \Omega|}{\eps}} \diff s \: \lf(i u(s,0), \partial_s u(s,0) \ri) \bigg| = \OO(|\log\eps|^3).
		\eeq
	\end{lem}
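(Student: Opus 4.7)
The strategy will be to convert the boundary integral at $t=0$ into a bulk integral over $\ann$ using a smooth cutoff, then to estimate the resulting integrals using the reduced energy control of Corollary \ref{cor:Pan disc pre}. The key observation is that on any fixed interval $[0,t_0]$ with $t_0$ of order one, the weight $\fk^2(t)$ appearing in Corollary \ref{cor:Pan disc pre} is bounded away from zero uniformly in $\eps$ thanks to the pointwise estimate \eqref{fal point l u b}. Hence, on such an interval, the corollary yields an effective control of $\int |\nabla u|^2$.

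\paragraph{Step 1: reduction to a bulk integral.} Pick a smooth cutoff $\chi:[0,\teps] \to [0,1]$ with $\chi(0)=1$, $\chi(t)=0$ for $t\geq 1$, and $\|\chi'\|_\infty \leq C$. By the fundamental theorem of calculus applied in the $t$-variable,
\begin{equation*}
\int_0^{|\partial\Om|/\eps} \lf(iu,\dd_s u\ri)(s,0)\, ds = -\int_0^{|\partial\Om|/\eps}\int_0^1 \dd_t\bigl[\chi(t)(iu,\dd_s u)(s,t)\bigr]\,ds\,dt.
\end{equation*}
Expanding the derivative and integrating by parts in $s$ on the term $\int(iu,\dd_s\dd_t u)\,ds$, using the identity $(ia,b)=-(ib,a)$ and the fact that $\mathrm{Im}(u^\ast\dd_t u)$ is $L$-periodic in $s$ (because, as is immediate from the definition \eqref{def u}, $u(s+L,t)=e^{i\alpha}u(s,t)$ for some constant phase $\alpha$), the boundary contributions vanish and one gets $\int \dd_t(iu,\dd_s u)\,ds = 2\int(i\dd_t u,\dd_s u)\,ds$. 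I therefore obtain
\begin{equation*}
\int_0^{|\partial\Om|/\eps} (iu,\dd_s u)(s,0)\, ds = -\int_0^{|\partial\Om|/\eps}\!\!\int_0^1 \chi'(t)(iu,\dd_s u)\,ds\,dt - 2\int_0^{|\partial\Om|/\eps}\!\!\int_0^1 \chi(t)(i\dd_t u,\dd_s u)\,ds\,dt.
\end{equation*}

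\paragraph{Step 2: a priori bounds.} On the support of $\chi$, the lower bound \eqref{fal point l u b} gives $\fk(t)\geq c_0>0$ uniformly in $\eps$. Choosing $\de=c|\log\eps|^{-4}$ in Corollary \ref{cor:Pan disc pre} then yields
\begin{equation*}
\int_0^{|\partial\Om|/\eps}\!\!\int_0^1 \bigl(|\dd_s u|^2+|\dd_t u|^2\bigr)\,ds\,dt \;\leq\; Cc_0^{-2}\int_\ann (1-\eps k t)\fk^2\bigl(|\dd_s u|^2+|\dd_t u|^2\bigr) \;\leq\; C\eps|\log\eps|^5.
\end{equation*}
Moreover the diamagnetic inequality and $\|\glm\|_\infty\leq 1$ (recall \eqref{linfty psi}) give $|u|=|\glm|/\fk\leq c_0^{-1}$ on the support of $\chi$, hence $\int_0^{|\partial\Om|/\eps}\!\int_0^1 |u|^2 \,ds\,dt = \OO(\eps^{-1})$.

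\paragraph{Step 3: conclusion via Cauchy--Schwarz.} Applying Cauchy--Schwarz to the two pieces and using the bounds of Step 2, I obtain
\begin{equation*}
\lf|\int\!\!\int \chi'(iu,\dd_s u)\ri| \leq C\Bigl(\!\int\!\!\int |u|^2\Bigr)^{\!1/2}\!\Bigl(\!\int\!\!\int |\dd_s u|^2\Bigr)^{\!1/2}\!\leq C\eps^{-1/2}\cdot\eps^{1/2}|\log\eps|^{5/2} = C|\log\eps|^{5/2},
\end{equation*}
and likewise $|\int\!\!\int \chi(i\dd_t u,\dd_s u)|\leq C\eps|\log\eps|^5$, both of which are $\OO(|\log\eps|^3)$.

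\paragraph{Main difficulty.} The cleanest technical point is the integration by parts in $s$: $u$ itself is not $L$-periodic (see the proof of Lemma \ref{lem:split disc}), so one must check that the combination $\mathrm{Im}(u^\ast\dd_t u)$ entering the boundary term is periodic. This is essentially automatic from the fact that the multiplicative factor relating $u(s+L,t)$ to $u(s,t)$ is a constant unimodular phase, but it is the only place where some care is required; everything else is a direct Cauchy--Schwarz estimate using Corollary \ref{cor:Pan disc pre}.
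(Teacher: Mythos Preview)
Your proof is correct and follows essentially the same approach as the paper's: introduce a cutoff in $t$, convert the boundary line integral into a bulk integral via the fundamental theorem of calculus, handle the $\partial_t(iu,\partial_s u)$ term by an integration by parts in $s$ (using the periodicity of $u^*\partial_t u$, which the paper also invokes), and conclude with Cauchy--Schwarz together with the kinetic bound from Corollary~\ref{cor:Pan disc pre}. The only cosmetic differences are that the paper takes the cutoff width $|\log\eps|^{-1}$ rather than $1$ and bounds $|u|$ via the Pan-type estimate~\eqref{eq:u bound boundary} instead of the more elementary $|u|=|\psi|/\fk\leq C$ that you use; your choices are harmless (indeed your cutoff width yields the slightly sharper $\OO(|\log\eps|^{5/2})$).
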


	\begin{proof}
		The idea of the proof is the same as \cite[Lemma 3.5]{CPRY3}. We use a cut-off function to estimate the circulation integral in terms of a volume integral over a thin layer around the boundary. The width of such a layer is to some extent arbitrary but for later purposes it must be $ o(1) $ in the rescaled normal coordinate $ t $. For the sake of clarity we fix from the outset such a width equal to $ |\log\eps|^{-1} $. 

		Let $ \chi(t) $ be a smooth cut-off function satisfying the properties
		\beq
			\supp(\chi) \subset [0,|\log\eps|^{-1}], 	\qquad \chi \leq 1,		\qquad	\chi(0) = 1,	\qquad	\chi(|\log\eps|^{-1}) = 0,
		\eeq
		\beq
			\lf| \partial_t \chi \ri| = \OO(|\log\eps|).
		\eeq
		Then we can write by means of integration by parts
		\beq
			\int_0^{\frac{|\partial \Omega|}{\eps}} \diff s \: \lf(i u(s,0), \partial_s u(s,0) \ri) = \int_0^{\frac{|\partial \Omega|}{\eps}} \diff s \int_0^{|\log\eps|^{-1}} \diff t \lf\{ \partial_t \chi \lf(i u, \partial_s u\ri) + \chi \partial_t \lf(i u, \partial_s u\ri) \ri\}.
		\eeq
		Hence we can bound the modulus of the r.h.s. of the above expression as
		\bml{
 			\lf| \int_0^{\frac{|\partial \Omega|}{\eps}} \diff s \: \lf(i u(s,0), \partial_s u(s,0) \ri) \ri| \leq \int_0^{\frac{|\partial \Omega|}{\eps}} \diff s \int_0^{|\log\eps|^{-1}} \diff t \lf\{ C |\log\eps| |u| |\nabla u| + 2 \lf| \nabla u \ri|^2 \ri\}	\\
			\leq C \lf[ \frac{\delta}{\eps} + \lf(1 + \frac{|\log\eps|}{\delta} \ri) \lf\| \nabla u \ri\|_{L^2(\ann)}^2 \ri],
		}
		where we have used Cauchy-Schwarz inequality and the estimate \eqref{eq:u bound boundary} and introduced a variational parameter $ \delta $ to optimize over. Then it suffices to exploit the pointwise lower bound \eqref{fal point l u b} together with the estimate \eqref{eq:reduced ener bounds} proven in Corollary \ref{cor:Pan disc pre}, which yields
		\beq
			\lf\| \nabla u \ri\|_{L^2(\ann)}^2  = \OO(\eps |\log\eps|^5),
		\eeq
		where we have chosen $ \de = |\log\eps|^4 $ to meet the condition \eqref{eq:de}, and pick $ \delta = \eps |\log\eps|^3 $ to get the final result.
	\end{proof}

	We can now address the proof of the winding number estimate:
	
	\begin{proof}[Proof of Theorem \ref{teo:circulation}]
		We first note that thanks to the definition of $ \psi $ we have 
		$$ \psi(s,0) = \glm(R,\eps s) e^{-i\phi_{\eps}(s,0)}, $$
		where we have used polar coordinates $ \rv = (r, \vartheta) $ for $ \glm $. Hence the decomposition \eqref{def u} together with the strict positivity of $ \fk(0) $ by \eqref{fal point l u b} and the estimate of $ |u| $  \eqref{eq:u bound boundary} imply that $ \glm $ does not vanish on $ \partial \Omega $ and its winding number is thus well defined. Note that the positivity of $ \fk(0) $ requires the condition $ 1 < b < \theo^{-1} $.
		Moreover $ \fk(0) $ is separated from $ 0 $ independently of $ \eps $ again by \eqref{fal point l u b}, so that \eqref{eq:u bound boundary} holds true with $ \game = \mathrm{const} $.
		
		Before computing the contribution to the winding number due to the optimal phase $ \alk $, we first isolate the leading term generated by the change of gauge \eqref{phieps}: recalling again that $ \psi(s,0) = \glm(\rv(\eps s,0))  e^{-i\phi_{\eps}(s,0)} $ (see Proposition \ref{replacement: pro}), one has
		\bml{
			\label{gauge phase}
			2 \pi \deg\lf(\glm, \partial \Omega\ri) - 2\pi \deg\lf(\psi, \partial \Omega \ri) =  \int_{0}^{\frac{2\pi}{\eps}} \diff s \: \partial_{s} \phi_{\eps}(s,0) = \frac{1}{\eps^2} \int_{\partial  \Omega} \diff \sigma \: {\bf e}_{\vartheta} \cdot \aavm	 - \deps \\
			=  \frac{1}{\eps^2} \int_{\Omega} \diff \rv \: \curl \aavm - \deps.
			}
			Now by the elliptic estimate \eqref{eq: elliptic est}
			\beq
				 \lf\| \curl \aavm - 1 \ri\|_{L^{1}(\ann)} = \OO(\eps^2 |\log\eps|),
			\eeq
			while the Agmon estimate for $ \aavm $ \cite[Eq. (12.10)]{FH1} implies 
			\bdm
				\lf\| \nabla (\curl \aavm - 1) \ri\|_{L^1(\Omega\setminus \ann)} = \OO(\eps^{\infty}),
			\edm
			implies
			\bml{
				 \lf\| \curl \aavm - 1 \ri\|_{L^{1}(\Omega\setminus \ann)} \leq C  \lf\| \curl \aavm - 1 \ri\|_{L^{2}(\Omega\setminus \ann)} 	\\
				 \leq C \lf\| \nabla (\curl \aavm - 1) \ri\|_{L^1(\Omega\setminus \ann)} = \OO(\eps^{\infty}),
			}
			via Sobolev inequality. Altogether we can thus replace $ \curl \aavm $ with $ 1 $ in \eqref{gauge phase}, so obtaining
			\beq
				2 \pi \deg\lf(\glm, \partial \Omega\ri) - 2\pi \deg\lf(\psi, \partial \Omega \ri) =  \frac{\pi R^2}{\eps^2} + \OO(|\log\eps|).
			\eeq
			
		Now it remains to estimate the contribution to the winding number of $ \psi $:
		\bml{
 			\label{eq:circulation 1}
 			 2\pi \deg\lf(\psi, \partial \Omega \ri) 
			 = -i \int_0^{\frac{2\pi}{\eps}} \diff s \: \frac{|u(s,0)| e^{i(\alk + \eps\deps) s}}{u(s,0)} \partial_{s} \lf( \frac{u(s,0) e^{-i(\alk + \eps\deps) s}}{|u(s,0)|} \ri)	\\
			= - \frac{2 \pi\lf(  \alk + \eps \deps \ri)}{\eps}  \lf(1 + \OO(\eps^{1/4}|\log\eps|^2) \ri) - i  \int_0^{\frac{2\pi}{\eps}} \diff s \: \frac{|u(s,0)|}{u(s,0)} \partial_{s} \lf( \frac{u(s,0)}{|u(s,0)|} \ri),
		}
		where we have made use of \eqref{eq:u bound boundary}, i.e., 
		\beq
			\label{eq:mod u boundary}
			|u(s,0)| = 1 + \OO(\eps^{1/4} |\log\eps|^2).
		\eeq 
		Now to complete the proof it remains to bound the second term on the r.h.s. of \eqref{eq:circulation 1}, but again by \eqref{eq:mod u boundary}, we get
		\bdm
			\bigg| \int_0^{\frac{2\pi}{\eps}} \diff s \: \frac{|u(s,0)|}{u(s,0)} \partial_{s} \lf( \frac{u(s,0)}{|u(s,0)|} \ri) \bigg| = \lf(1 + \OO(\eps^{1/4}|\log\eps|^2) \ri) \bigg| \int_0^{\frac{2\pi}{\eps}} \diff s \: \lf(i u(s,0), \partial_s u(s,0) \ri) \bigg|,
		\edm	
		and therefore Lemma \ref{lem:u circulation} provides the result.
	\end{proof}

%
%

\appendix

\section{Appendix}

\subsection{The one-dimensional Schr\"{o}dinger operator $ \hkal $}\label{1d So: sec}

We start  by stating standard results about the ground state of the operator $ \hkal $ defined in \eqref{eq:Hkal}:

	\begin{pro}[\textbf{Ground state of $ \hkal $}]
		\label{Ha gs: pro}
		\mbox{}	\\
		For any given $ \alpha \in \R $, $ k \geq 0 $ and $ \eps $ small enough, the operator $\hkal $ is positive and self-adjoint in  
		$$ \mathscr{H} : =  L^2(I_{\eps}, (1 - \eps k  t) \diff  t). $$
		Its normalized ground state $ \phikal $, i.e., the lowest eigenstate solving
		\beq
			\hkal \phikal = \mue \phikal,
		\eeq
		is unique up to the multiplication by a constant phase factor, which can be chosen in such a way that it is real and strictly positive. Moreover $ \phikal \in C^{\infty}(\ie) $ and it satisfies the following bounds
		\beq
			\label{phi prime est}
			\lf\| \phikal^{\prime} \ri\|_{L^{\infty}(I_{\eps})} = \OO(|\log\eps|^{5/2}),
		\eeq
		\beq
			\label{phi decay est}
			\phikal( t) \leq C \exp \lf\{ - \half  t^2 \ri\},
		\eeq
		for some finite constant $ C $.
	\end{pro}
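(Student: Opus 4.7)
The plan is to establish the five claims in sequence by standard methods for 1D Sturm--Liouville operators, with the Agmon decay being the only genuinely non-routine step.

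First I would check self-adjointness and positivity via the quadratic form
\[
q_{k,\alpha}[u] := \int_{I_\eps}(1 - \eps k t)\bigl(|u'|^2 + V_{k,\alpha}(t)|u|^2\bigr)\,\diff t
\]
with form domain $H^1(I_\eps)$. This form is closed and non-negative in $\mathscr{H}$ (as $V_{k,\alpha}\geq 0$), and rewriting the expression as the weighted Sturm--Liouville operator
\[
\hkal u = -\frac{1}{1-\eps k t}\partial_t\bigl((1-\eps k t)\partial_t u\bigr) + V_{k,\alpha} u
\]
identifies $\hkal$ as the associated self-adjoint operator with Neumann conditions at both endpoints of $I_\eps$. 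Because $I_\eps$ is bounded and the coefficients are smooth, the resolvent is compact, so the spectrum is discrete and a ground state exists by the direct method. Uniqueness up to phase, positivity, and the choice of $\phikal$ real follow from the standard Perron--Frobenius-type argument in 1D: Kato's inequality gives $q_{k,\alpha}[|u|]\leq q_{k,\alpha}[u]$, hence $|\phikal|$ is also a minimizer; the strong maximum principle applied to the eigenvalue equation forces $|\phikal|>0$ in the interior; and for any other minimizer the relative phase must then be constant. Smoothness $\phikal \in C^\infty(\ie)$ follows by the usual elliptic bootstrap since the coefficients of the equation are $C^\infty$.

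Next I would prove the Gaussian decay \eqref{phi decay est}. A trial state given by a properly normalized Gaussian centred at $-\alpha$ yields $\mue = \mathcal{O}(1)$ uniformly as $\eps\to 0$. By \eqref{potkal est}, $V_{k,\alpha}(t) = (t+\alpha)^2 + \mathcal{O}(\eps|\log\eps|^3)$ on $I_\eps$, so there exists $t_\alpha$ independent of $\eps$ such that $V_{k,\alpha}(t) - \mue \geq \tfrac12(t+\alpha)^2$ for $t\geq t_\alpha$. The Agmon strategy is then standard: multiply the eigenvalue equation by $\phikal e^{2\Phi}$ with $\Phi(t) = \tfrac12(1-\delta)t^2$ for a small $\delta>0$, integrate by parts against the weight $(1-\eps k t)$, and exploit the pointwise bound
\[
\bigl(V_{k,\alpha}(t) - \mue\bigr)e^{2\Phi} - |\Phi'(t)|^2 e^{2\Phi} \geq c\,t^2\,e^{2\Phi}
\]
on $\{t\geq t_\alpha\}$, the complementary region contributing a bounded error term thanks to the normalization of $\phikal$. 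This yields $\int_{I_\eps} e^{2\Phi}\phikal^2\,\diff t \leq C$, which combined with the 1D Sobolev embedding $H^1 \hookrightarrow L^\infty$ and the equation itself gives the claimed pointwise decay. The main obstacle I anticipate is the careful bookkeeping required to absorb both the $\mathcal{O}(\eps|\log\eps|^3)$ correction in $V_{k,\alpha}$ and the first-order term $\frac{\eps k}{1-\eps k t}\partial_t$ coming from the weight, but these are lower-order perturbations of the standard argument.

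Finally, for the derivative bound \eqref{phi prime est} I would combine the already obtained information. The energy identity gives $\|\phikal'\|_{L^2(I_\eps)}^2 \leq \mue\,\|\phikal\|_\mathscr{H}^2 = \mathcal{O}(1)$. To upgrade to $L^\infty$ I would differentiate $(\phikal')^2$ and substitute the eigenvalue equation to obtain
\begin{equation*}
\partial_t (\phikal')^2 = \frac{2\eps k}{1-\eps k t}(\phikal')^2 + 2\bigl(V_{k,\alpha} - \mue\bigr)\phikal\,\phikal',
\end{equation*}
and integrate from $0$, where $\phikal'(0)=0$ by the Neumann boundary condition, up to $t\in I_\eps$. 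Cauchy--Schwarz together with the pointwise bound $\|V_{k,\alpha}-\mue\|_{L^\infty(I_\eps)} = \mathcal{O}(|\log\eps|^2)$ and the $L^2$ estimates already available then yield $|\phikal'(t)|^2 \leq C|\log\eps|^2$, which is in fact stronger than the stated $\mathcal{O}(|\log\eps|^{5/2})$ bound. This step is routine once the preceding pieces are in place.
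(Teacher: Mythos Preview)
Your argument is correct and covers all the claims, but it diverges from the paper's proof in two of the three substantive steps.

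For the derivative bound \eqref{phi prime est}, the paper integrates the Sturm--Liouville form of the eigenvalue equation once from $0$ to $t$, obtaining
\[
(1-\eps k t)\phikal'(t)=\int_0^t (1-\eps k\eta)\bigl(\potkal(\eta)-\mue\bigr)\phikal(\eta)\,\diff\eta,
\]
and then applies Cauchy--Schwarz together with the crude bound $\mue=\OO(|\log\eps|^2)$ coming from a constant trial state; this yields exactly the $|\log\eps|^{5/2}$ stated. Your route---differentiating $(\phikal')^2$, integrating, and feeding in the energy bound $\|\phikal'\|_{L^2}^2\leq \mue=\OO(1)$ obtained from a Gaussian trial state---is a genuine improvement and indeed gives $\OO(|\log\eps|)$. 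For the decay \eqref{phi decay est}, the paper takes a rather different path: it smooths $\phikal$ by a cutoff so as to lie in the domain of the full-line harmonic oscillator $\hosc=-\partial_t^2+t^2$, rewrites the eigenvalue equation with the remainder on the right-hand side, and then reads off the Gaussian decay from the explicit Mehler kernel of $(\hosc+\lambda)^{-1}$. Your Agmon argument is more robust and avoids the explicit resolvent formula, at the price that with the weight $\Phi(t)=\tfrac12(1-\delta)t^2$ you obtain $\phikal(t)\leq C_\delta\,e^{-(1-\delta)t^2/2}$ for every $\delta>0$ rather than the sharp exponent $\tfrac12$; this is harmless for every subsequent use in the paper, but if you want the statement exactly as written you should say a word about how to remove the $\delta$ (e.g.\ iterate the Agmon step once more using the first-pass decay to localize the ``bad'' region, or simply note that on the finite interval $I_\eps$ the loss can be absorbed into the constant).
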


	\begin{proof}
		The first part of the Proposition can be proven by exploiting standard techniques of operator theory. The bound on the derivative of $ \phikal $ can be obtained by simply integrating the eigenvalue equation:
		\bdm
			\int_{0}^{ t} \diff  \eta \: \partial_\eta \lf[ \lf(1 - \eps k \eta \ri) \phikal^{\prime}( \eta) \ri] = \int_{0}^{ t} \diff  \eta \: \lf(1 - \eps k \eta \ri) \lf( \potkal ( \eta) - \mue \ri) \phikal( \eta),
		\edm
		which yields by a trivial application of Cauchy-Schwarz inequality and normalization of $ \phikal $
		\beq
			\lf|(1-\eps k t )\phikal^{\prime}( t)\ri| \leq \bigg[ \int_0^{ t} \diff  \eta \: \potkal^2( \eta) \bigg]^{1/2} + \mue \sqrt{ t},
		\eeq
		and thus \eqref{phi prime est}, via the trivial bound $ \mue = \OO(|\log\eps|^2) $, which can be easily obtained by evaluating the energy of a normalized constant function.

		The decay estimate \eqref{phi decay est} is proven by exploiting the resolvent of the harmonic oscillator $ \hosc : = - \partial_ t^2 +  t^2 $ on the real line, i.e., for any $ \lambda > -1 $,
		\beq
			\lf( \hosc + \lambda \ri)^{-1}( t ;  t^{\prime}) = \frac{1}{2\sqrt{\pi}} \int_0^{1} \diff \nu \: \frac{\nu^{2\lambda - 1/2}}{\sqrt{1 - \nu^2}} \exp \lf\{ - \tx\frac{1-\nu^2}{2(1+\nu^2)} \big(  t^2 + { t^{\prime}}^2 \big) + \frac{2\nu}{1-\nu^2}  t  t^{\prime} \ri\}.
		\eeq	
		To this purpose we first regularize $ \phikal $ in order to associate it with a function in the domain of $ \hosc $. A simple way to do that is multiply $ \phikal $ by a smooth function $ \chi_{\eps} \leq 1 $ with the following properties
		\beq
			\label{chieps}
			\chi_{\eps}( t) : = 
			\begin{cases}
				0,	&	\mbox{if }  t = 0, c_0|\log\eps|,	\\
				1,	&	\mbox{for } \eps^{\gamma} \leq  t \leq c_0|\log\eps| - \eps^{\gamma},
			\end{cases}
		\eeq
		for some $ \gamma \geq 1 $. As a straightforward consequence $ \tilde\phi : = \chi_{\eps} \phikal $ is approximately normalized, i.e., $ \| \tilde\phi \lf. \ri\|_2 = 1 + o(1) $, and satisfies the following differential equation in $ I_{\eps} $
		\beq
			- \tilde\phi^{\prime\prime} + ( t+\alpha)^2 \tilde\phi = \mue \tilde\phi + \lf[ ( t + \alpha)^2 - \potkal( t) \ri] \tilde\phi - \tx\frac{\eps k}{1 - \eps t} \chi_{\eps} \phikal - 2 \chi_{\eps}^{\prime} \phikal^{\prime} - \chi_{\eps}^{\prime\prime} \phikal.
		\eeq 
		Calling $ \Phi( t) $ the r.h.s. of the above expression, we can apply the resolvent of $ \hosc $ to get
		\beq
			\tilde\phi( t) = \int_{\R} \diff  t^{\prime} \: \lf( \hosc \ri)^{-1}( t, t^{\prime} + \alpha) \Phi( t^{\prime}).
		\eeq
		Moreover exploiting the bound on $ \phi^{\prime}_{\alpha} $ and choosing $ \chi_{\eps} $ smooth enough, e.g., so that $ \lf\|\chi_{\eps}^{\prime}\ri\|_{\infty} \leq \OO(\eps^{-\gamma}) $ etc., we can apply Cauchy-Schwarz inequality to extract an upper bound to $ \tilde\phi $ of the form \eqref{phi decay est}, which immediately translates into an estimate for $ \phikal $ via \eqref{chieps}.		
	\end{proof}

	Most of the properties of the ground state energy and wave function of $ \hkal $ are obtained by a comparison with the shifted harmonic oscillator on the half-line with Neumann conditions at the boundary $  t = 0 $:
	\beq
		\hosc_{\alpha} : = - \partial_{ t}^2 + ( t + \alpha)^2,
	\eeq
	acting on $ L^2(\R^+,\diff  t) $ (notice the different measure). We denote by $ \muosc $ its ground state energy and refer to \cite[Chapter 3.2]{FH} for a detailed analysis of its properties, that we briefly recall here:
	\beq
		\mu^{\mathrm{osc}}(-\infty) = \mu^{\mathrm{osc}}(0) = 1,	\qquad \muosc < 1,	\quad \mbox{for } \alpha <0,
	\eeq
	\beq
		\mu^{\mathrm{osc}}(+\infty) = +\infty,	\qquad 	 \muosc > 1,	\quad \mbox{for } \alpha >0.
	\eeq
	Moreover $ \muosc $ is monotonically increasing for $ \alpha \geq 0 $, whereas it admits a unique minimum at $ - \sqrt{\theo} $ given by 
	\beq
		\label{eq:min muosc}
		\min_{\alpha \in \R} \muosc = \mu^{\mathrm{osc}}\lf(- \sqrt{\theo}\ri) = \theo < 1.
	\eeq

	\begin{pro}[\textbf{Ground state energy $ \mue $}]
		\label{mue: pro}
		\mbox{}	\\
		For any given $ \alpha \in \R $, $ k > 0 $ and $ \eps $ small enough, $ \mue $ is smooth function of $ \alpha $ and
		\beq
			\label{mue diff}
			\mue - \muosc = \OO(\eps |\log\eps|^3). 
		\eeq
	\end{pro}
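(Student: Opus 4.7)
The smoothness of $\mue$ as a function of $\alpha$ follows from standard analytic perturbation theory. The potential $\potkal(t)$ depends analytically on $\alpha$ (it is quadratic in $\alpha$, divided by a factor independent of $\alpha$) and is uniformly bounded on $\ieps$, so $\hkal$ forms a self-adjoint holomorphic family of type (A) in $\alpha$ on the common domain prescribed by the Neumann boundary conditions. Since $\mue$ is a simple isolated eigenvalue, the Kato-Rellich theorem then implies that it depends real-analytically, hence smoothly, on $\alpha$.

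For the bound \eqref{mue diff}, the plan is to establish matching upper and lower bounds by exchanging trial functions between the Rayleigh quotients of $\hkal$ and $\hosca$, relying crucially on the pointwise estimate \eqref{potkal est}, namely
\[
\potkal(t) - (t+\alpha)^2 = \OO(\eps|\log\eps|^3) \quad \text{uniformly on } \ieps.
\]
For the upper bound, I would take the normalized ground state $\psia$ of $\hosca$ on $\R^+$, multiply it by a smooth cutoff vanishing outside $\ieps$ and equal to $1$ on $[0,\teps/2]$, and use the result as a trial state for $\hkal$. The Gaussian decay of $\psia$ and its derivative (see \cite[Chapter 3.2]{FH}) renders all cutoff contributions $\OO(\eps^\infty)$. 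The numerator of the Rayleigh quotient then reduces to $\lf\langle \psia, \hosca \psia \ri\rangle$ plus two errors: one of order $\OO(\eps|\log\eps|^3)$ coming from replacing $(t+\alpha)^2$ by $\potkal$ via \eqref{potkal est}, and one of order $\OO(\eps)$ arising from the weight $(1-\eps k t)$, since both $\int_{\R^+} t|\psia|^2 \diff t$ and $\int_{\R^+} t|\psia'|^2 \diff t$ are finite. The denominator equals $1+\OO(\eps)$ by the same reasoning, and dividing yields $\mue \leq \muosc + \OO(\eps|\log\eps|^3)$.

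For the matching lower bound I would reverse the roles, taking a smooth truncation of $\phikal$, extended by zero to $\R^+$, as trial state for $\hosca$. The decay \eqref{phi decay est} together with the derivative bound \eqref{phi prime est} again confine cutoff errors to $\OO(\eps^\infty)$, and an identical Rayleigh-quotient computation produces $\muosc \leq \mue + \OO(\eps|\log\eps|^3)$. Combining the two inequalities proves \eqref{mue diff}. The main (mild) obstacle is pure bookkeeping: one must verify that the cubic-type Taylor remainder $\eps t^3$ in the expansion of $\potkal$, when integrated against the Gaussian-decaying densities $|\psia|^2$ and $|\phikal|^2$ and estimated pointwise via $t \leq \teps = c_0|\log\eps|$ only where necessary, contributes no more than $\eps|\log\eps|^3$ rather than a larger power of $|\log\eps|$.
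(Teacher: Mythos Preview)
Your proposal is correct and close in spirit to the paper's argument: both rely on analytic perturbation theory for the smoothness in $\alpha$ and on the pointwise potential estimate \eqref{potkal est} for the eigenvalue comparison. The organization differs slightly: rather than swapping trial states directly between $\hkal$ and $\hosca$, the paper introduces the intermediate operator
\[
\tilde H_\alpha := -\partial_t^2 + \tx\frac{\eps k}{1-\eps k t}\,\partial_t + (t+\alpha)^2
\]
on the same weighted space $\mathscr{H}$, observes that $\hkal - \tilde H_\alpha = \potkal - (t+\alpha)^2$ is a bounded multiplication operator of norm $\OO(\eps|\log\eps|^3)$ (so min--max gives $|\mue - \mu_0(\alpha)| \leq C\eps|\log\eps|^3$ in one line), and then handles the residual comparison $\mu_0(\alpha) - \muosc = \OO(\eps|\log\eps|)$ between the weighted finite-interval problem and the flat half-line problem separately. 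This two-step decomposition isolates the two error sources (potential versus measure/domain) and avoids the explicit bookkeeping you mention, but your direct trial-function exchange reaches the same conclusion with the same estimates; neither route requires an idea the other lacks.
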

	
	\begin{proof}
		It is easy to verify that $ \hkal $ is an analytic family of operators of type (A) \cite[Definition at p. 16]{RS4}: the difference $ \hkal - \tilde{H}_{\alpha} $, with
		\beq
			\tilde{H}_{\alpha} : = - \partial_{ t}^2 + \tx\frac{\eps k}{1-\eps k t} \partial_ t + \lf(  t + \alpha \ri)^2,
		\eeq
		is indeed a bounded operator with norm (see below) 
		\bdm
			\|  \hkal - \tilde{H}_{\alpha} \| = \OO(\eps|\log\eps|^3)
		\edm
		and therefore $ \hkal - \tilde{H}_{\alpha} $ is $ \tilde{H}_{\alpha} $-bounded. On the other hand $ \tilde{H}_{\alpha} $ is an analytic family of operators in the sense of Kato since it is precisely the shifted harmonic operator on $ L^2(I_{\eps},(1-\eps k t) \diff  t) $. In conclusion $ \mue $ is an isolated non-degenerate eigenvalue of $ \hkal $ and thus it must be a smooth (in fact analytic) function of $ \alpha $.


		The estimate of the difference $ \mue - \muosc $ is a consequence of the estimate \eqref{potkal est},
		which implies the operator inequality 
		\bdm
			\|  \hkal - \tilde{H}_{\alpha} \| \leq C\eps|\log\eps|^3 
		\edm
		and therefore 
		\bdm
			\mue - \mu_{0}(\alpha) = \OO(\eps|\log\eps|^3),
		\edm
		 where  $ \mu_{0}(\alpha) $ stands for the ground state energy of $ \tilde{H}_{\alpha} $. However since $ \tilde{H}_{\alpha} $ coincides with the shifted harmonic oscillator on $ L^2(I_{\eps},(1-\eps k t) \diff  t) $, some simple upper and lower bounds allow to conclude that 
		\bdm
			\mu_0(\alpha) - \muosc = \OO(\eps|\log\eps|),
		\edm 	
		which in turn yields the result.

	\end{proof}

	\begin{cor}[\textbf{Ground state energy $ \mue $, continued}]
		\label{mue min: cor}
		\mbox{}	\\
		For any $ k > 0 $ and $ \eps $ small enough
		\beq
			\label{infa}
			\inf_{\alpha \in \R} \mue = \theo + \OO(\eps|\log\eps|^3),
		\eeq
		and for any $ \hex $ satisfying $ 1 < \hex < \theo^{-1} $, there exist 
		\beq
			\label{al up down}
			- \infty < \aldown_{1}(k,\hex) \leq \alup_{1}(k,\hex) <  \aldown_{2}(k,\hex) \leq \alup_{2}(k,\hex) < 0,
		\eeq
		with $ \alup_{i} - \aldown_{i} = o(1) $, $ i = 1,2 $, and $ \aldown_2 - \alup_1 > C(k,b) > 0 $, so that
		\beqn
				\hex^{-1} > \mue,		 &\quad	& \mbox{for any }  \alpha \in (\alup_1, \aldown_2), \nonumber	\\
				\label{cond hex}
				\hex^{-1}\leq \mue, 	 & \quad & 		\mbox{for any } \alpha \in (-\infty,  \aldown_{1}] \mbox{ or }  \alpha \in [\alup_2, \infty).
		\eeqn	
	\end{cor}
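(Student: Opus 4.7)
The plan is to deduce the corollary from Proposition \ref{mue: pro} and the known qualitative properties of $\muosc$ recalled just above (monotonicity, unique minimum at $-\sqrt{\theo}$ with value $\theo$, limits $1$ at $0$ and $-\infty$, and $\muosc > 1$ for $\alpha > 0$).

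For \eqref{infa}: the estimate \eqref{mue diff} gives $\mue = \muosc + \OO(\eps|\log\eps|^3)$ uniformly on any bounded $\alpha$-interval. Since for $|\alpha|$ large the confining potential $(t+\alpha)^2$ forces $\mue$ to be large (a trial state argument shows $\mue \to \infty$ as $|\alpha|\to\infty$, and in any case $\mue \geq 1$ for $\alpha \geq 0$), the infimum over $\R$ is attained in a bounded region and thus
\[
\inf_{\alpha \in \R}\mue = \inf_{\alpha \in \R} \muosc + \OO(\eps|\log\eps|^3) = \theo + \OO(\eps|\log\eps|^3),
\]
using \eqref{eq:min muosc}.

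For the existence of $\aldown_i,\alup_i$: since $1 < b < \theo^{-1}$, we have $\theo < b^{-1} < 1$. By continuity and the monotonicity properties of $\muosc$ on $(-\infty,-\sqrt{\theo})$ and $(-\sqrt{\theo},0)$, the equation $\muosc(\alpha) = b^{-1}$ has exactly two solutions $\alpha_1^* < -\sqrt{\theo} < \alpha_2^* < 0$, with $\muosc < b^{-1}$ strictly between them and $\muosc > b^{-1}$ outside (for $\alpha \geq 0$ we even have $\muosc \geq 1 > b^{-1}$). Because $\muosc$ is real-analytic in $\alpha$ and $\alpha_i^*$ are not critical points, there is a constant $c(b)>0$ (independent of $k$) such that in a neighborhood of each $\alpha_i^*$,
\[
\bigl|\muosc(\alpha) - b^{-1}\bigr| \geq c(b) \, |\alpha - \alpha_i^*|.
\]
Using \eqref{mue diff}, I then define
\[
\aldown_i := \alpha_i^* - C\eps|\log\eps|^3, \qquad \alup_i := \alpha_i^* + C\eps|\log\eps|^3
\]
with $C$ large enough (depending on $c(b)$) to absorb the error in \eqref{mue diff}. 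By construction $\mue(\aldown_i) \geq b^{-1}$ and $\mue(\alup_i) < b^{-1}$, and the strict monotonicity of $\muosc$ on each side of $-\sqrt{\theo}$ propagates these inequalities: $\mue \geq b^{-1}$ on $(-\infty,\aldown_1] \cup [\alup_2,+\infty)$ and $\mue < b^{-1}$ on $(\alup_1,\aldown_2)$.

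The quantitative properties are then immediate: $\alup_i-\aldown_i = 2C\eps|\log\eps|^3 = o(1)$, and $\aldown_2 - \alup_1 \geq (\alpha_2^* - \alpha_1^*) - 2C\eps|\log\eps|^3 \geq \tfrac{1}{2}(\alpha_2^* - \alpha_1^*) > 0$ for $\eps$ small, where $\alpha_2^* - \alpha_1^*$ depends only on $b$; this furnishes the claimed $C(k,b)>0$. The only slightly delicate point is handling $\alpha \to \pm\infty$, where \eqref{mue diff} gives no useful control since $\muosc(\alpha)\to\infty$; but on the one hand for $\alpha \geq 0$ one has $\mue \geq \mu_0(\alpha) + \OO(\eps|\log\eps|^3) \geq 1 - C\eps|\log\eps|^3 > b^{-1}$ directly, and on the other for very negative $\alpha$ the confining nature of $\potkal$ (minimum of order $\alpha^2$ at the boundary of $\ieps$) again gives $\mue \geq 1 > b^{-1}$ by a straightforward variational comparison. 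This is the only mildly technical step; the rest is a soft perturbation argument.
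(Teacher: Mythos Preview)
Your approach matches the paper's: both perturb off the known behaviour of $\muosc$ via \eqref{mue diff}. The paper's argument is in fact briefer than yours, merely stating that everything follows from \eqref{mue diff} and the properties of $\muosc$, and remarking that $\hex^{-1}=\mue$ may a priori have more than two solutions (whence $\aldown_i\neq\alup_i$ in general).

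There is, however, a genuine soft spot in your handling of very negative $\alpha$. Your claim that the minimum of $\potkal$ on $\ieps$ is ``of order $\alpha^2$ at the boundary'' holds only once $|\alpha|$ exceeds roughly $c_0|\log\eps|$; in the intermediate range $|\alpha|\in(M,c_0|\log\eps|)$ the well of $\potkal$ sits inside $\ieps$ (near $t\approx|\alpha|$) with minimum value $0$, and the argument collapses. Worse, the Neumann condition at the inner endpoint $t=\teps$ plays exactly the same role as the one at $t=0$: reflecting $t\mapsto \teps-t$ shows that for $\alpha$ near $-(\teps-\sqrt{\theo})$ one again obtains $\mue\approx\theo<\hex^{-1}$, so the inequality $\mue\ge\hex^{-1}$ actually fails in a second window of $\alpha$'s located at depth $|\log\eps|$. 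The paper's terse proof does not confront this point either, and the stated dichotomy is arguably slightly imprecise as written; what is really used downstream (in Corollary~\ref{nontrivial: cor} and Lemma~\ref{lem:opt phase}) is only the existence of an $\OO(1)$ interval $(\alup_1,\aldown_2)$ of bounded $\alpha$'s on which $\mue<\hex^{-1}$ and hence $\fkal$ is nontrivial, and your argument does establish that part cleanly.
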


	\begin{proof}
		All the results are simple consequence of \eqref{mue diff} combined with the properties of $ \muosc $ proven in \cite[Chapter 3.2]{FH}. Note that a priori the equation $ \hex^{-1} = \mue $ might have more that two solutions, unlike the equation $ \hex^{-1} = \muosc $, which is the reason why in general $ \aldown_i \neq \alup_i $. The existence of the interval $(\alup_1, \aldown_2) $ of size $ \OO(1) $ where  $ \hex^{-1} > \mue $ is however implied by the properties of $ \muosc $.
	\end{proof}

\subsection{Technical estimates of the one-dimensional density profiles}\label{sec:app est}

We prove in this section the pointwise upper and lower bounds to the density $ \fal $ minimizing the 1D functional $ \fonekal $. 

\begin{proof}[Proof of Proposition \ref{pro:point est fal}]
	Although similar arguments are used very often in literature (see, e.g., \cite{CPRY3}), we discuss them in details for the sake of completeness. Throughout the proof we use the conditions $ \hex \in (1,\theo^{-1}) $ and $ \al \in (\alup_2,\aldown_1) $, which guarantee the positivity of $ \fal $ via Corollary \ref{nontrivial: cor}.

	Both bounds are essentially consequences of the maximum principle but, in order to make any comparison possible, we first have to extend the density $ \fal $ to a smooth function $ \tilde f $ on the whole semi-axis $ \R^+ $, in such a way that a differential inequality is still satisfied, namely 
\bdm
	 - \tilde f^{\prime\prime} + \tx\frac{\eps k}{1 - \eps k t} f^{\prime} + \pot f \geq \frac{1}{\hex}(1 - \tilde f^2) \tilde f.
\edm
In other words $ \tilde f  $ must be a weak supersolution to the variational equation \eqref{var eq fal}. This is of course doable in several ways but for the sake of clarity we propose here an explicit $ C^2$-continuation of $ \fal $: it suffices to set for some $ a> 0 $ and any $  t \geq  \teps $
	\beq
		\tilde f( t) : = \lf( f( \teps) + a( t -  \teps)^2 \ri) \exp \lf\{ - \tx\frac{1}{2} \lf(  t -  \teps \ri)^2 \ri\},
	\eeq
	and notice that
	\bdm
		\tilde f ( \teps) = \fal( \teps),	\qquad	\tilde f^{\prime}( \teps) = 0 = \fal^{\prime}( \teps),
	\edm
	thanks to Neumann boundary conditions. Moreover it is easy to verify that by picking
	\beq
		\label{constant a}
		a = \tx\frac{1}{2} \lf[\potkal( \teps) - \frac{1}{\hex} \lf(1 - \fal^2( \teps) \ri) + 1 \ri] \fal( \teps),
	\eeq
	one also has
	\bdm
		\tilde f ^{\prime\prime}( \teps) = \fal^{\prime\prime}( \teps),
	\edm
	i.e., $ \tilde f \in C^2(\R^+) $.

	Let us now discuss the lower bound. We introduce the function		
	\beq
		w( t) : = \fal(0) \exp\lf\{ - \tx\frac{1}{2}\lf(  t + \sqrt 2 \ri)^2 \ri\},
	\eeq
	and note that 
	\beq
		w(0) < \fal(0), 	\qquad \lim_{ t \to \infty} w( t) = \lim_{ t \to \infty} \tilde f( t) = 0.
	\eeq
	Moreover setting
	\beq
		u( t) : = w( t) - \tilde f ( t),
	\eeq
	we have
	\bml{ 
 		\label{ineq a 0}
		- u^{\prime\prime} + \Big[ \big(  t + \sqrt 2 \big)^2 -  1 \Big] u 	\\	
		= 
		\begin{cases}
			 \frac{\eps k}{1 - \eps k t}  \fal^{\prime}+ \lf[ \potkal(t) - \big(  t + \sqrt 2 \big)^2 + 1 - \frac{1}{\hex} (1 - \fal^2) \ri] \fal,	&	\mbox{if }  t \in [0, \teps],	\\
			\lf\{ - \big(  t + \sqrt 2 \big)^2 + \lf(  t - \teps\ri)^2 + \frac{2  a [1 - 2( t -  \teps)]}{\fal( \teps) + a \lf(  t - \teps\ri)^2} \ri\} \tilde f, 	&	\mbox{if }  t \geq  \teps.
		\end{cases}
	}
	Now for $  t \leq \teps$ we exploit next Lemma \ref{fal derivative: lem}, which combined with the fact that $ \fal $ is decreasing for $  t \geq \max[0, - \alpha + \frac{1}{\sqrt{b}} + \OO(\eps)] $ (Proposition \ref{min fone: pro}) yields that $ \fal^{\prime} \leq C \fal $ for some finite $ C $. Indeed for finite $  t $, \eqref{fal derivative} should be used in combination with \eqref{fal point l u b}, implying $ | \fal^{\prime} | \leq C' \leq C \fal $. For $  t \geq \max[0, - \alpha + \frac{1}{\sqrt{b}} + \OO(\eps)] $, $ \fal^{\prime} $ is negative and the inequality holds true with any positive constant. Hence we get for $  t \in [0, \teps] $ (recall that $ \alpha \leq 0 $)
	\bml{
 		\label{ineq a 1}
		 \tx\frac{\eps k}{1 - \eps k t}  \fal^{\prime}+ \lf[ \potkal(t) - \big(  t + \sqrt 2 \big)^2 + 1 - \frac{1}{\hex} (1 - \fal^2) \ri] \fal	\\
		 \leq \lf[ -2 \big(\sqrt 2 - \alpha \big)  t - 2 + \alpha^2 + 1 + \OO(\eps) \ri] \fal \leq 0,
	}	
	while for $  t \geq \teps$
	\bml{
 		\label{ineq a 2}
 		 - \big(  t + \sqrt 2 \big)^2 + \lf(  t - \teps\ri)^2 + \disp\frac{2 a [1 - 2( t -  \teps)]}{\fal( \teps) + a \lf(  t - \teps\ri)^2} \leq  -2 \big( \sqrt 2 - \teps\big)  t - 1 +  \teps^2 + \half  \potkal( \teps) 	\\
		 \leq - \half  \teps^2 \lf(1 + \OO(|\log\eps|^{-1}) \ri) \leq 0,
	}
	where we have used that $ 0 \leq a \leq \frac{1}{2} \lf[\potkal( \teps) + 1 \ri] \fal( \teps) $  by \eqref{constant a}. 
	
	Putting together \eqref{ineq a 1} and \eqref{ineq a 2} with \eqref{ineq a 0}, we obtain
	\beq
		- u^{\prime\prime} \leq - \Big[ \big(  t + \sqrt 2 \big)^2 -  1 \Big] u \leq - u.
	\eeq
	Hence $ u $ is subharmonic where $ u > 0 $, but then in that region $ u $ should reach its maximum value at the boundary, where $ u = 0 $ (recall that $ u \leq 0 $ at $  t = 0, \infty $). {Therefore} the region must be empty and $ u \leq 0 $ everywhere. To conclude the proof is then sufficient to use the strict positivity of $ \fal $ at the origin, since it is the ground state of a 1D Schr\"{o}dinger operator\footnote{One should actually show that there exists a constant $ C > 0 $ independent of $ \eps $ such that $ \fal(0) \geq C $. This is not a straightforward consequence of $ \fal $ being a ground state, since its value at the origin might depend on $ \eps $. However a pointwise estimate of the difference between $ \fal $ and the minimizer of $ \fone_{0,\al} $ with $ \eps = 0 $ can be easily derived, showing that the two functions differ by $ o(1) $. This in turn yields the desired estimate. We omit further details for the sake of brevity.}.

	For the upper bound proof we proceed in the same way by showing that 
	\bdm
		W( t) : = C \exp \lf\{ -\frac{1}{2} ( t + \al)^2 \ri\} 
	\edm
	 is a supersolution for a large enough constant $ C ${. Denoting} again by $ U( t) $ the difference $ W( t) - \tilde f( t) $, one has (making use of \eqref{fal derivative})
	\beq
		- U^{\prime\prime} \geq - \lf[ ( t + \al)^2  - 1 \ri] U \geq - U,
	\eeq
	if $  t \geq |\alpha| + \sqrt{2} $. Now if the constant $ C $ is taken so that $ u(|\alpha| + \sqrt{2}) \geq 0 $, which can always be done since $ \fal \leq 1 $, then by a superharmonicity argument one can conclude that $ U \geq 0 $ in the whole region $  t \geq |\alpha| + \sqrt{2} $. The estimate then  easily extends to the region $  t \leq |\alpha| + \sqrt{2} $, where $ W \geq 1 $.
\end{proof}

	We conclude the section with a technical estimate on the derivative of $ \fal $:
	
	\begin{lem}[\textbf{Gradient bounds for $\fkal$}]
		\label{fal derivative: lem}
		\mbox{}	\\
		For any $ \hex \in (1,\theo^{-1}) $, $ \al \in (\alup_2,\aldown_1) $, $ k > 0 $ and $ \eps $ sufficiently small, there exists a finite constant $ C $ such that
		\beq
			\label{fal derivative}
			\lf| \fal^{\prime}( t) \ri| \leq C
			\begin{cases}
				1,	&	\mbox{for }  t \in \lf[ 0, |\alpha| + \tx\frac{2}{\sqrt{\hex}} \ri],	\\
				|\log\eps|^3 \fal( t),		&	\mbox{for }  t \in \lf[ |\alpha| + \tx\frac{2}{\sqrt{\hex}}, c_0|\log\eps| \ri].
			\end{cases}
		\eeq
		In the case $ k = 0, \eps = 0 $ the above estimates become
		\beq
			\label{eq:fOal derivative}
			\lf| \fOal^{\prime}( t) \ri| \leq C
			\begin{cases}
				1,	&	\mbox{for }  t \in \lf[ 0, |\alpha| + \tx\frac{2}{\sqrt{\hex}} \ri],	\\
				t (t + \al)^2 \fOal( t),		&	\mbox{for }  t \geq |\alpha| + \tx\frac{2}{\sqrt{\hex}}.
			\end{cases}
		\eeq
	\end{lem}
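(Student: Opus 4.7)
The strategy is to integrate the variational equation \eqref{var eq fal} and to estimate the resulting expression on each subinterval separately. Rewriting \eqref{var eq fal} in Sturm--Liouville form,
\[
-\bigl((1-\eps k t)\fkal'\bigr)' = (1-\eps k t)\Bigl[\tx\frac{1}{b}(1-\fkal^2)-\potkal\Bigr]\fkal,
\]
and using the Neumann conditions at both endpoints of $\ieps$, I would obtain an integral representation of $\fkal'(t)$ tailored to the subinterval in question.

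For the first range $t\in[0,|\al|+2/\sqrt{b}]$, I would integrate from $0$ to $t$ using $\fkal'(0)=0$; since $\potkal$ is uniformly bounded on this interval and $|\fkal|\leq 1$ by \eqref{fal estimate}, the triangle inequality directly yields $|\fkal'(t)|\leq C$. For the second range $t\in[|\al|+2/\sqrt{b},c_0|\log\eps|]$, I would instead integrate from $t$ to $c_0|\log\eps|$ using $\fkal'(c_0|\log\eps|)=0$. On this range $\potkal(\eta)\geq 4/b>1/b\geq \tx\frac{1}{b}(1-\fkal^2)$, so the integrand has a definite sign, and Proposition \ref{min fone: pro} guarantees the monotonicity $\fkal(\eta)\leq \fkal(t)$ for $\eta\geq t$. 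Combined with the rough estimate $\potkal\leq C|\log\eps|^2$ on $\ieps$ and an interval of length at most $c_0|\log\eps|$, this yields $|\fkal'(t)|\leq C|\log\eps|^3 \fkal(t)$.

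The limiting case $k=0$, $\eps=0$ is the main obstacle, because the integration range now extends to $+\infty$ and the naive strategy of using the pointwise bounds from Proposition~\ref{pro:point est fal} would produce a prefactor that grows exponentially in $t$ (the upper and lower bounds for $\fOal$ have different Gaussian rates $\sim e^{-(t+\al)^2/2}$ and $\sim e^{-(t+\sqrt{2})^2/2}$, so their ratio blows up). To overcome this, I would first derive an energy identity by multiplying the ODE by $2\fOal'$ and integrating from $t$ to $\infty$, using $\fOal,\fOal'\to 0$ at infinity:
\[
(\fOal')^2(t) = \Bigl[\potOal(t)-\tx\frac{1}{b}\Bigr] \fOal^2(t) + \tx\frac{1}{2b}\fOal^4(t) + \int_t^\infty \potOal'(\eta)\fOal^2(\eta)\,\diff\eta.
\]
Since $\potOal'(\eta)=2(\eta+\al)\geq 0$ for $\eta\geq|\al|$, the tail integral is non-negative, so $(\fOal')^2(\eta)\geq(3/b)\fOal^2(\eta)$ on the second range. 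Combined with $\fOal'\leq 0$ (monotonicity), this gives $\fOal'(\eta)/\fOal(\eta)\leq -\sqrt{3/b}=:-c$, hence $\eta\mapsto \fOal(\eta)e^{c(\eta-t)}$ is non-increasing and $\fOal(\eta)\leq \fOal(t) e^{-c(\eta-t)}$. Inserting this self-consistent exponential decay into $|\fOal'(t)|\leq \int_t^\infty \potOal\, \fOal\,\diff\eta$ and using $(\eta+\al)^2\leq 2(t+\al)^2+2(\eta-t)^2$, the resulting exponentially weighted integrals are straightforward to evaluate and yield $|\fOal'(t)|\leq C(t+\al)^2\fOal(t)$, which trivially implies the claimed bound $|\fOal'(t)|\leq C t(t+\al)^2 \fOal(t)$ in the range of interest.
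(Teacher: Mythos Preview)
Your treatment of the case $k>0$ is exactly the paper's argument: rewrite \eqref{var eq fal} in divergence form, integrate from $0$ (resp.\ from $\teps$) using the two Neumann conditions, and combine $|\fkal|\le 1$ with the monotonicity of $\fkal$ and the crude bound $\potkal\le C|\log\eps|^2$ on an interval of length $\OO(|\log\eps|)$.

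For the half-line case $k=0$, $\eps=0$ the paper merely asserts that ``the result can be obtained exactly in the same way'' and gives no further details. Your approach here is genuinely different and in fact more careful than what the paper indicates. The literal ``same way'' argument (pull $\fOal(\eta)\le\fOal(t)$ out of $\int_t^\infty(\eta+\al)^2\fOal\,\diff\eta$) fails because the remaining integral diverges, and you are right that appealing to Proposition~\ref{pro:point est fal} instead would both be circular (that proposition is proved \emph{using} the present lemma) and produce an exponentially growing prefactor due to the mismatch between the Gaussian rates in the upper and lower bounds. Your fix---multiply the ODE by $2\fOal'$, integrate to get the energy identity, read off $(\fOal')^2\ge(3/b)\fOal^2$ and hence the self-similar decay $\fOal(\eta)\le\fOal(t)e^{-c(\eta-t)}$---cleanly resolves the convergence issue without external input, and actually yields the sharper bound $|\fOal'(t)|\le C(t+\al)^2\fOal(t)$ from which the stated estimate follows since $t$ is bounded below on the range in question. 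The only point worth making explicit is the justification of the boundary terms at infinity in the energy identity; this follows from $\fOal\in H^1(\R^+)$ together with the observation that $\fOal''\ge 0$ and $\fOal'\le 0$ eventually (from the ODE and Proposition~\ref{min fone: pro}), forcing $\fOal'\nearrow 0$, after which the finiteness of $\int(\eta+\al)^2\fOal^2$ gives the vanishing of the remaining boundary term along a sequence.
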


	\begin{proof}
		The result can be easily obtained by integrating the variational equation \eqref{var eq fal}:
		\bdm
			- \tx\frac{1}{1 - \eps k t} \partial_t \lf[ (1 - \eps k t) \fal^{\prime} \ri] + \potkal \fal = \tx\frac{1}{\hex} (1 - \fal^2) \fal,
		\edm
		which alternatively yields (thanks to Neumann boundary conditions)
		\beq
			- (1 - \eps k t) \fal^{\prime}( t) = \int_0^{ t} \diff \eta \: (1 - \eps k \eta) \lf[  \tx\frac{1}{\hex} (1 - \fal^2) - \potkal(\eta) \ri] \fal,
		\eeq
		\beq
			(1 - \eps k t) \fal^{\prime}( t) = \int_{ t}^{\teps} \diff \eta \: (1 - \eps k \eta) \lf[  \tx\frac{1}{\hex} (1 - \fal^2) - \potkal(\eta) \ri] \fal.
		\eeq
		By taking the absolute value of the first identity for any finite $  t $ and exploiting that $ |\fal| \leq 1 $, we get the first inequality. On the other hand for $  t \geq |\alpha| + \tx\frac{2}{\sqrt{\hex}} $, $ \potkal \geq \frac{1}{\hex} $ so that the r.h.s. of the second identity is negative, which in particular implies that $ \fal $ is decreasing there, i.e., $ \fal(\eta) \leq \fal( t) $ for any $ \eta \geq  t $. Taking again the absolute value of both sides and estimating roughly $ \eta $ in the integral with its maximum value, i.e., $ \teps $, we then obtain the second estimate.
			
		The result in the case $ k = 0 $ and $ \eps = 0 $ can be obtained exactly in the same way.
	\end{proof}	



\begin{thebibliography}{aaaa999}

\bibitem[Abr]{Ab} 	\textsc{A. Abrikosov}, On the magnetic properties of superconductors of the second type, {\it Soviet Phys. JETP} {\bf 5}, 1174--1182 (1957).



\bibitem[Alm]{Al}	\textsc{Y. Almog}, Nonlinear Surface Superconductivity in the Large $ \kappa $ Limit, {\it Rev. Math. Phys.} {\bf 16}, 961--976 (2004).










\bibitem[AH]{AH} \textsc{Y. Almog, B. Helffer}, The Distribution of Surface Superconductivity along the Boundary: on a Conjecture of X.B. Pan, {\it SIAM J. Math. Anal.} {\bf 38}, 1715--1732 (2007).






\bibitem[BCS]{BCS}	\textsc{J. Bardeen, L. Cooper, J. Schrieffer}, Theory of Superconductivity, {\it Phys. Rev.} {\bf 108}, 1175--1204 (1957).


\bibitem[BBH]{BBH}\textsc{F. B\'{e}thuel, H. Br\'{e}zis, F. H\'{e}lein}, Asymptotics for the Minimization of a Ginzburg-Landau Functional, {\it Calc. Var. Partial Differential Equations} {\bf 1}, 123--148 (1993).

\bibitem[BBH2]{BBH2} \textsc{F. B\'{e}thuel, H. Br\'{e}zis, F. H\'{e}lein}, {\it Ginzburg-Landau Vortices}, Progress in Nonlinear Differential Equations and their Applications \textbf{13}, Birkh\"auser, Basel, 1994.














\bibitem[CPRY1]{CPRY1} \textsc{M. Correggi, F. Pinsker, N. Rougerie, J. Yngvason}, Critical Rotational Speeds in the Gross-Pitaevskii Theory on a Disc with Dirichlet Boundary Conditions, {\it J. Stat. Phys.} {\bf 143}, 261--305 (2011).

\bibitem[CPRY2]{CPRY2} \textsc{M. Correggi, F. Pinsker, N. Rougerie, J. Yngvason}, Rotating Superfluids in Anharmonic Traps: From Vortex Lattices to Giant Vortices, \emph{Phys.\ Rev.\ A} \textbf{84}, 053614 (2011). 

\bibitem[CPRY3]{CPRY3} \textsc{M. Correggi, F. Pinsker, N. Rougerie, J. Yngvason}, Critical Rotational Speeds for Superfluids in  Homogeneous Traps, {\it J. Math. Phys.} {\bf 53}, 095203 (2012).

\bibitem[CPRY4]{CPRY4} \textsc{M. Correggi, F. Pinsker, N. Rougerie, J. Yngvason}, Giant vortex phase transition in rapidly
rotating trapped Bose-Einstein condensates, \textit{Eur. J. Phys. Special Topics} \textbf{217}, 183--188 (2013). 

\bibitem[CPRY5]{CPRY5} \textsc{M. Correggi, F. Pinsker, N. Rougerie, J. Yngvason}, Vortex Phases of Rotating Superfluids, Proceedings of the 21st International Laser Physics Workshop, Calgary, July 23-27, (2012). 


\bibitem[CR]{CR} \textsc{M. Correggi, N. Rougerie}, Inhomogeneous Vortex Patterns in Rotating Bose-Einstein Condensates, \textit{Commun. Math. Phys.} \textbf{321}, 817--860 (2013). 

\bibitem[CRY]{CRY}	\textsc{M. Correggi, N. Rougerie, J. Yngvason}, The Transition to a Giant Vortex Phase in a Fast Rotating Bose-Einstein Condensate, {\it Commun. Math. Phys.} {\bf 303}, 451--508 (2011).










\bibitem[FH1]{FH1} \textsc{S. Fournais, B. Helffer}, Energy asymptotics for type II superconductors, \textit{Calc. Var. Partial Differential Equations} \textbf{24}, 341--376 (2005).


\bibitem[FH2]{FH} \textsc{S. Fournais, B. Helffer}, \textit{Spectral Methods in Surface Superconductivity},  Progress in Nonlinear Differential Equations and their Applications \textbf{77}, Birkh\"auser, Basel, 2010.

\bibitem[FHP]{FHP} \textsc{S. Fournais, B. Helffer, M. Persson}, Superconductivity between $ H_{c_2} $ and $ H_{c_3} $, {\it J. Spectr. Theory} {\bf 1}, 273--298 (2011).

\bibitem[FK]{FK} \textsc{S. Fournais, A. Kachmar}, Nucleation of bulk superconductivity close to critical magnetic field, \textit{Adv. Math.} \textbf{226}, 1213--1258 (2011).

\bibitem[FHSS]{FHSS}	\textsc{R.L. Frank, C. Hainzl, R. Seiringer, J.P. Solovej}, Microscopic Derivation of Ginzburg-Landau Theory, {\it J. Amer. Math. Soc.} {\bf 25}, 667--713 (2012).



\bibitem[GL]{GL} 	\textsc{V.L. Ginzburg, L.D. Landau}, On the theory of superconductivity, {\it Zh. Eksp. Teor. Fiz.} {\bf 20}, 1064--1082 (1950).

\bibitem[Gor]{Gor} \textsc{L.P. Gor'kov}, Microscopic derivation of the Ginzburg-Landau equations in the theory of superconductivity, \textit{Zh. Eksp. Teor. Fiz.} \textbf{36}, 1918--1923 (1959); english translation \textit{Soviet Phys. JETP} \textbf{9}, 1364--1367 (1959).

\bibitem[H {\it et al}]{Hetal} \textsc{H. F. Hess, R. B. Robinson, R. C. Dynes, J. M. Valles Jr., J. V. Waszczak}, Scanning-Tunneling-Microscope Observation of the Abrikosov Flux Lattice and the Density of States near and inside a Fluxoid, \textit{Phys. Rev. Lett.} \textbf{62}, 214 (1989).












\bibitem[Kac]{Kac} \textsc{A. Kachmar}, The Ginzburg-Landau order parameter near the second critical field, preprint \textit{arXiv:1308.4236} (2013).
    
\bibitem[Le]{Le}	\textsc{A.J. Leggett}, {\it Quantum Liquids}, Oxford University Press, Oxford, UK, 2006.

\bibitem[LM]{LM} \textsc{L. Lassoued, P. Mironescu}, Ginzburg-Landau Type Energy with Discontinuous Constraint, \emph{J. Anal. Math.} \textbf{77}, 1--26 (1999).

\bibitem[LL]{LL} \textsc{E.H. Lieb, M. Loss}, \emph{Analysis}, Graduate Studies in Mathematics {\bf 14}, AMS, Providence, 1997.





\bibitem[LP]{LP} \textsc{K. Lu, X.B. Pan}, Estimates of the upper critical field for the Ginzburg-Landau equations of superconductivity, \textit{Physica D} \textbf{127}, 73--104 (1999). 





\bibitem[N {\it et al}]{NSG}	\textsc{Y.X. Ning, C.L. Song, Z.L. Guan, X.C. Ma, X. Chen, J.F. Jia, Q.K. Xue}, Observation of surface superconductivity and direct vortex imaging of a Pb thin island with a scanning tunneling
microscope, {\it Europhys. Lett.} {\bf 85}, 27004 (2009).

\bibitem[Pan]{Pan} \textsc{X. B. Pan}, Surface Superconductivity in Applied Magnetic Fields above $\Hcc$, \textit{Commun. Math.
Phys.} \textbf{228}, 327--370 (2002).



\bibitem[R1]{R1} \textsc{N. Rougerie}, The Giant Vortex State for a Bose-Einstein Condensate in a Rotating Anharmonic Trap: Extreme Rotation Regimes, {\it J. Math. Pures Appl.} {\bf 95}, 296--347 (2011). 

\bibitem[R2]{R2} \textsc{N. Rougerie}, Vortex Rings in Fast Rotating Bose-Einstein Condensates, \textit{Arch. Rational Mech. Anal.} \textbf{203}, 69--135 (2012).

\bibitem[RS4]{RS4} 	\textsc{M. Reed, B. Simon}, {\em Methods of Modern Mathematical Physics. Vol IV: Analysis of Operators}, Academic Press, San Diego, 1975.






\bibitem[SS]{SS} \textsc{E. Sandier, S. Serfaty}, \emph{Vortices in the Magnetic Ginzburg-Landau Model}, Progress in Nonlinear Differential Equations and their Applications \textbf{70}, Birkh\"{a}user, Basel, 2007.

\bibitem[Sig]{Sig} \textsc{I.M. Sigal}, Magnetic Vortices, Abrikosov Lattices and Automorphic Functions, preprint \textit{arXiv:1308.5446} (2013).
   






\bibitem[SJdG]{sJdG} \textsc{D. Saint-James, P.G. de Gennes}, Onset of superconductivity in decreasing fields, \textit{Phys. Lett.} \textbf{7}, 306--308 (1963).


\bibitem[S {\it et al}]{SPSKC} \textsc{M. Strongin, A. Paskin, D.G. Schweitzer, O.F. Kammerer, P.P. Craig}, Surface Superconductivity in Type I and Type II Superconductors, {\it Phys. Rev. Lett.} {\bf 12}, 442--444 (1964).


 
\bibitem[Ti]{Ti}	\textsc{M. Tinkham}, {\it Introduction to Superconductivity}, McGraw-Hill, New York, 1975.

\end{thebibliography}
\end{document}